\newtheorem{contribution}{Main Contribution}
\newtheorem{motivatingexample}{Motivating Example}
\newtheorem{remark}{Remark}
\newcommand{\stricthom}{\begin{array}{l}\to\\[-0.7em] \nleftarrow\end{array}}
\newcommand{\skeleton}{\operatorname{skeleton}}
\title{Conjunctive Queries: Unique Characterizations and Exact Learnability}
\keywords{Conjunctive Queries, Homomorphisms, Frontiers, Unique Characterizations, Exact Learnability, Schema Mappings, Description Logic}
\begin{document}

\author{Balder ten Cate}
\orcid{https://orcid.org/0000-0002-2538-5846}
\affiliation{
  \institution{Universiteit van Amsterdam}
  \department{ILLC}
  \city{Amsterdam}
  \country{The Netherlands}}
\email{b.d.tencate@uva.nl}
\author{Victor Dalmau}
\orcid{https://orcid.org/0000-0002-9365-7372}
\affiliation{
  \institution{Universitat Pompeu Fabra}
  \department{Department of Information and Communication Technologies}
  \city{Barcelona}
  \country{Spain}}
\email{victor.dalmau@upf.edu}

\begin{abstract}
We answer the question of which conjunctive queries are uniquely characterized by polynomially many positive and negative examples, and how to construct such examples efficiently.
As a consequence, we obtain a new efficient exact learning algorithm for a class of conjunctive queries. At the core of our contributions lie two new polynomial-time algorithms for constructing frontiers in the homomorphism lattice of finite structures. We also discuss implications for the unique characterizability and learnability of schema mappings and of description logic concepts. 
\end{abstract}

\maketitle

\section{Introduction}

Conjunctive queries (CQs) are an extensively studied database query language and fragment of first-order logic. 
They correspond precisely to 
Datalog programs with a single non-recursive rule.
In this paper, we study two problems related to CQs.
The first problem is concerned with the existence and constructability of unique characterizations. 
\emph{For which CQs $q$ is it the case that $q$ can be characterized (up to logical equivalence) by its behavior on a small set of data examples? And, when such a set of data examples exists, can it be constructed efficiently?}
The second problem pertains to \emph{exact learnability} of CQs in an interactive
setting where the learner has access to a ``membership oracle'' that, given any database instance and a tuple of values, answers whether the tuple belongs to the answer of the goal CQ (that is, the hidden CQ that the learner is trying to learn). We can 
think of the membership oracle as a black-box, compiled version of the goal query, which the 
learner can execute on any number of examples. The task of the learner, then, is to reverse engineer the query based on the observed behavior.

\begin{figure}
\begin{center}
\begin{tikzpicture}[level distance=36pt,sibling distance=4pt]
\Tree [.\node(top){$\top$} ; \edge[draw=none];
[.\text{$A$}
\edge[-] ; [.\node(f1){$F_1$}; ]
\edge[draw=none] ; [.\text{\dots} \edge[draw=none] ; [.{} \edge[draw=none]; [.\node(bot){$\bot$}; ] ]]
\edge[-] ; [.\node(fn){$F_n$}; ] ]
]
\draw (top) edge[out=220,in=140,looseness=1.2] (bot.west);
\draw (top) edge[out=-40,in=40,looseness=1.2] (bot.east);
\draw (f1) edge[out=250,in=130] (bot.west);
\draw (fn) edge[out=290,in=50] (bot.east);
\draw[dashed] (-1.2,-2) rectangle (1.2,-2.8);
\end{tikzpicture}
\end{center}
\caption{A frontier in the homomorphism lattice of structures}
\label{fig:frontier}
\end{figure}

Note that these two problems (unique characterizability and 
exact learnability) are closely related to each other: a learner can  
identify the goal query with certainty, only when the set of examples that it has seen so far constitutes a unique characterization of the goal query. In other
words, unique characterizability (by a polynomially large set of examples)
is necessary, but not sufficient, for (polynomial-time) exact learnability
with membership queries.

\begin{motivatingexample}
This example, although stylized and described at a high level, 
aims to convey a use case that motivated the present work. 
The \emph{Google Knowledge Graph} is a large database of entities and facts, 
gathered from a variety of sources. It is used to enhance the search 
engine's results for queries such as
``where was Barack Obama born'' with factual information in the form of 
\emph{knowledge panels} \cite{KGBlog2020}.
When a query triggers a specific knowledge panel, this may be the result of different
triggering and fulfillment mechanisms, each of which may involve a combination of
structured queries to the knowledge graph, hard-coded business logic (in a Turing-complete language), and machine learned models.
This makes it difficult to understand interactions between knowledge panels (e.g., whether the two knowledge panels are equivalent or one is subsumed by the other, in terms of 
content and triggering).
If a declarative specification of (an approximation of) the triggering and fulfillment logic for a knowledge panel can be constructed programmatically, specified in a sufficiently restrictive formalism such as Datalog rules, this provides an avenue to the above, and other relevant static analysis tasks. The 
\emph{efficient exact learnability with membership queries} that we study in this paper, can be viewed as an idealized form of such a programmatic approach, where the membership oracle is the existing, 
black box, implementation of the knowledge panel, and the learning algorithm aims to produce a CQ that exactly captures it.
\end{motivatingexample}

The above example provides a motivation for studying efficient exact learnability of CQs, and hence, for studying unique characterizability. However, we would like to emphasize that unique characterizations are of independent interest, outside the context of exact learning algorithms. Indeed, uniquely characterizing examples can be used, for instance, for elementary query engine debugging, and query visualization and explanation.

As it turns out, the above problems about CQs are intimately linked to  fundamental properties of the homomorphism lattice of finite structures. In 
particular, the existence of a unique characterization for a CQ can be
reduced to the existence of a \emph{frontier} in the homomorphism lattice for an associated structure $A$, where, by a ``frontier'' for $A$, we mean a finite set of structures $F_1, \ldots, F_n$ that
cover precisely the set of structures homomorphically strictly weaker than $A$, that is, such that $\{B\mid B\to A \text{ and } A\not\to B\} = \bigcup_i \{B\mid B\to F_i\}$ (cf.~Figure~\ref{fig:frontier}). 

Known results \cite{FoniokNT08,AlexeCKT2011} imply that not every finite structure has such a frontier, and, moreover, a finite structure has a frontier if and only if the structure (modulo homomorphic equivalence) satisfies a structural property called \emph{c-acyclicity}. These known results, however, are based on exponential constructions, and no polynomial algorithms for constructing frontiers were previously known.

\begin{contribution}[Polynomial-time algorithms for constructing frontiers]
We show that, for c-acyclic structures, a frontier can in fact be computed \emph{in polynomial time}. 
More specifically, we present two polynomial-time algorithms.
The first algorithm takes any c-acyclic structure and produces a frontier consisting of structures that are themselves not necessarily c-acyclic (Sect.~\ref{sec:frontier-cacyclic}). 
The second algorithm applies to a more restricted class of acyclic structures but yields a frontier consisting entirely of structures belonging to the same
class, that is, the class of structures in question is frontier-closed (Sect.~\ref{sec:frontier-closed}).
\end{contribution}

We use these  to obtain new results on the existence and efficient constructability of unique characterizations for CQs:

\begin{contribution}[Polynomial Unique Characterizations for Conjunctive Queries]
We show that a CQ is uniquely characterizable by polynomially many examples, precisely if (modulo logical equivalence) it is 
 c-acyclic. Furthermore, for c-acyclic CQs, a uniquely characterizing set of examples can be constructed in polynomial time. 
In the special case of acyclic and c-connected CQs, a uniquely characterizing set of examples can be constructed consisting entirely of
queries from the same class (Sect.~\ref{sec:characterizations}).
\end{contribution}

Using the above results as a stepping stone, we obtain a polynomial-time exact learning algorithm for the class of c-acyclic CQs.

\begin{contribution}[Polynomial-Time Learnability with Membership Queries]
We show that c-acyclic CQs are efficiently exactly learnable in Angluin's model of exact learnability with membership 
queries~\cite{Angluin88} (Sect.~\ref{sec:learning}).
\end{contribution}

The restriction to c-acyclic CQs in this learnability result is natural, given that,
as we mentioned above, exact learnability with membership queries requires the existence of a finite uniquely characterizing set of examples. Note however, that our results do
not preclude the possibility that there exist larger efficiently exactly learnable classes of CQs: even if a class $C$ includes non-c-acyclic CQs, it may still be
possible for every CQ $q\in C$ to be uniquely characterizable \emph{within the class $C$}. 

We mainly focus on positive and negative examples in this paper. Another natural type of data example is a pair $(I,R)$ where $I$ is an input instance and $R$ is the entire relation that is computed by the query on $I$. We discuss this in Section~\ref{sec:input-output}, where we point out that all our results on characterizability and learnability remain true also when considering such data examples.

Finally, although our primary interest is in conjunctive queries, we show that our results also have implications for \emph{schema mappings} and \emph{description logic concepts}:

\begin{contribution}[Schema Mappings and Description Logic Concepts]
As a further corollary to the above, in Sect.~\ref{sec:applications}, we obtain a number of results regarding the existence of polynomial unique characterizations, as well as exact learnability, for LAV (``Local-As-View'') schema mappings and for description logic concepts for the lightweight description logic $\mathcal{ELI}$~\cite{Baader2017:introduction}.
\end{contribution}

\subsection*{Related Work}
Unique characterizations for CQs were first studied 
in~\cite{MannilaR86} in the context of automatic test data generation.
More precisely, the authors propose the concept of an ``adequate test case'', which 
is a database instance that can be used to distinguish a given CQ
from all other, non-equivalent CQs from a given class. In our
terminology, this corresponds to a uniquely characterizing input-output
example (cf.~Section~\ref{sec:input-output}). A positive result was 
obtained in~\cite{MannilaR86} for restricted classes of self-join-free
CQs, by establishing a relationship to Armstrong databases~\cite{Armstrong1974}. In~\cite{AlexeCKT2011},
the authors study unique characterizations for 
various classes of \emph{schema mappings}; we will make use of some of the technical results from~\cite{AlexeCKT2011}, and in Section~\ref{sec:applications}, we will discuss 
an application of our results to LAV schema mappings.
In~\cite{Staworko15:characterizing}, the authors study unique characterizability
for XML twig queries.

Related work on \emph{learning} CQs will be discussed in Section~\ref{sec:learning}.

An earlier, extended-abstract version of this paper was published in~\cite{tCD2021:conjunctive}. The present paper extends this conference version
with additional results. In particular, Theorem~\ref{thm:frontier-closed} was
shown in~\cite{tCD2021:conjunctive} only for the case of $k=1$ (i.e., for unary CQs), and
is generalized here to all $k\geq 1$. Furthermore, the treatment of 
input-output examples in Section~\ref{sec:input-output} has been added.

\subsection*{Outline} 
Section~\ref{sec:preliminaries} reviews basic facts and definitions.
In Section~\ref{sec:frontiers}, we present our two new polynomial-time algorithms for constructing frontiers
for finite structures with distinguished elements. We also review a result by \cite{NesetrilOssona2008}, which implies the 
existence of (not necessarily polynomially computable) frontiers w.r.t.~classes of 
structures of bounded expansion.
In Section~\ref{sec:characterizations}, we apply these  algorithms to show that a CQ is uniquely characterizable by polynomially many examples, precisely if (modulo logical equivalence) it is 
 c-acyclic. Furthermore, for c-acyclic CQs, a uniquely characterizing set of examples can be constructed in polynomial time. 
In the special case of unary, acyclic, connected CQs, a uniquely characterizing set of examples can be constructed consisting entirely of
queries from the same class. 
In Section~\ref{sec:learning}, we further build on these results, and we study the exact learnability of 
CQs. 
In Section~\ref{sec:input-output}, we consider another type of data examples, namely \emph{input-output examples}.
Section~\ref{sec:applications}, finally, presents applications to  
schema mappings and description logic concepts.


\section{Preliminaries}
\label{sec:preliminaries}


\subsection*{Schemas, Structures, Homomorphisms, Cores}
A \emph{schema} (or, relational signature) is a finite set of relation symbols $\mathcal{S}=\{R_1, \ldots, R_n\}$, where each relation $R_i$ as an associated arity $\text{arity}(R_i) \geq 1$. For $k\geq 0$, by a \emph{structure over $\mathcal{S}$ with $k$ distinguished elements} we will mean a tuple $(A, a_1, \ldots, a_k)$, where $A=(\text{dom}(A), R^A_1, \ldots, R^A_n)$ is a finite structure (in the traditional, model-theoretic sense) over the schema $\mathcal{S}$, and $a_1, \ldots, a_k$ are elements of the domain of $A$. Note that all structures, in this paper, are assumed to be finite, and we will drop the adjective ``finite''. By a \emph{fact} of a structure $A$ we mean an expression of the form $R(a_1, \ldots, a_n)$ where the tuple $(a_1, \ldots, a_n)$ belongs to the relation $R$ in $A$. Given two structures $(A,\textbf{a})$ and $(B,\textbf{b})$ over the same schema, where $\textbf{a}=a_1, \ldots a_k$ and $\textbf{b}=b_1, \ldots, b_k$, a 
\emph{homomorphism} $h: (A,\textbf{a})\to (B,\textbf{b})$ is a map $h$ from the domain of $A$ to the domain of $B$, such that $h$ preserves all facts
(i.e., for each fact $R(a_1, \ldots, a_n)$ of $A$, $R(h(a_1), \ldots, h(a_n))$ is a fact of $B$),
and such that $h(a_i)=b_i$ for $i=1\ldots k$. When such a homomorphism exists, we will also say that $(A,\textbf{a})$ ``homomorphically maps to'' $(B,\textbf{b})$ and we will write
$(A,\textbf{a})\to (B,\textbf{b})$. We say that $(A,\textbf{a})$ and $(B,\textbf{b})$ are \emph{homomorphically equivalent} if
$(A,\textbf{a})\to (B,\textbf{b})$ and $(B,\textbf{b})\to (A,\textbf{a})$.
We occasionally write $A\stricthom B$ to say that $A\to B$ and $B\not\to A$.

A structure is said to be a \emph{core} if there is no homomorphism from the structure in question to a proper substructure~\cite{HN92}. It is known~\cite{HN92} that every structure $(A,\textbf{a})$ has a substructure to which it is homomorphically equivalent and that is a core. This substructure, moreover, is unique up to isomorphism, and it is known as \emph{the core of $(A, \textbf{a})$}. 

We will make use of the following technical lemma in several places later on in this paper:

\begin{lemma}\label{lem:core-injective}
Let $(A,\textbf{a})$ be a core structure, and let $h:(B,\textbf{b})\to (A,\textbf{a})$.
If there is a homomorphism $h':(A,\textbf{a})\to (B,\textbf{b})$, then $h'$ must be 
injective, and moreover, in this case, 
there is such $h'$ with the additional property that the composition of $h$ with $h'$ is the identity map on $A$.
\end{lemma}

\begin{proof}
The composition of $h$ with $h'$ is an endomorphism of $(A,\textbf{a})$ (that is, 
a homomorphism from the structure to itself).  It is a well-known property of cores
that every endomorphism is an automorphism (that is, an isomorphism from the structure to itself). Therefore, $h'$ must be injective. Furthermore, by composing $h'$ with the inverse of the automorphism, we ensure that its composition with $h$ is the identity function.
\end{proof}

\subsection*{Fact Graph, FG-Connectedness, FG-Disjoint Union}
The \emph{fact graph} of a structure $(A,\textbf{a})$ is the undirected graph whose nodes are the facts of $A$, and such that there is an edge between two distinct facts if they share a non-distinguished element, i.e., there exists an element $b$ of the domain of $A$ that is distinct from the distinguished elements $\textbf{a}$, such that $b$ occurs in both facts. 
We say that $(A,\textbf{a})$ is \emph{fg-connected} if the fact graph is connected. 
A \emph{fg-connected component} of $(A,\textbf{a})$ is a 
maximal fg-connected substructure $(A',\textbf{a})$ of $(A,\textbf{a})$.
If $(A_1,\textbf{a})$ and $(A_2,\textbf{a})$ are structures with the same 
distinguished elements, and whose domains are otherwise (except for these distinguished elements) disjoint, then the union $(A_1\cup A_2, \textbf{a})$ of these two structures will be called a \emph{fg-disjoint union} and 
will be denoted as $(A_1,\textbf{a})\uplus (A_2,\textbf{a})$. The same
construction naturally extends to finite sets of structures. 
It is easy to see that every structure $(A,\textbf{a})$ is equal to the \text{fg-disjoint union} of its fg-connected components. See also~\cite{Fagin2008towards,tencate2009laconic}, where
fg-connected components are called \emph{fact blocks}.

\subsection*{Direct Product, Homomorphism Lattice} 
Given two structures $(A,\textbf{a})$ and $(B,\textbf{b})$ over the same schema, where $\textbf{a}=a_1, \ldots a_k$ and $\textbf{b}=b_1, \ldots, b_k$, the \emph{direct product} $(A,\textbf{a})\times (B,\textbf{b})$ is defined, as usual, as $(A\times B, \langle a_1, b_1\rangle, \ldots, \langle a_k, b_k\rangle)$, where the domain of $A\times B$ is the Cartesian product of the domains of $A$ and $B$, and where the
facts of $A\times B$ are all facts $R(\langle c_1,d_1\rangle, \ldots, \langle c_n, d_n\rangle)$ for which it holds that
$R(c_1, \ldots, c_n)$ is a fact of $A$ and $R(d_1, \ldots, d_n)$ is a fact of $B$.
The direct product of a finite collection of structures is defined analogously.

For a fixed schema $\mathcal{S}$ and $k\geq 0$, the collection of homomorphic-equivalence classes of structures over $\mathcal{S}$ with $k$ distinguished elements, ordered by homomorphism, forms a lattice. Specifically,  the above
\emph{direct product} operation is a meet operation in the lattice-theoretic sense:
$(A,\textbf{a})\times (B,\textbf{b})$ homomorphically maps to both $(A,\textbf{a})$ and $(B,\textbf{b})$, and a structure $(C,\textbf{c})$ homomorphically maps to $(A,\textbf{a})\times (B,\textbf{b})$ if and only if it homomorphically maps to both $(A,\textbf{a})$ and $(B,\textbf{b})$. The join operation of the lattice is a little more tedious to define,
and we only sketch it here, as it is not used in the remainder of the paper. 
For a structure $(A,\textbf{a})$ with $\textbf{a}=a_1, \ldots, a_k$, by the 
\emph{isomorphism type of the distinguished elements} we will mean the 
equivalence relation over $\{1, \ldots, k\}$ induced by the tuple $a_1, \ldots, a_k$.
When two structures have the same isomorphism type of distinguished elements, their join is simply the fg-disjoint union as defined earlier. In the general case, one must first compute the smallest equivalence relation over $\{1, \ldots, k\}$ that refines the
isomorphism type of distinguished elements of both structures, and factor both structures through this equivalence relation, before taking their fg-disjoint union.

For structures without distinguished elements, this lattice has been studied extensively (cf.~for instance~\cite{HellNesetril2004,nesetril2012sparsity}). The above exposition shows how to lift some of the fundamental constructions to structures with distinguished elements. As we will see, it will be important in much of  this paper to consider structures with distinguished elements, as 
these distinguished elements, intuitively, 
correspond to the free variables of a CQ.

\subsection*{Incidence Graph, Acyclicity, C-Acyclicity}
Given a structure $(A,\textbf{a})$, 
the \emph{incidence graph} of $A$ is the bipartite  multi-graph containing all elements of the domain of $A$ as well as all facts of $A$, and an 
edge $(a,f)$ whenever $a$ is an element and $f$ is a fact in which $a$ occurs. Whenever an element $a$ occurs more than once in the same fact $f$, the incidence graph contains a distinct edge for every occurrence of $a$ in $f$. We will call a structure
$(A,\textbf{a})$ \emph{acyclic} (also known as \emph{Berge-acyclic}~\cite{Fagin83:acyclicity}) if the incidence graph of $A$ is acyclic;
$(A,\textbf{a})$ 
 is said
to be \emph{c-acyclic} if every cycle in its incidence graph contains at least one distinguished element, i.e., at least one element in $\textbf{a}$. 
In particular, acyclicity implies that no element occurs twice in the same fact, and c-acyclicity implies that no non-distinguished element occurs twice in the same fact.
In the case without distinguished elements, c-acyclicity is equivalent to acyclicity.
The concept of c-acyclicity was first introduced in~\cite{AlexeCKT2011} in the study of unique characterizability of GAV schema mappings (cf.~Section~\ref{sec:applications} for more details).
A straightforward dynamic-programming argument shows~\cite{Dalmau02:constraint}:

\begin{proposition}\label{prop:cacyclic-core}
For c-acyclic structures $(A, \textbf{a})$ and $(B,\textbf{b})$ (over the same schema and with the same number of distinguished elements), we can test in polynomial time whether $(A,\textbf{a})\to(B,\textbf{b})$.
The core of a c-acyclic structure can be computed in polynomial time.
\end{proposition}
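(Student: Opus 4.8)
The plan is to derive both statements from one device: a bottom-up, dynamic-programming computation over the forest that one obtains from the incidence graph of a c-acyclic structure by deleting its distinguished elements. For the homomorphism test, let $(A,\textbf{a})$ and $(B,\textbf{b})$ be c-acyclic. Any homomorphism $h\colon(A,\textbf{a})\to(B,\textbf{b})$ is forced to send each $a_i$ to $b_i$, so the task is only to choose an image in $\text{dom}(B)$ for every non-designated element of $A$ so that every fact of $A$ is mapped onto a fact of $B$. Let $G$ be the incidence graph of $A$ with the nodes $a_1,\dots,a_k$ removed; by the definition of c-acyclicity $G$ is a forest, it is bipartite between the non-designated elements of $A$ and the facts of $A$, and no non-designated element occurs twice in a single fact. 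I would first set aside the facts of $A$ whose elements are all distinguished: these are exactly the isolated fact-nodes of $G$, and each imposes only the $h$-independent requirement that the correspondingly instantiated fact belong to $B$, which is a lookup.

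For the remaining trees of $G$, I would root each at an element node (every tree with more than one node contains one) and traverse it in post-order. For a non-designated element $c$ I maintain the set $\mathrm{Feas}(c)\subseteq\text{dom}(B)$ of those values $v$ for which the part of $A$ in the subtree below $c$ admits a homomorphism sending $c$ to $v$ and each $a_i$ to $b_i$; for a fact-node $f$ with parent element $p$ I maintain the set $\mathrm{Feas}(f)\subseteq\text{dom}(B)$ of those values for $p$ that extend to a fact of $B$ which agrees with $f$ on its distinguished positions and lies in the appropriate $\mathrm{Feas}(\cdot)$ on its remaining (necessarily pairwise distinct) non-designated positions. These are related by $\mathrm{Feas}(c)=\bigcap_f \mathrm{Feas}(f)$, the intersection ranging over the children $f$ of $c$, and a homomorphism $(A,\textbf{a})\to(B,\textbf{b})$ exists iff all the preliminary lookups succeed and $\mathrm{Feas}$ is nonempty at the root of every tree. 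Each update scans only the facts of $B$ with the matching relation symbol and does bounded-arity bookkeeping, so the whole test runs in polynomial time.

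The step that genuinely has to be argued --- the main obstacle, and the place where the forest structure is exactly what is needed --- is correctness: that feasibility computed locally, one tree at a time, assembles into a single global homomorphism. For this one uses that a homomorphism is precisely a choice of images for the non-designated elements that preserves every fact; the all-distinguished facts are taken care of by the preliminary lookups, and every other fact, together with the non-designated elements it mentions, lies inside a single tree of $G$ (the fact is a node of $G$ and those elements are its neighbours there). Since distinct trees of $G$ have disjoint element sets and disjoint fact sets, and the distinguished elements --- the only elements of $A$ that can appear in facts of more than one tree --- are pinned to $\textbf{b}$, a global choice exists iff a per-tree choice exists; and within a tree the $\mathrm{Feas}$-recursion is the standard acyclic dynamic program, whose correctness is routine.

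For the core, recall that $(A,\textbf{a})$ fails to be a core exactly when it has a non-surjective endomorphism, and any element omitted by such an endomorphism is non-designated, since endomorphisms fix each $a_i$. Hence $(A,\textbf{a})$ is not a core iff some non-designated element $d$ admits $(A,\textbf{a})\to(A\setminus\{d\},\textbf{a})$, where $A\setminus\{d\}$ denotes $A$ with $d$ and all facts mentioning $d$ removed: such a homomorphism composed with the inclusion $A\setminus\{d\}\hookrightarrow A$ is a non-surjective endomorphism, and conversely a non-surjective endomorphism omitting $d$ already maps $A$ into $A\setminus\{d\}$. Now $A\setminus\{d\}$ is again c-acyclic (deleting nodes from the incidence graph cannot create a cycle) and has the same schema and distinguished elements, so the first part makes each of the at most $|\text{dom}(A)|$ tests ``$(A,\textbf{a})\to(A\setminus\{d\},\textbf{a})$'' decidable in polynomial time. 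The algorithm then iterates: while some such $d$ exists, replace $(A,\textbf{a})$ by the homomorphically equivalent --- and still c-acyclic --- structure $(A\setminus\{d\},\textbf{a})$. Each round deletes one element, so after at most $|\text{dom}(A)|$ rounds the process halts at a structure that is a core and is homomorphically equivalent to the input; by the uniqueness of cores, it is the core of $(A,\textbf{a})$.
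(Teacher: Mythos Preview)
Your proposal is correct and is precisely the ``straightforward dynamic-programming argument'' that the paper invokes without spelling out; the paper gives no detailed proof of this proposition, so your write-up is a faithful elaboration of the intended approach. Your handling of the core computation (iteratively deleting an avoidable non-designated element and appealing to the first part for each test) is the natural complement and is also correct.
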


\subsection*{C-Connectedness}
We say that a structure $(A,\textbf{a})$ is \emph{c-connected} if 
every connected component of its incidence graph contains at least one distinguished element. 
Note that this condition is only meaningful for structures with at least one distinguished element, and that it
differs subtly from the condition of fg-connectedness we defined above. For example, the structure consisting of the facts $R(a_1, a_2)$ and $S(a_2, a_1)$ with distinguished elements $a_1, a_2$, is c-connected but is \emph{not} fg-connected. 
For any structure $(A,\textbf{a})$, we denote by
$(A, \textbf{a})^{\textrm{reach}}$ the (unique) maximal c-connected substructure, that is, the substructure containing everything reachable from the distinguished elements.

\begin{proposition}\label{prop:reach}
If $(A,\textbf{a})$ is c-connected, then 
$(A,\textbf{a})\to (B, \textbf{b})^{\textrm{reach}}$ iff 
$(A,\textbf{a})\to (B, \textbf{b})$.
\end{proposition}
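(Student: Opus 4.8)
The plan is to prove both directions. The right-to-left direction is immediate: since $(B,\textbf{b})^{\textrm{reach}}$ is a substructure of $(B,\textbf{b})$, the inclusion map is a homomorphism $(B,\textbf{b})^{\textrm{reach}} \to (B,\textbf{b})$, so composing any homomorphism $(A,\textbf{a}) \to (B,\textbf{b})^{\textrm{reach}}$ with this inclusion yields $(A,\textbf{a}) \to (B,\textbf{b})$. This direction does not use c-connectedness of $(A,\textbf{a})$.

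For the left-to-right direction, suppose $h: (A,\textbf{a}) \to (B,\textbf{b})$ is a homomorphism. I will argue that the image of $h$ is entirely contained in $(B,\textbf{b})^{\textrm{reach}}$, so that $h$ already is a homomorphism into $(B,\textbf{b})^{\textrm{reach}}$. Recall that $(B,\textbf{b})^{\textrm{reach}}$ consists of all elements and facts of $B$ reachable, in the incidence graph of $B$, from the distinguished elements $\textbf{b}$. The key observation is that a homomorphism maps the incidence graph of $(A,\textbf{a})$ into the incidence graph of $(B,\textbf{b})$ in a way that respects reachability: if an element $a$ of $A$ occurs in a fact $f$ of $A$, then $h(a)$ occurs in the image fact $h(f)$ of $B$, so $h(a)$ and $h(f)$ are adjacent (or equal up to the element/fact bipartition) in the incidence graph of $B$. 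Hence, for any element or fact $x$ of $A$, if there is a path in the incidence graph of $A$ from some distinguished element $a_i$ to $x$, then there is a corresponding path in the incidence graph of $B$ from $b_i = h(a_i)$ to $h(x)$, so $h(x)$ is reachable from $\textbf{b}$ and thus lies in $(B,\textbf{b})^{\textrm{reach}}$.

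Now I invoke the hypothesis that $(A,\textbf{a})$ is c-connected: by definition, every connected component of the incidence graph of $A$ contains at least one distinguished element, which means every element and every fact of $A$ is reachable from $\textbf{a}$. Combined with the observation above, this shows that $h$ maps every element and every fact of $A$ into $(B,\textbf{b})^{\textrm{reach}}$, so $h$ witnesses $(A,\textbf{a}) \to (B,\textbf{b})^{\textrm{reach}}$, as desired.

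I expect no real obstacle here — the argument is essentially bookkeeping about how homomorphisms interact with incidence graphs and reachability. The one point to state carefully is the claim that homomorphisms preserve incidence-graph adjacency and hence reachability; the mild subtlety is that $h$ need not be injective, so "image fact" $h(f)$ should be understood as the fact obtained by applying $h$ componentwise to $f$, which is indeed a fact of $B$ since $h$ preserves facts, and $h(a)$ genuinely occurs in it whenever $a$ occurs in $f$.
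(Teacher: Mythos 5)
Your proof is correct; the paper states this proposition without proof, and your argument (homomorphisms preserve incidence-graph adjacency and map $a_i$ to $b_i$, so c-connectedness of $(A,\textbf{a})$ forces the image of every element and fact to lie in the reachable substructure) is precisely the routine argument the paper implicitly relies on.
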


\subsection*{Conjunctive Queries}
Let $k\geq 0$.
A $k$-ary \emph{conjunctive query (CQ)} $q$ over a schema $\mathcal{S}$ is an expression of the form ~
$ q(\textbf{x}) \text{ :- } \alpha_1\land\cdots\land\alpha_n $ ~
where $\textbf{x}=x_1, \ldots, x_k$ is a sequence of variables, and where each $\alpha_i$ is an atomic formula using a relation from $\mathcal{S}$ and using variables as arguments only. Note that $\alpha_i$ may use variables from $\textbf{x}$ as well as other variables.
In addition, it is 
required that each variable in $\textbf{x}$ occurs in at least one
conjunct $\alpha_i$. This requirement is referred to as the \emph{safety} condition.

Note that, for simplicity, this definition of CQ does not allow the use of constants. 
Many of the results in this paper, however,
can be extended in a straightforward way to CQs with a fixed finite number of constants (which can be simulated using additional free variables).

If $A$ is a structure over the same schema as $q$, we denote by $q(A)$ the
set of all $k$-tuples of values that satisfy the query $q$ in $A$. 
We write $q\subseteq q'$ if $q$ and $q'$ are queries over the same schema, and of the 
same arity, and $q(A)\subseteq q'(A)$ holds for all structures $A$. We say that
$q$ and $q'$ are \emph{logically equivalent} if $q\subseteq q'$ and $q'\subseteq q$ both hold. We refer to any textbook on database theory for a more detailed exposition of the 
semantics of CQs, and we will restrict ourselves to giving an equivalent presentation of 
the semantics of CQs through canonical structures and the Chandra-Merlin theorem.

There is a well-known correspondence between $k$-ary CQs over a schema 
$\mathcal{S}$ and structures over $\mathcal{S}$ with $k$ distinguished elements. 
In one direction, 
we can associate to each $k$-ary CQ $q(\textbf{x})$ over the schema $\mathcal{S}$ a corresponding structure over $\mathcal{S}$ with $k$ distinguished elements, namely $\widehat{q} = (A_q,\textbf{x})$, where the domain of $A_q$ is the set of variables
occurring in $q$, and the facts of $A_q$ are the conjuncts of $q$. We will call this structure $\widehat{q}$ the \emph{canonical structure} of $q$. Note that
every distinguished element of $\widehat{q}$ occurs in at least one fact, as follows
from the safety condition of CQs.
Conversely, consider any structure $(A,\textbf{a})$, with $\textbf{a}=a_1, \ldots, a_k$,  such that every distinguished element
$a_i$ occurs in at least one fact of $A$. We can associate to $(A,\textbf{a})$ a
$k$-ary \emph{canonical CQ}, namely the CQ 
 that has a variable $x_a$ for every value $a$ in the domain of $A$ occurring in at least one fact, and a conjunct for every fact of $A$.

 By the classic \emph{Chandra-Merlin Theorem}~\cite{CM77}, 
 a tuple $\textbf{a}$ belongs to $q(A)$ if and only if there
 is a homomorphism from $\widehat{q}$ to $(A,\textbf{a})$; and 
 $q\subseteq q'$ holds if and only if there is a homomorphism from $\widehat{q'}$ to $\widehat{q}$. Finally, $q$ and $q'$ are logically equivalent if and only if 
 $\widehat{q}$ and $\widehat{q'}$ are homomorphically equivalent.

\subsection*{Exact Learning Models, Conjunctive Queries as a Concept Class}
Informally, an \emph{exact learning algorithm} is an algorithm that
identifies an unknown goal concept by asking a number of queries about
it. The queries are answered by an oracle that has access to the goal
concept. This model of learning was introduced by Dana Angluin, cf.~\cite{Angluin88}.
In this paper, we consider the two most extensively studied
kinds of oracle queries: \emph{membership queries}  and \emph{equivalence queries}. 
We will first review basic notions from computational
learning theory, such as the notion of a \emph{concept}, and then
explain what it means for a concept class to be \emph{efficiently
  exactly learnable with membership and/or equivalence queries}. 
  
Let $X$ be a (possibly infinite) set of {\em examples}. 
A \emph{concept over $X$} is a function $c:X\to \{0,1\}$, and a \emph{concept class} ${\mathcal C}$ is a collection of such concepts.
We say that $x\in X$ is a \emph{positive example} for a concept $c$ if $c(x)=1$, 
and that $x$ is a \emph{negative example} for $c$ if $c(x)=0$.

Conjunctive queries (over a fixed schema $\mathcal{S}$ and with a fixed arity $k$) are a particular example of such a concept class, where the example space is the class of all
structures over $\mathcal{S}$ with $k$ distinct elements, and where an example $(A,\textbf{a})$ is labeled as positive if the tuple $\textbf{a}$ belongs to $q(A)$, and negative otherwise.

It is always assumed that concepts are specified using some representation system so that one can speak of the length of the specification of a concept. More formally, a \emph{representation system for $\mathcal{C}$} is a string language $\mathcal{L}$ over some finite alphabet, together with a surjective function $r:\mathcal{L}\to\mathcal{C}$. By the 
 \emph{size} of a concept $c\in\mathcal{C}$, we will mean the length of the smallest representation.
Similarly, we assume a representation system, with a corresponding notion of length, for the examples in $X$.
When there is no risk of confusion, we may conflate concepts (and examples) with their representations.

Specifically, for us, when it comes to \emph{structures}, any natural choice of representation will do; we only assume that the length of the specification of a structure (for a fixed schema) is polynomial in the domain size, the number of facts and the number of distinguished elements. Likewise for \emph{CQs},  we   assume that the length of the representation of a CQ is polynomial in that of its canonical structure.

\newcommand{\alg}{\texttt{alg}}
For every concept $c$, we denote by $\text{\rm MEM}_c$ the
\emph{membership oracle} for $c$, that is,  the oracle that takes as
input an example $x$ and returns its label, $c(x)$, according to
$c$. Similarly, for every concept $c\in \mathcal{C}$, we denote by
$\text{\rm EQ}_c$, the \emph{equivalence oracle} for $c$, that is,
the oracle that takes as input the representation of a concept $h$ and
returns ``yes'', if $h=c$, or returns a counterexample $x$ otherwise
 (that is, an example $x$ such that $h(x)\neq
c(x)$).
An \emph{exact learning algorithm with membership and/or equivalence queries} for a concept class $C$ is an algorithm $\alg$ that takes no input but has access to the membership oracle $\text{\rm MEM}_c$ and/or equivalence oracle $\text{\rm EQ}_c$ for some concept
$c\in C$, which will be called the \emph{goal concept}. Importantly, while the algoritm
may interact with the oracle(s), it does not know which concept $c\in C$ is the goal concept.
\footnote{It is common in the learning theory literature to assume that the learning algorithm is given 
an upper bound on
the size of the goal concept as input. However, it turns out that
such an assumption is not needed for any of our positive results concerning learnability.}
 Intuitively, the algorithm $\alg$ must determine $c$ by asking oracle queries. More precisely, for every choice of $c\in C$,
 $\alg$ must terminate after a finite amount of time, and output (some representation of) the goal concept $c$. This notion was introduced by Angluin~\cite{Angluin88}, who also introduced  the notion of a \emph{polynomial-time} exact learning algorithm.
 We say that an exact learning algorithm $\alg$ with membership and/or equivalence queries
 \emph{runs in polynomial time} if there exists a two-variable polynomial $p(n,m)$ such that
 at any point  during the run of the algorithm, the time used by $\alg$ up to  that point (counting one step per oracle call) is bounded by $p(n,m)$, where $n$ is the size of the goal concept and $m$ the size of the largest counterexample returned by calls to the equivalence oracle up to that point in the run ($m=0$ if no equivalence queries have been used).
 A concept class ${\mathcal C}$ is \emph{efficiently exactly learnable with membership and/or equivalence queries} if there is an exact learning algorithm with membership and/or equivalence queries for $\mathcal{C}$ that runs in polynomial time.

There is a delicate issue about this notion of polynomial time that we now discuss.
One might be tempted to relax the previous definition by requiring merely that the total running time is bounded by $p(n,m)$. However, this change in the definition would give rise to a {\em wrong} notion of a  polynomial-time algorithms in this context by way of a loophole in the definition. Indeed, under this change, one could design a learning algorithm that, in a first stage, identifies the goal hypothesis by (expensive) exhaustive search and that, once this is achieved, forces ---by asking equivalence queries with hypotheses that are appropriate modifications of the goal concept--- the equivalence oracle to return large counterexamples that would make up for the time spent during the exhaustive search phase.

\section{Frontiers in the homomorphism lattice of structures}
\label{sec:frontiers}

In this section, we define frontiers, as well as the relation notions of
gap pairs and (restricted) homomorphism dualities, and we will discuss their relationships.
We present two 
polynomial-time methods for constructing frontiers. 

For the applications in the next sections, it is important to 
consider structures with distinguished elements. 
These distinguished elements, intuitively, 
correspond to the free variables of a CQ. 
Specifically, Proposition~\ref{prop:frontiers-characterizations} in Section~\ref{sec:characterizations} will link unique
characterizations for $k$-ary CQs to frontiers for structures with $k$ distinguished elements.
For this reason, all the results in this section are stated for structures with distinguished 
elements.

\begin{definition}
Fix  a schema and $k\geq 0$, and let $\mathcal{C}$ be a class of structures with $k$ distinguished elements and let $(A,\textbf{a})$ be a structure with $k$ distinguished elements. 
A \emph{frontier} for $(A,\textbf{a})$ w.r.t.~$\mathcal{C}$,
is a finite set of structures $F$ such that
\begin{enumerate}
\item $(B,\textbf{b})\to(A,\textbf{a})$ for all $(B,\textbf{b})\in F$.
\item $(A,\textbf{a})\not\to(B,\textbf{b})$ for all $(B,\textbf{b})\in F$.
\item For all $(C,\textbf{c})\in\mathcal{C}$ with $(C,\textbf{c})\to (A,\textbf{a})$ and $(A,\textbf{a})\not\to(C,\textbf{c})$, we have that $(C,\textbf{c})\to (B,\textbf{b})$ for some $(B,\textbf{b})\in F$.
\end{enumerate}
\end{definition}

See Figure~\ref{fig:frontier} for a graphical depiction of a frontier.

The notion of a frontier is closely related to that of a gap pair. 
While gap pairs will not play an important role, we will explain the
relationship here to provide context.
A pair of structures $(B,A)$ with $B\to A$ is said to be a \emph{gap pair} if $A\not\to B$, and every structure $C$
satisfying $B\to C$ and $C\to A$ is homomorphically equivalent to either $B$ or $A$~\cite{NesetrilTardif2000}. 
The same concept applies to structures with distinguished elements.
It is easy to see that any frontier for a structure $A$ must contain (modulo homomorphic equivalence) all structures $B$ such that $(B,A)$ is a gap pair.

\begin{example} Let $\mathcal{S}=\{R,P,Q\}$.
The structure $(A,a_1)$ consisting of facts $P(a_1)$ and $Q(a_1)$ (with distinguished element $a_1$) has a frontier of size 2 (w.r.t.~the class of all finite structures), namely $F=\{(B, a_1), (C, a_1)\}$ where $B$ consists of the facts $P(a_1),P(b),Q(b)$ and $C$ consists of the facts $Q(a_1),P(b),Q(b)$, respectively. Note that $((B, a_1), (A, a_1))$
and $((C, a_1), (A, a_1))$ are gap pairs. It can be shown that the structure $(A,a_1)$ has no frontier of size 1 (as such a frontier would have to consist of a structure that contains both facts $P(a_1)$ and $Q(a_1)$). 

For another example, consider the structure $(A',a_1)$ consisting of facts $P(a_1)$ and $R(b,b)$. It is the right hand side of a gap pair (the left hand side being the structure $(B', a_1)$ consisting of the facts $R(b,b)$ and $P(b')$), but $(A',a_1)$ has no frontier as follows from Theorem~\ref{thm:cacyclic} below. 
\end{example}

Frontiers are also closely related to (generalized) homomorphism dualities~\cite{FoniokNT08}, and we will be making use of results about homomorphism 
dualities. 
We say that a structure $(A,\textbf{a})$ \emph{has a finite duality} w.r.t.~
a class $\mathcal{C}$
if there is a finite set of structures $D$ such that for all
$(C,\textbf{c})\in\mathcal{C}$, 
$(A,\textbf{a})\to (C,\textbf{c})$ iff 
for all $(B,\textbf{b})\in D$, $(C,\textbf{c})\not\to (B,\textbf{b})$.  
The set $D$ may contain structures that are not in $\mathcal{C}$.
If ${\mathcal C}$ is the set of all structures (over the same schema as $(A,\textbf{a})$), we simply say that $(A,\textbf{a})$ \emph{has a finite duality}.\footnote{We note here that in the literature on Constraint Satisfaction, it is usual to consider the 'other side' of the duality, i.e,  a structure $A$ is said to have finite duality if there exists a finite set of structures $F$ such that for every structure $C$, $C\to A$ iff for all $B\in F$, $B\not\to C$.}

\begin{example} 
In the realm of digraphs, viewed as relational structures without distinguished elements with a single binary relation, every directed path $A$ of, say, $k>1$ nodes has finite duality (w.r.t.~the class of digraphs). Indeed, it is not difficult to verify that for every digraph $C$, $A\to C$ iff $C\not\to D$ where $D$ is the digraph with nodes $\{1,\dots k-1\}$ and edges $\{(i,j) \mid i<j\}$. (This example is known as the Gallai-Hasse-Roy-Vitaver Theorem. Amusingly, this result was obtained and published independently by all these four researchers, each in a different language, in the 1960s).
\end{example}

The next lemma is a minor variation of a result from~\cite{NesetrilTardif2000}.

\begin{lemma} \label{lem:frontiers-and-dualities}
Let $\mathcal{C}$ be any class of structures.
\begin{enumerate}
\item 
If a structure
$(A,\textbf{a})$ has a finite duality w.r.t. $\mathcal{C}$ then $(A,\textbf{a})$ has a  frontier w.r.t. $\mathcal{C}$.
\item 
If a structure
$(A,\textbf{a})\in\mathcal{C}$ has a frontier w.r.t. $\mathcal{C}$ and $\mathcal{C}$ is closed under direct products,  then  $(A,\textbf{a})$ has a finite duality w.r.t. $\mathcal{C}$.
\end{enumerate}
\end{lemma}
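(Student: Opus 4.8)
The plan is to prove both directions by directly unwinding the definitions, using the characterization of the direct product as the meet operation in the homomorphism lattice (as recalled in the preliminaries).

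For part (1), suppose $(A,\textbf{a})$ has a finite duality w.r.t. $\mathcal{C}$, witnessed by a finite set $D$. I claim that $F \coloneqq \{ (A,\textbf{a}) \times (B,\textbf{b}) \mid (B,\textbf{b}) \in D \text{ and } (A,\textbf{a})\not\to(B,\textbf{b})\}$ is a frontier for $(A,\textbf{a})$ w.r.t. $\mathcal{C}$. Condition 1 is immediate since $(A,\textbf{a})\times(B,\textbf{b})\to(A,\textbf{a})$ always. For condition 2: if $(A,\textbf{a})\to (A,\textbf{a})\times(B,\textbf{b})$, then composing with the projection gives $(A,\textbf{a})\to(B,\textbf{b})$, contradicting the restriction imposed in the definition of $F$. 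For condition 3, take $(C,\textbf{c})\in\mathcal{C}$ with $(C,\textbf{c})\to(A,\textbf{a})$ and $(A,\textbf{a})\not\to(C,\textbf{c})$. By the duality, since $(A,\textbf{a})\not\to(C,\textbf{c})$, there is some $(B,\textbf{b})\in D$ with $(C,\textbf{c})\to(B,\textbf{b})$; and since $(C,\textbf{c})\to(A,\textbf{a})$ as well, the meet property yields $(C,\textbf{c})\to(A,\textbf{a})\times(B,\textbf{b})$. Finally, this $(B,\textbf{b})$ must satisfy $(A,\textbf{a})\not\to(B,\textbf{b})$ — otherwise $(A,\textbf{a})\to(B,\textbf{b})$ together with $(C,\textbf{c})\to(B,\textbf{b})$ and $(C,\textbf{c})\to(A,\textbf{a})$ would be consistent, but we actually need: if $(A,\textbf{a})\to(B,\textbf{b})$ then by duality $(A,\textbf{a})\not\to(A,\textbf{a})$, absurd — so in fact every $(B,\textbf{b})\in D$ satisfies $(A,\textbf{a})\not\to(B,\textbf{b})$ and the restriction in the definition of $F$ is vacuous. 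Hence $(A,\textbf{a})\times(B,\textbf{b})\in F$, as required. One subtlety: the elements of $F$ need not lie in $\mathcal{C}$, but the definition of frontier does not require this, so we are fine.

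For part (2), assume $(A,\textbf{a})\in\mathcal{C}$ has a finite frontier $F$ w.r.t. $\mathcal{C}$, and $\mathcal{C}$ is closed under direct products. I would like to set $D \coloneqq F$ and verify the duality condition: for all $(C,\textbf{c})\in\mathcal{C}$, $(A,\textbf{a})\to(C,\textbf{c})$ iff $(C,\textbf{c})\not\to(B,\textbf{b})$ for all $(B,\textbf{b})\in F$. For the forward direction, if $(A,\textbf{a})\to(C,\textbf{c})$ and also $(C,\textbf{c})\to(B,\textbf{b})$ for some $(B,\textbf{b})\in F$, then $(A,\textbf{a})\to(B,\textbf{b})$, contradicting condition 2 of the frontier. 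For the converse, suppose $(A,\textbf{a})\not\to(C,\textbf{c})$; I must find $(B,\textbf{b})\in F$ with $(C,\textbf{c})\to(B,\textbf{b})$. The issue is that condition 3 of the frontier only applies to structures that map into $(A,\textbf{a})$, whereas our $(C,\textbf{c})$ need not. This is where closure under direct products is used: consider $(C',\textbf{c}') \coloneqq (A,\textbf{a})\times(C,\textbf{c})$, which lies in $\mathcal{C}$. Then $(C',\textbf{c}')\to(A,\textbf{a})$, and $(A,\textbf{a})\not\to(C',\textbf{c}')$ since otherwise projection would give $(A,\textbf{a})\to(C,\textbf{c})$. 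So condition 3 applies: $(C',\textbf{c}')\to(B,\textbf{b})$ for some $(B,\textbf{b})\in F$. Now I need $(C,\textbf{c})\to(B,\textbf{b})$. Hmm — composing $(C,\textbf{c})$ with the other projection only gives $(C,\textbf{c}) \leftarrow (C',\textbf{c}')$, the wrong direction. The fix: observe that $(A,\textbf{a})\to(A,\textbf{a})$ and, wait — I should instead argue that $(C,\textbf{c})$ itself maps to $(C',\textbf{c}')$? That is false in general. The correct route is to note $(C,\textbf{c})$ is homomorphically equivalent to $(C',\textbf{c}')$ exactly when $(C,\textbf{c})\to(A,\textbf{a})$, which we do not have. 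So instead I retreat: I do not expect the duality set to be $F$ directly. Rather, I would take $D$ to be $F$ together with all fg-disjoint-union-type combinations, or more simply argue contrapositively.

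Let me restructure part (2). The cleanest approach: show that $(A,\textbf{a})\to(C,\textbf{c})$ fails iff $(A,\textbf{a})\times(C,\textbf{c})$ is strictly below $(A,\textbf{a})$, and apply condition 3 to $(A,\textbf{a})\times(C,\textbf{c})$, getting a map to some $(B,\textbf{b})\in F$ — but then compose the projection $(C,\textbf{c})$ need not factor. The genuine obstacle here is precisely this: a frontier bounds the structures below $A$, but the duality statement quantifies over all $(C,\textbf{c})\in\mathcal{C}$. I expect the resolution to be that one should take $D$ to consist not of the frontier elements themselves but of structures obtained from them, exploiting that $\mathcal{C}$ is closed under products so that we may replace $(C,\textbf{c})$ by $(A,\textbf{a})\times(C,\textbf{c})$ and additionally use that $(C,\textbf{c})\to (B,\textbf{b})$ will follow once we know $(C,\textbf{c})$ is homomorphically below $(A,\textbf{a})$ — which holds after the product replacement only for the product, not for $(C,\textbf{c})$. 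Therefore the honest statement to prove is: $D = F$ works after all, because $(C,\textbf{c})\to(B,\textbf{b})$ is equivalent (for $B$ itself below $A$, which holds by condition 1) to $(C,\textbf{c})\times(A,\textbf{a})\to(B,\textbf{b})$ — no. I will need to think carefully, and I flag this as the main obstacle: reconciling the ``$\to(A,\textbf{a})$'' restriction in condition 3 of the frontier with the unrestricted quantifier in the duality. I anticipate the paper handles it by the substitution $(C,\textbf{c})\rightsquigarrow(A,\textbf{a})\times(C,\textbf{c})$ and an argument that any homomorphism from the product to a frontier element can be ``pushed down'' to $(C,\textbf{c})$ using that frontier elements, being below $A$, absorb the $A$-coordinate; making this precise — perhaps via the fact that $(B,\textbf{b})\to(A,\textbf{a})$ lets one compose to eliminate the product — is the crux, and everything else is routine diagram-chasing with projections and the meet property.
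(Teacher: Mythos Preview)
Your Part~1 is correct and essentially identical to the paper's proof (the restriction $(A,\textbf{a})\not\to(B,\textbf{b})$ you impose is, as you observe, vacuous, so your frontier coincides with the paper's $\{(A,\textbf{a})\times(B,\textbf{b})\mid (B,\textbf{b})\in D\}$).

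Part~2 has a genuine gap, and it is exactly the one you flag: taking $D=F$ does \emph{not} work, and the ``push down'' mechanism you speculate about --- that a homomorphism $(A,\textbf{a})\times(C,\textbf{c})\to(B,\textbf{b})$ can be turned into one $(C,\textbf{c})\to(B,\textbf{b})$ because $B$ sits below $A$ --- is not available. There is no reason the $A$-coordinate should be ``absorbed'' by $B$; in general $(C,\textbf{c})\to(B,\textbf{b})$ simply fails. The paper resolves this not by massaging the argument but by changing the duality set: it uses the \emph{exponentiation} construction $B^A$ from Ne\v{s}et\v{r}il--Tardif, whose defining adjunction is precisely
\[
  D\to B^A \quad\Longleftrightarrow\quad D\times A\to B.
\]
The duality set is then taken to be $\{(B^A,\textbf{h})\mid (B,\textbf{b})\in F,\ h_i(a_i)=b_i\}$. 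With this in hand your product substitution $(C,\textbf{c})\rightsquigarrow (A,\textbf{a})\times(C,\textbf{c})$ becomes exactly right: from $(A,\textbf{a})\not\to(C,\textbf{c})$ you get $(A,\textbf{a})\times(C,\textbf{c})\to(B,\textbf{b})$ for some $(B,\textbf{b})\in F$ (this step uses $(A,\textbf{a})\in\mathcal{C}$ and closure under products, as you noted), and the adjunction then gives $(C,\textbf{c})\to(B^A,\textbf{h})$ directly --- no ``push down'' to $B$ is needed or attempted. The forward direction of the duality is similarly a one-line unwinding of the adjunction. So the missing idea is a specific, nontrivial construction (the exponential object), not a refinement of the diagram chase you were attempting.
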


\begin{proof} 
  1.
Let $D$ be a finite set of structures that forms a duality for $(A,\textbf{a})$. Then $\{(A,\textbf{a})\times (B, \textbf{b}) \mid (B,\textbf{b})\in D\}$ is a frontier for $(A,\textbf{a})$.
This follows immediately from the fact that,
for all $(C,\textbf{c})$ with $(C,\textbf{c})\to (A,\textbf{a})$, we have that $(C,\textbf{c})\to (A,\textbf{a})\times (B, \textbf{b})$ if and only if $(C,\textbf{c})\to  (B,\textbf{b})$.

  2.
  We use a construction from~\cite{NesetrilTardif2000} involving
  an exponentiation operation on structures. Let $B$ and $C$ be structures (without distinguished elements) over the same schema. Then we denote by
  $B^C$ the structure where
  \begin{itemize}
      \item the domain of $B^C$ is the set of all functions from the domain of $C$ to the domain of $B$
      \item a fact $R(f_1, \ldots, f_n)$ belongs to $B^C$ if
      for every fact of the form $R(a_1, \ldots, a_n)\in C$, the fact
      $R(f_1(a_1), \ldots, f_n(a_n))$ belongs to $B$.
  \end{itemize}
  This construction is characterized by the property that, for all structures $D$,
  $D\to B^C$ if and only if $D\times C \to B$ \cite{HellNesetril2004}.
  
  Let $F$ be a frontier for $(A,\textbf{a})$. Let 
  \[D = \{(B^A,\textbf{h})\mid \text{$(B,\textbf{b})\in F$ and $\textbf{h}$ is a tuple of functions such that $h_i(a_i)=b_i$}\}\]
  We claim that $D$ forms a finite duality for $(A,\textbf{a})$. 
  Consider any structure $(C,\textbf{c})\in\mathcal{C}$. We must show that $(C,\textbf{c})$ homomorphically maps to a structure in $D$ iff $(A,\textbf{a})\not\to (C,\textbf{c})$.
  
  Suppose that there is a homomorphism $h: (C,\textbf{c})\to (B^A,\textbf{h})$ for some $(B^A,\textbf{h})\in D$.
  Then $\tilde{h}:(A,\textbf{a})\times (C, \textbf{c})\to (B,\textbf{b})$ where 
  $\tilde{h}(\langle a,c\rangle ) = h(c)(a)$. 
  Note that, indeed, $\tilde{h}(\langle a_i,c_i\rangle) = b_i$.
  Since $(B,\textbf{b})\in F$ and $F$ is a frontier, it follows that $(A,\textbf{a}) \not\to (A,\textbf{a})\times (C, \textbf{c})$ and 
  therefore, by the properties of direct products, 
  $(A,\textbf{a})\not\to (C,\textbf{c})$.
   
  Conversely, suppose $(A,\textbf{a})\not\to (C,\textbf{c})$. Then
  $(A,\textbf{a})\not\to(A,\textbf{a})\times (C, \textbf{c})$.
  Hence, there is a homomorphism 
  $h:(A,\textbf{a})\times (C, \textbf{c})\to (B,\textbf{b})$ for some $(B,\textbf{b})\in F$. 
  It follows that $(C, \textbf{c})\to (B^A,\textbf{h})$ where $\textbf{h}=h_1,\ldots, h_k$ with $h_i$ the function given by $h_i(x)=h(x,a_i)$.
  Note that $h_i(a_i)=b_i$ and hence $(B^A,\textbf{h})\in D$.  
\end{proof}

Note that the construction of the frontier from the duality is polynomial, while the construction of the duality from the frontier involves an exponential blowup. The following example shows that this is unavoidable.

\begin{example} The path
$\circ \xrightarrow{R}\circ \xrightarrow{R_1}\circ \xrightarrow{R}\circ
 \xrightarrow{R_2}\circ\cdots \circ  \xrightarrow{R_n}\circ
  \xrightarrow{R}\circ$, viewed as a structure without any distinguished elements, has a frontier (w.r.t.~the class of all finite structures) of size polynomial in $n$, as will follow from
  Theorem~\ref{thm:cacyclic-polyfrontier} below. It is known, however, that any finite duality for this structure 
  must involve a structure whose size is exponential in $n$, and 
  the example can be modified to use a fixed schema (cf.~\cite{NesetrilTardif2005}).
\end{example}


\subsection{Frontiers for classes with bounded expansion}
The notion of a \emph{class of graphs with bounded expansion} was introduced in~\cite{nesetril2018grad}. Intuitively, a class of graphs has bounded expansion 
if all of its shallow minors are sparse. 
We will not give a precise definition here, but  important examples include graphs of bounded degree, graphs of bounded treewidth, planar graphs, and any class of graphs excluding a 
minor. The same concept of bounded expansion can be applied also to arbitrary structures: 
a class of structures $\mathcal{C}$  is said to have bounded expansion if the class of \emph{Gaifman graphs} of structures in $\mathcal{C}$ has bounded expansion. We refer to~\cite{nesetril2012sparsity} for more details. Classes of structures of bounded expansion are in many ways computationally well-behaved (cf.~for example~\cite{kazana2020firsts}).

Ne\v{s}et\v{r}il and Ossona de Mendez \cite{NesetrilOssona2008, nesetril2012sparsity}
show that if $\mathcal{C}$ is any class of structures with bounded expansion, then every structure has a finite duality w.r.t. $\mathcal{C}$. It follows by Lemma~\ref{lem:frontiers-and-dualities} that also every structure has a fontier w.r.t.~$\mathcal{C}$.
Ne\v{s}et\v{r}il and Ossona de Mendez~\cite{NesetrilOssona2008, nesetril2012sparsity} only consider connected structures without distinguished elements,%
\footnote{Note that the various notions of connectedness, such as 
based on the incidence graph, fact graph, or Gaifman graph, all coincide for 
relational structures without distinguished elements.}
but their result extends in a straightforward way to 
the general case of structures with distinguished elements. Furthermore, 
it yields an effective procedure for constructing frontiers, although
non-elementary (i.e., not bounded by a fixed tower of exponentials).

\begin{theorem}[from \cite{NesetrilOssona2008, nesetril2012sparsity}]
\label{thm:frontiers-bounded-expansion}
Let $\mathcal{C}$ be any class of structures that has bounded expansion.
Then every structure $(A,\textbf{a})$ has a frontier w.r.t. $\mathcal{C}$, which can be effectively constructed. 
\end{theorem}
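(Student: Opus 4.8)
## Proof Proposal

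The plan is to reduce the claim for structures with designated elements to the known duality result of Ne\v{s}et\v{r}il and Ossona de Mendez for connected structures without designated elements, and then invoke Lemma~\ref{lem:frontiers-and-dualities}(1) to pass from finite duality to finite frontier. The first step is to verify that the class $\mathcal{C}$, being of bounded expansion, is such that we may freely assume we are working with cores; since the core of $(A,\textbf{a})$ is a substructure, and taking cores does not increase the Gaifman graph, the core still lies in a class of bounded expansion if $\mathcal{C}$ does. (In fact bounded expansion of $\mathcal{C}$ is needed only as a property of the ambient class, so we do not even need $(A,\textbf{a})\in\mathcal{C}$.)

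Next I would handle the designated elements. Given $(A,\textbf{a})$ with $\textbf{a}=a_1,\dots,a_k$, introduce $k$ fresh unary predicates $P_1,\dots,P_k$ (or a single $(k{+}1)$-ary ``tagging'' relation), and define $A^+$ to be $A$ expanded by the facts $P_i(a_i)$. This is the standard trick turning structures with constants into plain structures: a homomorphism $(A,\textbf{a})\to(B,\textbf{b})$ corresponds exactly to a homomorphism $A^+\to B^+$ that is forced to send each $a_i$ into $P_i^{B^+}=\{b_i\}$. One checks that expanding by unary predicates does not change the Gaifman graph, so the class $\mathcal{C}^+$ of tagged structures still has bounded expansion. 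Then I would apply the Ne\v{s}et\v{r}il--Ossona de Mendez theorem to obtain a finite duality for $A^+$ w.r.t.\ the connected members of $\mathcal{C}^+$, and lift it to a duality for arbitrary (possibly disconnected) members by taking fg-disjoint unions / products in the usual way — this is exactly the ``straightforward extension to the general case'' alluded to in the excerpt. Having a finite duality for $(A,\textbf{a})$ w.r.t.\ $\mathcal{C}$, Lemma~\ref{lem:frontiers-and-dualities}(1) hands us a finite frontier, and since the proof of that lemma is constructive and the NO dM construction is effective, the whole pipeline is effective (though non-elementary).

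The main obstacle, and the part that needs genuine care rather than citation, is the passage from the connected, no-designated-elements setting in which NO dM state their result to the setting of arbitrary structures with designated elements. Concretely: (i) ensuring the tagging predicates genuinely collapse the right homomorphisms — the issue is that a $\mathcal{C}^+$-structure $C^+$ might carry $P_i$ on several elements or on none, so one must restrict attention to well-tagged structures and argue the duality set can be chosen to ``reject'' ill-tagged ones; (ii) extending a connected-only duality to all structures, which works because $(A,\textbf{a})$ (or rather each fg-connected component of its core) is itself fg-connected after tagging, and a target maps from $(A,\textbf{a})$ iff it maps from each component, so one can replace the duality obstruction set $D$ by a set of ``generalized'' obstructions built from disjoint unions — the argument is routine but must be spelled out; and (iii) confirming bounded expansion is preserved under the unary expansion and under taking cores, both of which are immediate from the fact that neither operation adds edges to the Gaifman graph. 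None of these steps is deep, but together they constitute the actual content of the ``straightforward'' extension, so that is where I would concentrate the write-up.
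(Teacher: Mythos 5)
Your proposal is correct and follows essentially the same route as the paper: tag the designated elements with fresh unary predicates (which preserves the Gaifman graph and hence bounded expansion), reduce to the Ne\v{s}et\v{r}il--Ossona de Mendez result for connected structures without designated elements, handle disconnected components, and pass from dualities to frontiers via Lemma~\ref{lem:frontiers-and-dualities}(1). The only real difference is bookkeeping: the paper converts duality to frontier already at the connected, untagged level and then lifts \emph{frontiers} through the disjoint-union and untagging steps, whereas you carry the \emph{duality} through those steps and convert only at the end --- which is fine, and in the disconnected case even slightly simpler, since the obstruction set for a disjoint union is just the union of the components' obstruction sets rather than a set of ``generalized obstructions built from disjoint unions'' as you suggest.
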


\begin{proof} 
Ne\v{s}et\v{r}il and Ossona de Mendez~\cite{NesetrilOssona2008} stated their result for the case where $A$ is a connected structure without distinguished elements. They show that, every such structure $A$ has a finite duality w.r.t.~$\mathcal{C}$, and hence, by Lemma~\ref{lem:frontiers-and-dualities}, also a fontier w.r.t.~$\mathcal{C}$. The result extends to disconnected structures through standard arguments: let $A$ be any structure (without distinguished elements). We may assume without loss of generality that $A$ is a core (because every structure is homomorphically equivalent to its core, and hence every frontier for the latter is a frontier for the former).
Let $A_1, \ldots, A_n$ be the connected components of $A$.  
Since $A$ is a core, $A_1, \ldots, A_n$ are pairwise homomorphically incomparable.
Take all structures of the form $A_1\uplus \cdots \uplus A_{i-1}\uplus B \uplus A_{i+1} \uplus \cdots \uplus A_n$, for $B$ a structure belonging to the frontier of $A_i$ (for $i=1\ldots n$). It is 
straightforward to show that this yields a frontier for $A$. 

Next, we show how to extend this to structures with distinguished elements.
For any structure $(A,\textbf{a})$  with $\textbf{a}=a_1, \ldots, a_n$, 
let $A^{\textbf{a}}$ be the structure (without distinguished elements) over expanded schema with additional  unary predicates $P_1, \ldots, P_n$, 
where each $P_i$ denotes $\{a_i\}$. Since $A$ and $A^{\textbf{a}}$ have the same
Gaifman graph, and since $\mathcal{C}$ has bounded expansion, the same holds for
$\{A^{\textbf{a}}\mid (A,\textbf{a})\}$. Therefore, it suffices to show that 
whenever $A^{\textbf{a}}$ has a frontier w.r.t.~$\{C^{\textbf{c}}\mid (C,\textbf{c})\in \mathcal{C}\}$, then $(A,\textbf{a})$ has a frontier w.r.t.~$\mathcal{C}$.

Let $F=\{B_1, \ldots, B_m\}$ be a frontier for $A^{\textbf{a}}$ w.r.t.~$\{C^{\textbf{c}}\mid (C,\textbf{c})\in \mathcal{C}\}$.
Now consider all ways of taking a structure $B\in F$ and choosing one element per unary predicate $P_i$. 
In this way we obtain a set of structures $F'$  with distinguished elements, that we claim is a frontier for $(A,\textbf{a})$ w.r.t.~$\mathcal{C}$.
Note that any homomorphism $h:(A,\textbf{a})\to (B,\textbf{b})$ for $(B,\textbf{b})\in F'$ is a homomorphism from $A^{\textbf{a}}$ to  $B^{\textbf{b}}$, therefore since $F$ is a frontier for $A^{\textbf{a}}$, there is no such homomorphism $h$.
It is also clear that each $(B,\textbf{b})\in F'$ maps homomorphically to  $(A,\textbf{a})$. 
Finally, consider any $(C,\textbf{c})\in\mathcal{C}$ that maps to $(A, \textbf{a})$ but not vice versa. Then 
$C^{\textbf{c}}\to A^{\textbf{a}}$ and $A^{\textbf{a}}\not\to C^{\textbf{c}}$. Hence, there is a homomorphism $h:C^{\textbf{c}}\to B$ for some $B\in F$. It follows that
$h:(C,\textbf{c})\to (B,\textbf{b})\in F'$ where each $b_i=h(c_i)$.
\end{proof}


\subsection{Polynomial frontiers for c-acyclic structures}\label{sec:frontier-cacyclic}

Alexe et al.~\cite{AlexeCKT2011}, building on Foniok et al.~\cite{FoniokNT08}, show that a structure has a finite duality if and only if its core is c-acyclic. By Lemma~\ref{lem:frontiers-and-dualities}, this implies that a structure has a frontier if and only if its core is c-acyclic. 

\begin{theorem}[from~\cite{FoniokNT08,AlexeCKT2011}]
\label{thm:cacyclic}
  For all structures $(A,\textbf{a})$, the following are equivalent:
  \begin{enumerate}
      \item $(A,\textbf{a})$ has a frontier w.r.t.~the class of all structures,
      \item $(A,\textbf{a})$ is homomorphically equivalent to a c-acyclic structure,
      \item The core of $(A,\textbf{a})$ is c-acyclic
  \end{enumerate}
\end{theorem}

One of our main results is  a new proof of the right-to-left direction, which, unlike the original, provides a polynomial-time construction of a frontier from a c-acyclic structure:

\begin{theorem}\label{thm:cacyclic-polyfrontier}
Fix a schema $\mathcal{S}$ and $k\geq 0$.
Given a c-acyclic structure over $\mathcal{S}$ with $k$ distinguished elements, we can construct in polynomial time a frontier w.r.t. the class of all  structures over $\mathcal{S}$ that have
$k$ distinguished elements.
\end{theorem}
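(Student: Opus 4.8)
The plan is to build the frontier one fg-connected component at a time, and to construct, for each component, a polynomial-size set of ``slightly weakened'' structures that together capture all structures strictly below $(A,\textbf{a})$. First I would reduce to the core: by Proposition~\ref{prop:cacyclic-core} the core of a c-acyclic structure is computable in polynomial time and is again c-acyclic (a c-acyclic structure has a c-acyclic core since taking a substructure cannot create cycles in the incidence graph), so we may assume $(A,\textbf{a})$ is a c-acyclic core. Then $(A,\textbf{a})$ decomposes as the fg-disjoint union of its fg-connected components $(A_1,\textbf{a})\uplus\cdots\uplus(A_m,\textbf{a})$. The key structural feature of c-acyclicity is that, away from the designated elements, each component is tree-like: every non-designated element occurs at most once per fact and the incidence graph restricted to non-designated parts is a forest hanging off the $a_i$'s. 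This is exactly the situation where one can ``detach'' a leaf fact and re-attach a homomorphic copy of the rest, producing a structure that still maps to $(A,\textbf{a})$ but to which $(A,\textbf{a})$ does not map.

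The main construction I would carry out: for each component $(A_i,\textbf{a})$ and each ``position'' in it --- concretely, each edge of the incidence graph not incident to a designated element, or alternatively each non-designated element together with a choice of how to split the facts around it --- produce one structure $F_{i,p}$ obtained from $(A_1,\textbf{a})\uplus\cdots\uplus(A_m,\textbf{a})$ by replacing $(A_i,\textbf{a})$ with a modified version that is ``broken'' at position $p$: delete or split the relevant fact, and glue in place a fresh copy of the appropriate subtree, identifying the boundary via some homomorphism into $A$. By c-acyclicity there are only polynomially many such positions, and each $F_{i,p}$ has polynomial size. I would then verify the three frontier conditions. Condition 1, $F_{i,p}\to(A,\textbf{a})$, should hold by construction since every gadget we glue in is a homomorphic image of (a part of) $A$ and the designated elements are preserved. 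Condition 2, $(A,\textbf{a})\not\to F_{i,p}$, is where c-acyclicity is used in an essential way: a homomorphism from $A$ into $F_{i,p}$ would have to map component $(A_i,\textbf{a})$ somewhere, and since $(A,\textbf{a})$ is a core, it cannot fold into the other components or into a strictly smaller piece; the tree structure forces it to land on the ``broken'' copy of $A_i$, contradicting that we removed a fact. Condition 3 is the covering property: given any $(C,\textbf{c})\to(A,\textbf{a})$ with $(A,\textbf{a})\not\to(C,\textbf{c})$, fix a homomorphism $h\colon C\to A$; since $A\not\to C$, $h$ is not surjective ``up to homomorphism'', meaning there is some fact of some component $A_i$ not in the homomorphic image of $C$ --- equivalently $C$ ``misses'' $A$ at some position $p$ --- and then one shows $C\to F_{i,p}$ by composing $h$ with the natural map $A$-minus-that-fact $\to F_{i,p}$, being careful at the boundary.

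The hard part, I expect, will be getting condition~3 exactly right, and in particular pinning down the correct notion of ``the position where $C$ falls short of $A$'' and then defining the gadget $F_{i,p}$ so that it simultaneously satisfies condition~2 (the gadget must be weak enough that $A$ cannot map into it) and condition~3 (it must be strong enough to receive every $C$ that misses $A$ at $p$). The tension between these two requirements is the crux: the standard exponential construction sidesteps it by essentially enumerating all ways $C$ could fall short, whereas here we need each individual position to be handled by a single polynomial-size structure. I anticipate the resolution uses that, by the Chandra--Merlin correspondence and coreness, ``$C$ misses $A$ at position $p$'' can be witnessed locally --- it is determined by the behavior of $h$ on the single fact (or single element) at $p$ --- so the gadget only needs to encode finitely many local patterns, all of which embed homomorphically into a single carefully chosen modification of $A$. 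A secondary technical point will be handling designated elements that occur inside the ``tree'' part of a component (c-acyclicity allows cycles through designated elements), which may require keeping all designated elements intact in every gadget and only performing surgery on the genuinely acyclic, non-designated portion; I would treat each such designated element as a fixed ``root'' and do the leaf-detachment argument on the forest that remains after removing the designated elements.
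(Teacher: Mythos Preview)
Your high-level strategy---reduce to the core, decompose into fg-connected components, assemble the frontier of the whole from frontiers of the parts---matches the paper. The gap is in the per-component construction, and specifically in your argument for condition~3.

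You write: ``since $A\not\to C$, $h$ is not surjective `up to homomorphism', meaning there is some fact of some component $A_i$ not in the homomorphic image of $C$,'' and then propose to build the map $C\to F_{i,p}$ by composing $h$ with a map from $A$-minus-that-fact. This is false. Take $A$ with designated element $a$, one non-designated element $b$, and facts $R(a,b)$ and $S(a,b)$; this is a c-acyclic, fg-connected core. Let $C$ have designated element $a$, non-designated elements $b_1,b_2$, and facts $R(a,b_1)$, $S(a,b_2)$. The map sending $b_1,b_2\mapsto b$ is a homomorphism $h\colon C\to A$ that hits every fact of $A$, yet $A\not\to C$ since no single element of $C$ receives both an $R$- and an $S$-edge from $a$. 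There is no ``missing fact'' to delete, and composing $h$ with any projection out of $A$ cannot land you in a structure that omits a fact of $A$. The failure of $A\to C$ is not witnessed at a single global position; different elements of $C$ witness it in different directions.

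The paper handles this with a different construction. For a core, fg-connected, c-acyclic component it builds a \emph{single} frontier structure $B$: its domain consists of pairs $(a,f)$ with $a$ non-designated and $f$ a fact of $A$ containing $a$ (designated elements become $(a,\textsf{id})$), and $R((a_1,f_1),\ldots,(a_n,f_n))$ is a fact of $B$ iff $R(a_1,\ldots,a_n)$ is a fact of $A$ and some $f_i$ is a fact different from $R(a_1,\ldots,a_n)$ itself. Condition~2 follows from a walk argument exploiting c-acyclicity. For condition~3, given $h\colon C\to A$ with $A\not\to C$, one defines $h'\colon C\to B$ elementwise: for each $c$ with $h(c)=a$ non-designated, the tree structure of $A$ at $a$ (available precisely because of c-acyclicity) lets one pick a fact $f$ at $a$ pointing into a branch of $A$ that fails to map into $(C,c)$, and one sets $h'(c)=(a,f)$. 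The crucial point is that this choice is made \emph{locally at each $c$}; the single structure $B$ absorbs all such choices simultaneously. In the counterexample above, $B$ has exactly the facts $R(a,(b,S(a,b)))$ and $S(a,(b,R(a,b)))$, and $C\to B$ via $b_1\mapsto(b,S(a,b))$, $b_2\mapsto(b,R(a,b))$.

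Your intuition that the witness ``can be witnessed locally'' is on the right track, but ``locally'' has to mean \emph{per element of $C$}, not \emph{per position in $A$}; that shift is exactly what the pair construction encodes, and it is the missing idea in your proposal.
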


Note that the size of the smallest frontier is in general exponential in $k$. Indeed, consider the single-element structure $(A,\textbf{a})$ where $A$ consists of the single fact $P(a)$ and 
$\textbf{a}=a,\ldots,a$ has length $k$. It is not hard to show that every frontier of this
(c-acyclic) structure must contain, up to homomorphic equivalence, all structures of the 
form $(B,\textbf{b})$ where $B$ consists of two facts, $P(a_1)$ and $P(a_2)$, 
and $\textbf{b}\in \{a_1, a_2\}^k$ is a sequence in which both $a_1$ and $a_2$ 
occur. There are exponentially many pairwise homomorphically incomparable such structures.

The proof of Theorem~\ref{thm:cacyclic-polyfrontier} is based on a 
construction that improves over a similar but exponential construction of gap pairs for acyclic structures given in~\cite[Def.~3.9]{NesetrilTardif2000}.
Our results also shed new light on a question posed in the same paper: after presenting a double-exponential construction of duals (for connected structures without distinguished elements), involving first constructing an exponential-sized gap pair, 
the authors ask: ``\emph{It would be  interesting to know to what extent the characterisation of duals can be simplified, and whether the indirect approach via density is optimal}.'' This question appeared to have been answered in~\cite{NesetrilTardif2005}, where a direct method was established for constructing single-exponential size duals.
Theorem~\ref{thm:cacyclic-polyfrontier} together with Lemma~\ref{lem:frontiers-and-dualities}, however, gives
another answer: single-exponential duals can be constructed by going through frontiers (i.e., ``via density'') as well.

Recall the definition of fg-connectedness from the preliminaries.
We first prove a restricted version of Theorem~\ref{thm:cacyclic-polyfrontier} for the special case of core, fg-connected, c-acyclic structures with the Unique Names Property. We subsequently lift these extra assumptions.  A
structure $(A, \textbf{a})$ with $\textbf{a}=a_1, \ldots a_k$ has the \emph{Unique Names Property (UNP)} if $a_i\neq a_j$ for all $i< j$ (cf.~\cite{baader2003basic}).

\begin{proposition}\label{prop:fgconnected-cacyclic-unp}
Given a core, fg-connected, c-acyclic structure with UNP, we can construct in polynomial time (for fixed schema $\mathcal{S}$ and number of distinguished elements $k$) a frontier w.r.t. the class of all finite structures. Furthermore, the frontier consists of a single structure, which has the UNP.
\footnote{
An earlier conference version of this paper had a bug in the proof of this proposition, as was pointed out to us by 
Raoul Koudijs
(p.c.).}
\end{proposition}

\begin{proof} 
Let a core fg-connected c-acyclic structure $(A, \textbf{a})$ with UNP be given. 
To reduce notational complexity in the remainder of this proof, we will simply write $A$ instead of 
$(A, \textbf{a})$, even when referring to the structure including the distinguished elements.

Note that each fg-connected structure either (i) consists of a single fact containing only distinguished elements, or (ii) consists of a number of facts that all contain at least one non-distinguished element.
Therefore, we can distinguish two cases:

Case 1. $A$ consists of a single fact $f$ without non-distinguished elements. Let $(B, \textbf{a})$ be the structure whose domain is 
$\{a_1, \ldots, a_k, b\}$, where $\textbf{a}=a_1,\ldots,a_k$ and $b$ is a fresh value distinct from $a_1, \ldots, a_k$; and which contains all facts over this domain except $f$. It is easy to see that $(B,\textbf{a})$ is a homomorphism dual for $(A,\textbf{a})$.
Indeed, consider any structure $(C,\textbf{c})$, and let $f'$ be a copy of the fact
$f$ in which each element $a_i$ is replaced by the corresponding element $c_i$.
If $h:(A,\textbf{a})\to (C,\textbf{c})$ then
$C$ contains $f'$, therefore, $(C,\textbf{c})\not\to (B,\textbf{a})$; 
if, on the other hand, $(A,\textbf{a})\not\to (C,\textbf{c})$, then $C$ omits $f'$,
and hence, $(C,\textbf{c})\to (B,\textbf{a})$.
It follows that, the direct product $(A,\textbf{a})\times (B,\textbf{a})$
constitutes a singleton frontier for $(A, \textbf{a})$. Note that this construction is polynomial because we assume that the schema $\mathcal{S}$ and $k$ are both fixed.

Case 2. $A$ consists of one or more facts that each contain a non-distinguished element. In this case,
we construct a singleton frontier $F=\{(B,\textbf{b})\}$ where 
\begin{itemize}
    \item 
the domain of $B$ consists of  
\begin{enumerate} \item all pairs $(a,f)$ where $a$ is a non-distinguished element of $A$ and $f$ is a fact of $A$ in which $a$ occurs, and
\item All pairs $(a, \textsf{id})$ and $(a,\textsf{nd})$, where $a$ is a distinguished element of $A$ 
\end{enumerate}
    \item a fact $R((a_1,f_1),\ldots, (a_n,f_n))$ holds in $B$ if and only if 
     $R(a_1, \ldots, a_n)$ holds in $A$ and 
     at least one $f_i$ is either a fact that is different from the fact $R(a_1, \ldots, a_n)$ itself, or is $\textsf{nd}$
\item The distinguished elements $\textbf{b}$ are $b_1=(a_1, \textsf{id}), \ldots, b_n=(a_n, \textsf{id})$, for $\textbf{a}=a_1, \ldots, a_n$.
\end{itemize}
Note that, in the above construction, $\textsf{id}$ and $\textsf{nd}$ are symbols (not functions), used to simplify notation by ensuring that every element of $B$ can be written as a pair. The symbols $\textsf{id}$ and $\textsf{nd}$, intuitively, stand for ``identity'' and ''non-distinguished copy'', 

We claim that $F=\{(B,\textbf{b})\}$ is a frontier for $A$.

It is clear that the natural projection $h:(B,\textbf{b})\to (A,\textbf{a})$ is a homomorphism.

We claim that there is no homomorphism $h':(A,\textbf{a})\to (B,\textbf{b})$. Assume, for the sake of a contradiction, that there was such a homomorphism.
By Lemma~\ref{lem:core-injective}, we may assume that 
the composition of $h$ and $h'$ is the identity function on $A$. 
In particular, this means that $h'$ maps each distinguished element $a$ to $(a, \textsf{id})$
and for each non-distinguished element $a$ of $A$, $h'(a) = (a,f)$ for some fact $f$.
For a non-distinguished element $a$, let us denote by $f_a$ the unique fact 
$f$ for which $h'(a)=(a,f_a)$.

We will consider ``walks'' in $A$ of the form 
 $$ a_1 \xrightarrow{f_{a_1}} a_2 \xrightarrow{f_{a_2}} \ldots a_n$$
with $n\geq 1$, where 
\begin{enumerate}
  \item $a_1, \ldots, a_n$ are non-distinguished elements,
\item $f_{a_i} \neq f_{a_{i+1}}$, and
\item $a_i$ and $a_{i+1}$ co-occur in fact $f_{a_i}$,
\end{enumerate}
Since $A$ is c-acyclic, the length of any such sequence is bounded by the diameter of $A$ (otherwise some fact would have to be traversed twice in succession, which would violate condition 2). 
Furthermore,  trivially, such a walk of length $n=1$ exists: just choose as $a_1$ an arbitrary non-distinguished element of $A$. Furthermore, we claim that any such finite sequence can be extended to a longer one: let the fact $f_{a_n}$ be 
of the form $R(b_1, \ldots, b_m)$ (where $a_n = b_i$ for some $i\leq m$).
Since $h$ is a homomorphism, it must map 
$f_{a_n}$
to some fact
$R((b_1,f_{b_1}), \ldots, (b_m, f_{b_m}))$ of $B$, where 
some $f_{b_j}$ is a fact that is different from $f_{a_n}$. We can choose $b_j$ as our element $a_{n+1}$.
Thus, we reach our desired contradiction.

Finally, consider any $C$ with $h:C\to A$ and $A\not\to C$.
We construct a function $h':C\to B$ as follows: consider any element $c$ of $C$, and
let $h(c)=a$. If $c$ is a distinguished element (in which case $a$ is, too), we set $h'(c)=(a,\textsf{id})$. If $c$ is not a distinguished element but $a$ is, we set
$h'(c)=(a,\textsf{nd})$. Otherwise, 
we proceed as follows: since $A$ is c-acyclic and fg-connected, for each non-distinguished element $a'$ of $A$ (other than $a$ itself) there is a unique minimal path in the incidence graph, containing only non-distinguished elements, from $a'$ to $a$. We can represent this 
path by a sequence of the form \[a'=a_0 \xrightarrow{(f_0,i_0,j_0)} a_1 \xrightarrow{(f_1,i_1,j_1)} a_2 \cdots \xrightarrow{(f_{n-1},i_{n-1},j_{n-1})} a_n = a\]
where each $f_\ell$ is a fact of $A$ in which $a_\ell$ occurs in the $i_\ell$-th position
and $a_{\ell+1}$ occurs in the $j_\ell$-th position.
We can partition the non-distinguished elements $a'$ of $A$  (other than $a$ itself) according to the last fact on this path, that is, $f_{n-1}$.
Furthermore, it follows from fg-connectedness that each fact of $A$ contains a non-distinguished element. It is easy to see that if a fact contains multiple non-distinguished elements (other than $a$) then they must all belong to the same part of the partition as defined above. Therefore, the above partition on non-distinguished elements
naturally extends to a partition on the facts of $A$. Note that if $A$ contains any facts 
in which $a$ is the only non-distinguished element, we will refer to these facts as ``local facts'' and they will be handled separately.
In this way, we have essentially decomposed $A$ into a union
$A_{\text{local}}\cup\bigcup_i A_i$, where $A_{\text{local}}$ contains all local facts and each ``component'' $A_i$ is a substructure of $A$ consisting of non-local facts, in such a way that
(i) different substructures $A_i$ do not share any facts with each other,
(ii) different substructures do not share any elements with each other, except for 
$a$ and distinguished elements (from $\textbf{a}$), (iii) each $A_i$ contains precisely one fact involving $a$. 

Since we know that $(A, a)\not\to (C, c)$, 
it follows that either some local fact $f$ of $A$ does not map to $C$ (when sending $\textbf{a}, a$ to $\textbf{c}, c$), or some ``component'' $A_i$ of $A$ does not map to $C$ through any homomorphism sending $\textbf{a}, a$ to $\textbf{c}, c$. 
In the first case, we choose such local fact $f$ and set $h'(a) = (a,f)$. In the second case, we choose such a component
(if there are multiple, we choose one of minimal size)
and let $f$ be the unique fact in that component containing $a$ (that is, $f$ is the 
fact $f_{n-1}$ that by construction connected the non-distinguished elements of the component in question to $a$). We set $h'(c) = (a,f)$. 
Intuitively, when $h'(c) = (h(c), f)$, then $f$ is a fact of $A$ involving $h(c)$ that ``points in a direction where homomorphism from $(A, h(c))$ back to $(C, c)$ fails''. 

We claim that $h'$ is a homomorphism from $C$ to $B$: let $R(c_1, \ldots, c_n)$ be a fact of $C$. Then $R(h(c_1), \ldots, h(c_n))$ holds in $A$. 
Let $h'(c_i)=(h(c_i), f_i)$ as constructed above (where, we recall, $f_i=\textsf{id}$ if $c_i$ is a distinguished element, and $f_i=\textsf{nd}$ if $c_i$ is not a distinguished element but $h(c_i)$ is).
Also recall that at least one $c_i$ is a non-distinguished element. 
To show that 
$R(h'(c_1), \ldots, h'(c_n))$ holds in $B$, it suffices to show that
some $f_i$ is different from the fact $R(h(c_1), \ldots, h(c_n))$ itself, or is equal to $\textsf{nd}$.
If some non-distinguished $c_i$ is mapped by $h$ to a distinguished element, then 
$h'(c_i)=(c_i,\textsf{nd})$, and we are done. This leaves us with the case where
some $c_i$ is a non-distinguished element, and for all non-distinguished $c_i$,
$h'(c_i)$ is of the form $(h(c_i),f_i)$ for a fact $f_i$.
If one of these $f_i$ is a local fact, then it follows immediately from the construction that
$f_i \neq R(h(c_1), \ldots, h(c_n))$.
Otherwise, let $n_i$ be the size of the smallest ``component'' (as defined above) of $(A,h(c_i))$ that does not homomorphically map to $(C,c_i)$, and choose an element $c_i$ with minimal $n_i$. Then, clearly, $f_i$ must be different from the fact $R(h(c_1), \ldots, h(c_n))$ itself.
\end{proof}

Next, we remove the assumptions of fg-connectedness and being a core.

\begin{proposition}\label{prop:unp-disconnected}
Given a c-acyclic structure with UNP, we can construct in polynomial time a frontier w.r.t. the class of all finite structures. Furthermore, the frontier consists of structures that have the UNP.
\end{proposition}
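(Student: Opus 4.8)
The plan is to reduce to Proposition~\ref{prop:fgconnected-cacyclic-unp} by two successive simplifications. First, replace $(A,\textbf{a})$ by its core $(A_0,\textbf{a})$, which by Proposition~\ref{prop:cacyclic-core} can be computed in polynomial time; it is still c-acyclic (its incidence graph is a subgraph of that of $(A,\textbf{a})$) and still has the UNP (a core retains all distinguished elements). Since $(A,\textbf{a})$ and $(A_0,\textbf{a})$ are homomorphically equivalent, a frontier for $(A_0,\textbf{a})$ w.r.t.\ all structures is also a frontier for $(A,\textbf{a})$, so we may assume $(A,\textbf{a})$ is a core. Second, write $(A,\textbf{a})=(A_1,\textbf{a})\uplus\cdots\uplus(A_m,\textbf{a})$ as the fg-disjoint union of its fg-connected components, computable in polynomial time. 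Each $(A_i,\textbf{a})$ is fg-connected by definition, and c-acyclic and UNP as a substructure with the same distinguished elements; it is moreover a core, since any homomorphism from $(A_i,\textbf{a})$ to a proper substructure would extend, by the identity on the other components (which share only distinguished elements with $A_i$), to a homomorphism from $(A,\textbf{a})$ to a proper substructure. In particular $(A_i,\textbf{a})\not\to(A_j,\textbf{a})$ for $i\neq j$. Hence Proposition~\ref{prop:fgconnected-cacyclic-unp} applies to each $(A_i,\textbf{a})$, yielding in polynomial time a singleton frontier $\{(B_i,\textbf{a})\}$ with the UNP. We will additionally use the natural projection $h_i:(B_i,\textbf{a})\to(A_i,\textbf{a})$ from that construction and the fact, read off directly from its proof, that the homomorphism witnessing condition~3 of that frontier for a structure mapping into $A_i$ can always be chosen to compose with $h_i$ back to the given homomorphism into $A_i$.

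The proposed frontier for $(A,\textbf{a})$ is $F=\{(F_i,\textbf{a})\mid 1\le i\le m\}$, where $(F_i,\textbf{a})=(B_i,\textbf{a})\uplus\biguplus_{j\neq i}(A_j,\textbf{a})$ is obtained from $(A,\textbf{a})$ by replacing the $i$-th component by $(B_i,\textbf{a})$. This is polynomial-time computable and each $(F_i,\textbf{a})$ has the UNP. Condition~1 of the definition of a frontier is immediate, since $(B_i,\textbf{a})\to(A_i,\textbf{a})$, every $(A_j,\textbf{a})\to(A,\textbf{a})$, and the fg-disjoint union is the join operation. For condition~2, suppose $g:(A,\textbf{a})\to(F_i,\textbf{a})$ and compose with the obvious projection $\pi_i:(F_i,\textbf{a})\to(A,\textbf{a})$ (the identity on each $A_j$-part and $h_i$ on the $B_i$-part): this is an endomorphism of the core $(A,\textbf{a})$, hence an automorphism $\sigma$, and replacing $g$ by $g\circ\sigma^{-1}$ we may assume $\pi_i\circ g=\mathrm{id}$. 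Because distinct components share only distinguished elements, every non-distinguished element of $A_i$ is then sent by $g$ into $\mathrm{dom}(B_i)$; distinguishing the two cases of the proof of Proposition~\ref{prop:fgconnected-cacyclic-unp} (whether $A_i$ is a single all-distinguished fact or all its facts contain a non-distinguished element) one checks that either $g$ restricts to a homomorphism $(A_i,\textbf{a})\to(B_i,\textbf{a})$, contradicting condition~2 for $\{(B_i,\textbf{a})\}$, or $g$ must realize a fact of $A_i$ that is not present in $F_i$ at all.

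For condition~3, let $(C,\textbf{c})\to(A,\textbf{a})$ via $e$ with $(A,\textbf{a})\not\to(C,\textbf{c})$. As the fg-disjoint union is the join, $(A_i,\textbf{a})\not\to(C,\textbf{c})$ for some $i$; fix it. Let $(\tilde C_i,\textbf{c})$ be the substructure of $(C,\textbf{c})$ consisting of the elements whose $e$-image lies in $\mathrm{dom}(A_i)$ and the facts whose $e$-image is a fact of $A_i$. Then $e$ restricts to a homomorphism $(\tilde C_i,\textbf{c})\to(A_i,\textbf{a})$, and $(A_i,\textbf{a})\not\to(\tilde C_i,\textbf{c})$ since $\tilde C_i$ is a substructure of $C$. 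Condition~3 for $\{(B_i,\textbf{a})\}$ yields a homomorphism $g_i:(\tilde C_i,\textbf{c})\to(B_i,\textbf{a})$ which, by the extra property noted above, satisfies $h_i\circ g_i=e$ on $\mathrm{dom}(\tilde C_i)$. Define $e':C\to F_i$ by $e'(c)=g_i(c)$ if $e(c)\in\mathrm{dom}(A_i)\setminus\textbf{a}$ and $e'(c)=e(c)$ otherwise. This is well defined, fixes the distinguished elements, and is a homomorphism: a fact of $C$ whose $e$-image lies in a component $A_j$ with $j\neq i$ has all its arguments sent outside $\mathrm{dom}(A_i)\setminus\textbf{a}$, so $e'$ agrees with $e$ on it and it lands in $A_j$; a fact of $C$ whose $e$-image lies in $A_i$ belongs to $\tilde C_i$, and on it $e'$ agrees with $g_i$ (the identity $h_i\circ g_i=e$ handles the arguments mapped to distinguished elements, since the only element of $B_i$ over a distinguished element is that element itself), so it lands in $B_i$. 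Hence $(C,\textbf{c})\to(F_i,\textbf{a})\in F$.

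The step I expect to be the main obstacle is condition~3: one must pin down the right ``projection'' $\tilde C_i$ of $C$ onto the $i$-th component and, more importantly, exploit that the homomorphism into the component frontier $B_i$ supplied by Proposition~\ref{prop:fgconnected-cacyclic-unp} is compatible with $h_i:B_i\to A_i$ — without this, the pieces of $e'$ defined on the different components need not agree on the shared distinguished elements and would fail to glue. Condition~2 is the second delicate point, where it is essential that $(A,\textbf{a})$ has been reduced to a core so that any homomorphism $A\to F_i$ can be normalized through $\pi_i$.
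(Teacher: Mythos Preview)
Your approach mirrors the paper's exactly: pass to the core, split into fg-connected components, apply Proposition~\ref{prop:fgconnected-cacyclic-unp} to each, and assemble $F_i$ by replacing one component at a time by its singleton frontier. Your verification of conditions~1 and~2 is correct and somewhat more explicit than the paper's.

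The gap is in condition~3. Your gluing of $g_i$ with $e$ rests on the claim that ``the only element of $B_i$ over a distinguished element is that element itself.'' This holds in Case~2 of Proposition~\ref{prop:fgconnected-cacyclic-unp} (each element of $B_i$ is a pair $(a,f)$, and $f=\textsf{id}$ is forced when $a$ is distinguished), but it fails in Case~1, where $B_i=A_i\times D$ and over each distinguished $a_\ell$ sit all pairs $(a_\ell,\cdot)$. In fact the issue lies with the construction itself, not only with your verification. Take $\mathcal{S}=\{R,S\}$ binary, $k=2$, and $A$ with distinguished $(a_1,a_2)$ and facts $R(a_1,a_2),\,S(a_1,a_2)$; the fg-components are $A_1=\{R(a_1,a_2)\}$ and $A_2=\{S(a_1,a_2)\}$, both in Case~1. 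Let $C$ have distinguished $(c_1,c_2)$, an extra element $d$, facts $R(d,c_2),\,S(d,c_2)$, and $e(d)=a_1$. Then $C\to A$ and $A\not\to C$, yet $C$ maps to neither $F_1=B_1\uplus A_2$ nor $F_2$: the only $S$-fact in $F_1$ is $S(a_1,a_2)$, forcing $d\mapsto a_1$, and then $R(d,c_2)$ would have to land on $R(a_1,a_2)$, which is absent from $F_1$. So the proposed $F$ is not a frontier here. The paper's own argument glosses over the same point, asserting an fg-disjoint decomposition $C=C_1\uplus C_2$ that in this example fails because both parts contain the non-distinguished element $d$.
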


\begin{proof}
By Proposition~\ref{prop:cacyclic-core}, we may assume that $(A, \textbf{a})$ is a core. 
Note that the c-acyclicity and UNP properties are preserved under the passage from a structure to its core.

Let $(A, \textbf{a})$ be a structure with distinguished elements that is UNP and that is a fg-disjoint union of homomorphically incomparable fg-connected structures
$(A_1,\textbf{a}), \ldots, (A_n, \textbf{a})$. 
By Proposition~\ref{prop:fgconnected-cacyclic-unp},  $(A_1,\textbf{a}), \ldots, (A_n, \textbf{a})$ have, respectively, frontiers 
 $F_1, \ldots, F_n$, 
each consisting of a single structure with the UNP.
We may assume without loss of generality that each $F_i$ consists of a structure that have the same distinguished elements $\textbf{a}$ (we know that the structures in question have the UNP, and therefore, modulo isomorphism, we can assume that the distinguished elements are precisely $\textbf{a}$).
Let $F_i=\{(B_i,\textbf{a})\}$.

We claim that 
$F = \{ \big(\biguplus_{j\neq i} (A_j,\textbf{a})\big) \uplus (B_i,\textbf{a}) \mid 1\leq i\leq n\}$
is a frontier for $(A,\textbf{a})$ w.r.t.~$\mathcal{C}$.

Clearly, each structure in $F$ maps homomorphically to $A$.

Suppose, for the sake of contradiction, that there is a homomorphism $h: (A,\textbf{a}) \to \big(\biguplus_{j\neq i} (A_j,\textbf{a})\big) \uplus (B_i,\textbf{a})$ for some $i$. Observe that $h$ must send each distinguished element to itself, and it must send each non-distinguished element to a non-distinguished element (otherwise, the composition of $h$ with the backward homomorphism would be a non-injective endomorphism on  $(A,\textbf{a})$ which would contradict the fact that $(A,\textbf{a})$ is a core).
Since $(A_i,\textbf{a})$ is fg-connected (and because $h$ cannot send non-distinguished elements to distinguished elements), its $h$-image must be contained either in some $(A_j,\textbf{a})$ ($j\neq i$) or in $B$. 
The former cannot happen because $A_i$ and $A_j$ are homomorphically incomparable.
The latter cannot happen either, because $B$ belongs to a frontier of $A_i$.

Finally, let $(C,\textbf{c})\in\mathcal{C}$ be any structure such that there is a homomorphism $h:(C,\textbf{c})\to (A,\textbf{a})$ but $(A,\textbf{a})\not\to (C,\textbf{c})$.
Let $(A_i,\textbf{a})$ be a fg-connected component of $(A,\textbf{a})$ such that $(A_i,\textbf{a})\not\to (C,\textbf{c})$.
Since $(A_i,\textbf{a})$ is fg-connected, we can partition our structure $(C,\textbf{c})$ as $(C_1,\textbf{c})\uplus (C_2,\textbf{c})$ where the $h$-image of $C_1$ is contained in $(A_i,\textbf{a})$ while the $h$-image of $C_2$ is disjoint from $A_i$ except possibly for the 
distinguished elements.
We know that $(A_i,\textbf{a})\not\to (C_1,\textbf{c})$ and therefore $(C_1,\textbf{c})\to (B_i,\textbf{a})$. 
Furthermore, we have that $(C_2,\textbf{c})\to \biguplus_{j\neq i} (A_j,\textbf{a})$. Therefore,
$(C,\textbf{c})\to \big(\biguplus_{j\neq i} (A_j,\textbf{a})\big) \uplus (B_i,\textbf{a})$.
\end{proof}

Finally, we can prove Theorem~\ref{thm:cacyclic-polyfrontier} itself.

\begin{proof}[Proof of Theorem~\ref{thm:cacyclic-polyfrontier}]
Let $(A,\textbf{a})$ be c-acyclic. 
If it has the UNP, we are done. Consider the other case, where the sequence $\textbf{a}$ contains repetitions. 
Let $\textbf{a}' = a'_1, \ldots, a'_n$ consists of the same elements without repetition (in some order). 
We construct a frontier for it as follows:
\begin{enumerate}
  \item Consider the structure $(A, \textbf{a}')$, which, by construction, has the UNP. Let $F$ be a frontier for $(A, \textbf{a}')$ (again consisting of structures with the UNP), using Proposition~\ref{prop:unp-disconnected}. Note that, through isomorphism, we may assume that each structure in $F$ has the same 
  distinguished elements $\textbf{a}'$. For each  $(B,\textbf{a}')\in F$, we take the structure $(B,\textbf{a})$. 
  \item Let  $k$ be the length of the tuple $\textbf{a}$.
  For each function $f:\{1, \ldots, k\}\to \{1, \ldots, k\}$,  whose range has size strictly greater than $n$, consider structure $(C, \textbf{c}^f)$ where $C$ contains all facts over the domain $\{1, \ldots, k\}$, and $c^f_i = f(i)$. We take its direct product with $(A,\textbf{a})$.
\end{enumerate}
It is easy to see that the set of all structures constructed above, constitutes a frontier for $(A, \textbf{a})$. Indeed, suppose a structure maps to $(A,\textbf{a})$ but not vice versa. 
If the tuple of distinguished elements of the structure in question has the same identity type as the tuple $\textbf{a}$
(i.e., the same equalities hold between values at different indices in the tuple)
then it is easy to see that the structure in question must map to some structure $(B, \textbf{a})$ as constructed under item 1 above. Otherwise, if
the tuple of distinguished elements of the structure in question does \emph{not} have the same identity type, then it is easy to see that the structure in question must map to $(C^f, \textbf{c}^f)\times (A, \textbf{a})$, as constructed under item 2 above, where $f$ reflects the identity type of the distinguished elements of the structure in question.
\end{proof}

As a corollary of Theorem~\ref{thm:cacyclic-polyfrontier}, we obtain the following
interesting by-product:

\begin{theorem} \label{thm:testing-frontier}
Fix a schema $\mathcal{S}$ and $k\geq 0$.
The following problem is solvable in NP: given a finite set of structures $F$ and a structure $A$ (all with $k$ distinguished elements), is $F$ a frontier for $A$ w.r.t.~the class of all structures? 
If $\mathcal{S}$ contains a binary relation and $k=3$, then it is NP-complete.
\end{theorem}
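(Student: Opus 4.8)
The plan is to establish both the membership-in-NP claim and the NP-hardness claim separately, since they require quite different ideas.

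\textbf{Membership in NP.} Given $F = \{(B_1,\textbf{b}_1),\ldots,(B_m,\textbf{b}_m)\}$ and $(A,\textbf{a})$, I want to verify the three conditions of the frontier definition in nondeterministic polynomial time. Conditions 1 and 2 are easy: condition 1 asks for a homomorphism $(B_i,\textbf{b}_i)\to(A,\textbf{a})$ for each $i$, which can be guessed and checked in polynomial time; condition 2 asks that $(A,\textbf{a})\not\to(B_i,\textbf{b}_i)$, a co-NP condition, but note that by Theorem~\ref{thm:cacyclic-polyfrontier} and Theorem~\ref{thm:cacyclic}, if $F$ is genuinely a frontier then $A$ must be homomorphically equivalent to a c-acyclic structure — however we cannot assume that up front. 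The real difficulty is condition 3, which quantifies universally over \emph{all} structures $C$. The key observation is that if $(A,\textbf{a})$ does not have a finite frontier at all (equivalently, its core is not c-acyclic, by Theorem~\ref{thm:cacyclic}), then the answer is trivially ``no''. So the nondeterministic algorithm first guesses the core $(A_0,\textbf{a})$ of $(A,\textbf{a})$ together with the two homomorphisms witnessing homomorphic equivalence, checks that $A_0$ is c-acyclic (a polynomial-time syntactic check on the incidence graph), and then — using the polynomial-time construction of Theorem~\ref{thm:cacyclic-polyfrontier} — \emph{deterministically} computes a canonical frontier $F_0$ for $(A_0,\textbf{a})$, hence for $(A,\textbf{a})$. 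It then remains to check that the guessed $F$ is equivalent to the canonical $F_0$ in the sense that every structure covered by $F_0$ is covered by $F$ and vice versa. Concretely: for each $(B,\textbf{b})\in F_0$ we must have $(B,\textbf{b})\to(B',\textbf{b}')$ for some $(B',\textbf{b}')\in F$ (guess the witnessing homomorphism), and conversely each $(B',\textbf{b}')\in F$ must satisfy conditions 1 and 2 and map into $F_0$; together with condition 2 for $A$ against each $B'$ (which, since $A$'s core is c-acyclic, is a polynomial-time test by Proposition~\ref{prop:cacyclic-core} after replacing $A$ by its core) this certifies that $F$ and $F_0$ cover exactly the same down-set, i.e., $F$ is a frontier. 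All guesses are of polynomial size and all checks are polynomial-time, so the whole procedure is in NP. The main obstacle here is making precise the claim that ``$F$ covers the same structures as $F_0$'' can be certified by finitely many homomorphism guesses; this works because coverage of a down-set generated by a finite set is determined by where the \emph{generators} (the members of each finite set) go, which reduces the universal quantifier over $C$ to a quantifier over the members of $F_0$ and $F$.

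\textbf{NP-hardness for some $\mathcal{S},k$.} I would reduce from a known NP-hard problem about homomorphisms or about $F$ being trivial. A clean route: fix $k=0$ and a schema rich enough that homomorphism testing is NP-hard (e.g., a single binary relation, encoding $3$-colorability). Take the target $A = K_3$ (the triangle), which is a core and is c-acyclic... actually $K_3$ is not Berge-acyclic, so instead pick $A$ to be a c-acyclic core with a known polynomial frontier $F_0 = \{B_0\}$ (a singleton, via Proposition~\ref{prop:fgconnected-cacyclic-unp}). Now, given an arbitrary instance structure $G$ of the $\mathrm{NP}$-hard homomorphism problem ``$G \to H$?'' for a suitable fixed core $H$, build a candidate set $F = \{B_0\} \cup \{(B_0 \times D_G)\}$ or similar, engineered so that $F$ is a frontier for $A$ \emph{if and only if} the added structure is subsumed by $B_0$, which in turn encodes the homomorphism question. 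More simply, I expect the cleanest hardness argument tests whether a \emph{given} structure $A$ (not guaranteed c-acyclic) has the guessed $F$ as a frontier by reducing ``does $A\to B$?'' for fixed $B$: one can often arrange that $\{B\}$ is a frontier for $A$ exactly when a certain homomorphism exists. The main obstacle is controlling condition 3 — ensuring the reduction gadget does not accidentally create new structures in the down-set of $A$ not covered by $F$; this requires a careful gadget where the ``hard'' instance is inserted as a pendant fg-connected component whose homomorphic behavior is isolated, so that $F$ being a frontier is equivalent to a single homomorphism existence statement on that component. I would present only the reduction and the correctness statement, deferring the routine verification that the gadget has the claimed covering behavior.

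Overall, I expect the NP upper bound to be the conceptually substantive part — pinning down that a polynomial-size certificate suffices despite the universal quantifier in condition 3 — with the hardness following from a relatively standard homomorphism-problem gadget once the frontier-coverage behavior of pendant components is understood.
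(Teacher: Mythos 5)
Your NP upper bound is essentially the paper's argument: guess a c-acyclic structure homomorphically equivalent to $A$ (the paper guesses a substructure, you guess the core --- same thing), build the canonical frontier $F_0$ via Theorem~\ref{thm:cacyclic-polyfrontier}, and certify that $F$ and $F_0$ mutually cover each other by guessing homomorphisms between their members. Your closing observation --- that mutual coverage of the generators is equivalent to $F$ being a frontier, which is what collapses the universal quantifier in condition~3 --- is exactly the right justification, and the extra checks you add (conditions 1 and 2 for each member of $F$) are redundant but harmless, since both already follow from mutual coverage with $F_0$. This half is correct.

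The NP-hardness half has a genuine gap: you never actually produce a reduction. You start with $K_3$, correctly notice it is not c-acyclic for $k=0$, and then retreat to ``pick $A$ to be a c-acyclic core with a known singleton frontier'' and ``build $F = \{B_0\}\cup\{B_0\times D_G\}$ or similar, engineered so that\dots'' --- but the engineering is the entire content of the hardness proof, and what you defer as ``routine verification'' is in fact the construction itself. Two concrete points. First, the obstacle you hit has a clean fix you missed: take $k=3$ and let $A$ be the complete directed graph on three elements \emph{with each element a designated element}; then every cycle of the incidence graph passes through a designated element, so $A$ is c-acyclic and has a computable frontier $F$. Second, the paper's reduction perturbs the \emph{target structure} rather than the candidate frontier: given a graph $G$, one asks whether the fixed $F$ is a frontier for $A\uplus G$. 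If $G$ is 3-colorable then $A\uplus G$ is homomorphically equivalent to $A$ and $F$ remains a frontier; if not, then $A$ itself sits strictly below $A\uplus G$ in the homomorphism order but maps to no member of $F$, so condition~3 fails. Your proposed strategy of keeping $A$ fixed and inserting the hard instance into $F$ (or as a ``pendant component'' whose role is unspecified) would need you to control condition~3 over all structures below $A$, which is precisely the part you flag as delicate and do not resolve. As written, the hardness claim is not established.
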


\begin{proof}
For the upper bound, we use the fact that, 
if  $A$ is homomorphically equivalent to a c-acyclic structure $A'$,
then the core of $A$ is c-acyclic (cf.~Theorem~\ref{thm:cacyclic}).
The problem can therefore be solved in non-deterministic polynomial time as follows:

First we guess a substructure $A'$ and we verify that $A'$
is c-acyclic and homomorphically equivalent to $A$. Note that the existence of such $A'$ is a necessary precondition for $F$ to be a frontier of $A$. Furthermore, c-acyclicity can be checked in polynomial time using any PTIME algorithm for graph acyclicity (recall that a structure is c-acyclic if and only if its incidence graph is acyclic after removing all nodes corresponding to distinguished elements).

Next, we apply Theorem~\ref{thm:cacyclic-polyfrontier} to construct a frontier $F'$ for $A'$ (and hence for $A$).
Finally, we verify that each $B\in F$ homomorphically maps to some $B'\in F'$ and, vice versa, every $B'\in F'$ homomorphically maps to some $B\in F$. It is not hard to see that this non-deterministic algorithm has an accepting run if and only if $F$ is a frontier for $A$.

For the lower bound, we reduce from graph 3-colorability. Let $A$ be the structure,
over a 3-element domain, that consists of the facts $R(a,b)$ for all pairs $a,b$ with
$a\neq b$. In addition, each of the three elements is named by a constant symbol.
Since $A$ is c-acyclic, by Theorem~\ref{thm:cacyclic}, it has a frontier $F$.
Now, given any graph $G$ (viewed as a relational structure with binary relation $R$ and without constant symbols), we have that $G$ is 3-colorable if and only 
if $F$ is a frontier for the disjoint union of $A$ with $G$. To see that this is the case, note that if $G$ is 3-colorable, then the disjoint union of $A$ with $G$ is homomorphically equivalent to $A$ itself, whereas if $G$ is not 3-colorable, then the disjoint union of $A$ with $G$ is strictly greater than $A$ in the homomorphism order.
\end{proof}

\subsection{A polynomially frontier-closed class of structures}\label{sec:frontier-closed}
We call a class $\mathcal{C}$ of structures  \emph{frontier-closed} if every structure $(A,\textbf{a})\in\mathcal{C}$ has a frontier w.r.t.~$\mathcal{C}$, consisting of structures belonging to $\mathcal{C}$.
If, moreover, the frontier in question can be constructed from $(A,\textbf{a})$ in polynomial time, then we say that $\mathcal{C}$ is \emph{polynomially frontier-closed}. 

\begin{theorem}\label{thm:frontier-closed}
Fix a schema $\mathcal{S}$ and $k\geq 1$.
The class of c-connected acyclic structures with $k$ distinguished elements is polynomially frontier-closed.~\footnote{Note that for structures with one distinguished element,
c-connectedness is the same as connectedness.}
\end{theorem}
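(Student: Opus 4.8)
The plan is to adapt the single-structure frontier construction from Proposition~\ref{prop:fgconnected-cacyclic-unp} so that, when the input is a c-connected acyclic structure $(A,a)$ with one designated element, the resulting frontier member(s) are themselves c-connected and acyclic. For structures with one distinguished element, c-connectedness coincides with connectedness, and acyclicity of $(A,a)$ (Berge-style acyclicity of the incidence graph once we strip out the designated node) is a much more rigid condition than mere c-acyclicity: every cycle in the incidence graph must pass through $a$. I would first reduce to the case where $(A,a)$ is a core, noting (as in Proposition~\ref{prop:cacyclic-core} and the proof of Proposition~\ref{prop:unp-disconnected}) that taking the core preserves connectedness, acyclicity, and the single distinguished element — and that one designated element automatically gives the UNP. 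So the real work is a ``frontier-closed'' refinement of Case~2 of Proposition~\ref{prop:fgconnected-cacyclic-unp}.

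The key idea is that acyclicity lets us replace the product-like gadget on pairs $(a,f)$ by something tree-shaped. Concretely, I would build, for each fact $f$ of $A$ and each occurrence of a non-designated element $a$ in $f$, a ``pointed copy'' of $A$ that has been unfolded away from that occurrence, and then glue these copies together through the designated element $a$ (and through the structure surrounding $a$). Because $A$ is acyclic away from $a$, each such unfolding is finite and is itself acyclic, and the gluing along the single node $a$ does not create new cycles avoiding $a$ — so the whole construction stays c-connected and acyclic. The three frontier conditions would then be checked exactly as before: (1) a natural folding map sends the construction back onto $A$; (2) there is no homomorphism $A\to B$, proved by the same ``walk'' argument (a homomorphism $A\to B$, composed with the folding map, would be an automorphism of the core $A$, and tracing which unfolded copy each element lands in yields an ever-extending non-repeating walk in the acyclic incidence graph of $A$, a contradiction); (3) given $C\to A$ with $A\not\to C$, one decomposes $A$ around $a$ into ``components'' hanging off $a$ plus local facts exactly as in Proposition~\ref{prop:fgconnected-cacyclic-unp}, picks for each $c\in C$ a component or local fact of $(A,h(c))$ that fails to map back to $(C,c)$, and uses this choice to land $c$ in the appropriate unfolded copy, verifying fact-by-fact that this is a homomorphism into $B$.

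The main obstacle I anticipate is showing that the unfolding stays \emph{polynomial} in size while remaining acyclic. Naively unfolding a tree-like structure "away from a node in a given direction" can blow up, but here acyclicity is our friend: since the incidence graph of $A$ minus $a$ is a forest, each rooted unfolding is essentially that forest re-rooted, so its size is linear in $|A|$, and there are only $O(|A|)$ choices of fact/occurrence to unfold from, giving an overall quadratic bound; I would need to state this carefully and confirm that the gluing through $a$ (which may appear in several facts, and near which $A$ may have its only cycles) does not force exponential duplication. A secondary subtlety is the local-fact case — facts in which $a$ is the only non-designated element — which, just as in Proposition~\ref{prop:fgconnected-cacyclic-unp}, must be handled as a separate ``direction'' and checked not to introduce a cycle through two occurrences of $a$. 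With these points pinned down, conditions (1)–(3) transfer essentially verbatim from the proof of Proposition~\ref{prop:fgconnected-cacyclic-unp}, and the reduction to the core at the start completes the argument.
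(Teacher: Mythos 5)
There is a genuine gap at the heart of your construction: you build one ``unfolded copy'' per fact/occurrence and then \emph{glue these copies together through the designated element}, and you subsequently verify conditions (1)--(3) for a single structure $B$. A singleton frontier cannot exist here. Take $A$ consisting of the two facts $R(a_0,b)$ and $S(a_0,c)$ with $a_0$ designated. The structures $C_1=\{R(a_0,b)\}$ and $C_2=\{S(a_0,c)\}$ are both acyclic, connected, map to $(A,a_0)$, and admit no homomorphism from it, so both would have to map to $B$; but then $B$ has both an $R$-successor and an $S$-successor of its designated element, whence $A\to B$, violating condition (2). More generally, gluing the ``weakened in direction $i$'' copies at the designated element reassembles $A$ inside $B$. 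The paper's proof avoids this by producing a genuine \emph{set} $\mathcal{F}_{a_0}=\{H^1,\dots,H^n\}$ with one member per child $a'_i$ of the root: in $H^i$ only the subtree $A|a'_i$ is removed and replaced by copies of the \emph{recursively computed} frontier $\mathcal{F}_{a'_i}$ of that subtree, attached as siblings. Gluing of the kind you propose is legitimate only one level down, inside each $H^i$, never across the different $H^i$. Relatedly, the ``walk'' argument from Proposition~\ref{prop:fgconnected-cacyclic-unp} that you invoke for condition (2) does not transfer verbatim: it hinges on the per-fact condition ``at least one pointer differs'' of the pair construction, which has no analogue in a glued union of tree unfoldings; the paper instead proves (2) by induction on the rank of the subtree, using coreness at each step.

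Two further points. First, your construction makes no provision for homomorphisms $C\to A$ whose image ``escapes'' upward or sideways: in the paper's proof each frontier member $F\in\mathcal{F}_{a_0}$ must additionally be augmented to $F^*$ by attaching a fresh full copy of $A$ at every non-root node (Definition~\ref{def:F*}), precisely to absorb elements of $C$ that $h$ sends toward the parent of their image or onto an edge absent from the weakened subtree. Without this, condition (3) already fails for simple paths. This augmentation is also why the honest size bound is $O(\mathrm{size}(A)^4)$ rather than the quadratic bound you conjecture (the recursion alone already costs $\mathrm{size}(A)^3$ for $\mathrm{totalsize}(\mathcal{F}_{a_0})$). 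Second, your worry about local facts containing two occurrences of the designated element is moot: acyclicity (as opposed to c-acyclicity) of the incidence graph forbids \emph{any} element, designated or not, from occurring twice in a fact. Your opening reduction to the core, and the observation that a single designated element automatically gives the UNP, are fine; note that the paper also first reduces to a binary schema via a bipartite encoding, which is what makes the ``oriented tree rooted at $a_0$'' picture literal.
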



In fact, the construction presented below shows that the polynomial bound holds even when the schema is treated as part of the input of the problem (although $k$ does need to be fixed, as the size of the constructed frontier depends exponentially on $k$).

As will follow from results in Section~\ref{sec:characterizations} (cf.~Theorems~\ref{thm:unary-is-necessary}-\ref{thm:acyclicity-is-necessary} below) the theorem fails if we drop any of the three restrictions in the statement (i.e., c-connectedness, acyclicity, and $k\geq1$).

\subsection*{The special case with binary relations only and $k=1$}

The remainder of this section is dedicated to the proof of Theorem~\ref{thm:frontier-closed}.
To simplify the presentation of the proof, we will first assume that the schema consists of binary relations only, and that $k=1$. Afterwards, we will show how to lift these restrictions. 

Let $\mathcal{S}$ be a schema consisting of binary relation symbols, and fix a finite structure $(A,a_0)$ that is \textbf{c-connected} and \textbf{acyclic}. We will assume, in addition, that $(A,a_0)$ is a core, which we may do without loss of
generality, because the core of an acyclic structure can be computed in polynomial time (Proposition~\ref{prop:cacyclic-core}) and the properties of c-connectedness and acyclicity are preserved under passage from a structure to its core:

\begin{proposition}
The properties of c-connectedness and acyclicity are preserved when passing from a structure to its core.
\end{proposition}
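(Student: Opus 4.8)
The plan is to show that both c-connectedness and acyclicity survive the passage to the core. Let $(A, a_0)$ be a structure (over a schema of binary relations, though the argument does not use this restriction) and let $(A', a_0)$ be its core, realized as an induced substructure of $A$ with a retraction $r : A \to A'$ (a homomorphism with $r(a_0) = a_0$ that is the identity on $A'$) together with the inclusion $\iota : A' \hookrightarrow A$.

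\textbf{Acyclicity.} First I would observe that acyclicity of a structure (with one designated element, so ``c-acyclic'' and ``acyclic after removing the designated vertex'' coincide with the graph-theoretic statement about the incidence graph) is inherited by \emph{every} substructure: deleting facts and/or elements from a structure only deletes vertices and edges from its incidence graph, and a subgraph of an acyclic graph is acyclic. Since $A'$ is an induced substructure of $A$, if the incidence graph of $A$ has no cycle through a non-designated element then neither does that of $A'$. (In particular no non-designated element of $A'$ occurs twice in a fact, and there are no ``short'' cycles either.) This handles acyclicity with essentially no work.

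\textbf{C-connectedness.} This is the step that actually uses that we pass to the \emph{core} rather than an arbitrary substructure, since c-connectedness is visibly not preserved by taking substructures. The claim is: if every element of $A$ is reachable from $a_0$ in the Gaifman/incidence graph, then the same holds in $A'$. Suppose some element $b \in \dom(A')$ is not reachable from $a_0$ within $A'$; let $(A'_0, a_0)$ be the connected component of $a_0$ in the incidence graph of $A'$, so $b \notin A'_0$ and $A'_0$ is a proper substructure of $A'$. I would then argue that $r$ restricted to $A'$ already maps $A'$ into $A'_0$: indeed, $r|_{A'} : A' \to A'$ is an endomorphism of $A'$, hence (as $A'$ is a core) an automorphism; but $r|_{A'}$ maps $a_0$ to $a_0$ and preserves facts, so it maps the connected component of $a_0$ into the connected component of $a_0$, i.e.\ $A'_0$ into $A'_0$, and being a bijection on a finite structure it maps $A'_0$ onto $A'_0$ and maps $\dom(A') \setminus \dom(A'_0)$ onto itself. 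That by itself doesn't yet give a contradiction, so the cleaner route is: compose $r$ with $\iota$ to view everything inside $A$, use c-connectedness of $A$ to push $b$ (via a path from $a_0$ to $\iota(b)$ in $A$) back through $r$, obtaining a path from $a_0$ to $r(\iota(b)) = b$ inside $A'$ — because $r$ is a homomorphism it sends a path in $A$ to a walk in $A'$, and $r(a_0) = a_0$, $r(b) = b$. This walk witnesses reachability of $b$ from $a_0$ in $A'$, contradicting the choice of $b$. Hence $(A', a_0)$ is c-connected.

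The main obstacle, and the only place requiring care, is the c-connectedness argument: one must be careful about the distinction between paths in $A$ versus $A'$ and use the retraction $r$ in the right direction (mapping an $A$-path down to an $A'$-walk), and one should note explicitly that a homomorphism maps edges of the Gaifman/incidence graph to edges (or loops) so that paths go to walks. Once that observation is in hand the proof is a two-line chase. I would write the acyclicity part in one sentence and devote the bulk of the (short) proof to spelling out the reachability argument for c-connectedness.
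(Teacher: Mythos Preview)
Your proof is correct. The acyclicity part is identical to the paper's: the core is a substructure, and acyclicity is inherited by substructures.

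For c-connectedness you take a genuinely different route. You work concretely with the retraction $r : A \to A'$: given $b \in A'$, take a path from $a_0$ to $b$ in the incidence graph of $A$ (using c-connectedness of $A$), and push it through $r$ to obtain a walk from $a_0$ to $b$ in the incidence graph of $A'$; this witnesses reachability of $b$. The paper instead argues abstractly via Proposition~\ref{prop:reach}: since $(B,\textbf{b})$ is c-connected and $(B,\textbf{b}) \to (B',\textbf{b})$, the proposition gives $(B,\textbf{b}) \to (B',\textbf{b})^{\text{reach}}$; hence $(B',\textbf{b})^{\text{reach}}$ is homomorphically equivalent to $(B',\textbf{b})$, and the minimality property of cores forces $(B',\textbf{b}) = (B',\textbf{b})^{\text{reach}}$. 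Your argument is more self-contained and makes the mechanism visible at the level of paths, while the paper's is terser but presupposes the auxiliary proposition. Both ultimately rest on the same fact (homomorphic images of connected pieces stay connected), just packaged differently.

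One stylistic remark: in a final writeup, drop the false start about $r|_{A'}$ being an automorphism and go straight to what you call the ``cleaner route''---that is the whole proof, and the detour only obscures it.
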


\begin{proof}
That acyclicity is preserved follows immediately from the fact that the core is a substructure of the original structure. For c-connectedness, the argument is as follows:
let $(B,\textbf{b})$ be a c-connected structure, and let $(B',\textbf{b})$ be its core. 
It follows from the definition of a core that $(B,\textbf{b})\leftrightarrow (B',\textbf{b})$. 
By Proposition~\ref{prop:reach}, this implies that $(B,\textbf{b})\leftrightarrow (B',\textbf{b})^{\text{reach}}$, and hence, $(B',\textbf{b})\leftrightarrow (B',\textbf{b})^{\text{reach}}$.
It follows by the minimality property of cores that 
$(B,\textbf{b}) = (B',\textbf{b})^{\text{reach}}$, i.e.,  $(B',\textbf{b})$ is c-connected.
\end{proof}

We shall slightly abuse notation and, for every relation symbol $R$ and every $a,b\in A$ we shall say that $R^-(a,b)$ holds in (or is a fact of) $A$ if $R(b,a)$ is a fact of $A$.
We can think of $A$ as an (oriented) tree rooted at $a_0$, where every edge $b\to c$ has been oriented away from $a_0$ and is labelled $R$ or $R^-$ depending on whether $R(b,c)$ or $R(c,b)$ is a fact of $A$.

\begin{definition}[$A|a$] For any element $a$ of $A$,
we denote by $A|a$ the substructure of $A$ consisting of the oriented subtree rooted at $a$. 
\end{definition}




\begin{definition}[rank]
For any element $a$ of $A$,
$rank(a)$ is the depth of the oriented tree $A|a$. 
\end{definition}

The construction of frontiers that we will describe below,  involves a recursion on rank.
Note that $rank(a)=0$ when $a$ is maximally far from $a_0$ (in other words, the leafs
of the oriented tree have rank 0).
Furthermore, if there is an edge $a\to b$ (in the oriented tree) then 
$rank(b) < rank(a)$. 

The frontier construction below is split into two steps:
for each element $a$, we first construct a set of structures $\mathcal{F}_a$, and, 
subsequently, we construct a modified set of structures $\mathcal{F}^*_a$.

\begin{definition}[$\mathcal{F}_a$]
\label{def:F}
Fix any node $a$ of $A$. Let $a'_1, \ldots, a'_n$ be the children of $a$ (in the oriented tree), and let $S_1, \ldots, S_n$ be the corresponding edge labels.  We can depict $A|a$ 
as follows:
\begin{center}
\begin{tikzpicture}[level distance=30pt,sibling distance=12pt]
\Tree [.{$a$}
\edge[->] node[auto=right]{$S_1$}; 
  [.\text{$a'_1$} \edge[roof] node[auto=left]{}; {~~~~~~} ]
\edge[->] node[auto=left]{$S_2$}; 
  [.\text{$a'_2$} \edge[roof] node[auto=left]{}; {~~~~~~} ]
\edge[draw=none] ; 
  [.\text{\dots} ]
\edge[->] node[auto=left]{$S_n$}; 
  [.\text{$a'_n$} \edge[roof] node[auto=left]{}; {~~~~~~} ] ]
\end{tikzpicture}
\end{center}
%
If $rank(a)=0$ (that is, if $n=0$)  we define $\mathcal{F}_a=\emptyset$. Otherwise, we define $\mathcal{F}_a =\{ H^1, \ldots, H^n\}$ where
   $H^i$ is obtained from $A|a$ in the following way. Set $H^i$ to be the result of removing $A|a'_i$ from $A|a$. Then, we add to $H^i$ a fresh isomorphic copy $F$ of each structure in $\mathcal{F}_{a'_i}$, joining $a$ with the newly created copies of $a'_i$ with a $S_i$-edge.
See Figure~\ref{fig:illustration-F} for an illustration.
\end{definition}

\begin{figure}
\begin{center}
\begin{tikzpicture}[level distance=35pt,sibling distance=14pt]
\Tree [.\node(root){a} ;
\edge[->] node[auto=right]{$S_1$}; 
  [.\text{$a'_1$} \edge[roof] node[auto=left]{}; {~~~~~~} ]
\edge[color=white] ; 
  [.\text{\dots} ]
\edge[->] node[auto=left]{$S_{i-1}$}; 
  [.\text{$a'_{i-1}$} \edge[roof] node[auto=left]{}; {~~~~~~} ]
\edge[draw=none] ; 
  [.\text{} \edge[draw=none] ; 
    [.\node(ff){} ; \edge[roof,fill=black] node[auto=left]{}; {~~~~~~} ] ]
\edge[draw=none] ; 
  [.\text{} \edge[draw=none] ; 
    [.\text{\dots} \edge[draw=none] node[auto=left]{}; {\begin{tabular}{@{}c@{}}(an isomorphic copy of \\ of each $F\in \mathcal{F}_{a'_i}$)\end{tabular}} ] ]
\edge[draw=none] ; 
  [.\text{} \edge[draw=none] ; 
    [.\node(fl){} ; \edge[roof,fill=black] node[auto=left]{}; {~~~~~~} ] ]
\edge[->] node[auto=right]{$S_{i+1}$}; 
  [.\text{$a'_{i+1}$} \edge[roof] node[auto=left]{}; {~~~~~~} ]
\edge[color=white] ; 
  [.\text{\dots} ]
\edge[->] node[auto=left]{$S_n$}; 
  [.\text{$a'_n$} \edge[roof] node[auto=left]{}; {~~~~~~} ] ]
\draw[->,dashed] (root) -- (ff) node[midway,auto=left] {$S_i$};
\draw[->,dashed] (root) -- (fl) node[midway,auto=right] {$S_i$};
\end{tikzpicture}

\end{center}
\caption{A depiction of the construction of $\mathcal{F}_a$ in Definition~\ref{def:F}.}
\label{fig:illustration-F}
\end{figure}
 
Observe that, for $F\in \mathcal{F}_a$, there is a natural mapping  $h:F\to A$
(indeed, following the induction of the construction, we can see that each element
of $F$ is either an element of $A$ or else was introduced as an isomorphic copy of an element of $A$).
 
 \begin{definition}[$\mathcal{F}^*_a$]
 \label{def:F*}
 We define $\mathcal{F}^*_a = \{ F^* \mid F\in\mathcal{F}_a\}$, where
 $F^*$ is defined as follows. Let $h:F\to A$ be the natural mapping. 
Then, $F^*$ is obtained from $F$ by adding, for each element $b$ of $F$ with $h(b)\neq a$,
 a fresh isomorphic copy of $A$ together with a connecting $S$-edge from $c$ to $b$, 
where $c$ is  (the newly created copy of) the parent of $h(b)$ and $S$ is the edge label of the edge from $c$ to $h(b)$ in $A$.
 \end{definition}
  
\begin{figure}
\begin{tabular}{ccc}
\begin{tikzpicture}[level distance=1cm,sibling distance=1cm, ->, baseline={(a0)}] 
\Tree
[.\node[draw,circle](a0){$a_0$}; \edge ;
[.{$a_1$} \edge ;
[.{$a_2$} \edge ;
[.{$a_3$} \edge ;
[.{$a_4$} 
]]]]]
\end{tikzpicture}
~~~~~~&~~~~~~
\begin{tikzpicture}[level distance=1cm,sibling distance=1cm, ->, baseline={(a0)}]
\Tree
[.\node[draw,circle](a0){$a_0$}; \edge ;
[.{$a_1$} \edge ;
[.{$a_2$} \edge ;
[.{$a_3$} 
]]]]
\end{tikzpicture}
~~~~~~&~~~~~~
\begin{tikzpicture}[level distance=1cm,sibling distance=1cm, ->, baseline={(a0)}]
\Tree
[.\node[draw,circle](a0){$a_0$}; \edge[thick] ;
[.\node(a1){$a_1$}; \edge[thick] ;
[.\node(a2){$a_2$}; \edge[thick] ;
[.\node(a3){$a_3$};
]]]]

\begin{scope}[xshift=1cm,yshift=.75cm,level distance=.8cm, sibling distance=0.5cm, font=\scriptsize]
\Tree
[.\node(a0prime){$a'_0$}; \edge ;
[.{$a'_1$} \edge ;
[.{$a'_2$} \edge ;
[.{$a'_3$} \edge ;
[.{$a'_4$} 
]]]]]
\end{scope}

\begin{scope}[xshift=-1cm,yshift=0cm,level distance=.8cm, sibling distance=0.5cm, font=\scriptsize]
\Tree
[.{$a''_0$}; \edge ;
[.\node(a1prime){$a''_1$}; \edge ;
[.{$a''_2$} \edge ;
[.{$a''_3$} \edge ;
[.{$a''_4$} 
]]]]]
\end{scope}

\begin{scope}[xshift=1.5cm,yshift=-.75cm,level distance=.8cm, sibling distance=0.5cm, font=\scriptsize]
\Tree
[.{$a'''_0$} \edge ;
[.{$a'''_1$} \edge ;
[.\node(a2prime){$a'''_2$}; \edge ;
[.{$a'''_3$} \edge ;
[.{$a'''_4$} 
]]]]]
\end{scope}

\draw[->] (a0prime) edge[out=-100,in=70,looseness=1.2] (a1) ;
\draw[->] (a1prime) edge[out=-80,in=100,looseness=1.2] (a2) ;
\draw[->] (a2prime) edge[out=-100,in=70,looseness=1.2] (a3) ;

\end{tikzpicture}
\\ \\
Input (uni-relational) & $\mathcal{F}_{a_0}$ single structure & $\mathcal{F}^*_{a_0}$ single structure \\
structure $(A,a_0)$
\end{tabular}
\caption{First example illustrating Definitions~\ref{def:F} and~\ref{def:F*}.}
\label{fig:example-frontier-1}
\end{figure}
  
\begin{figure}
\begin{tabular}{ccc}
\\
\begin{tikzpicture}[level distance=1.25cm,sibling distance=1cm, ->, baseline={(a)}]
\Tree
[.\node(a)[draw,circle]{$a$};
    \edge node[auto=right] {$R$};
    [.{$b$} 
       \edge node[auto=right] {$R$};
       [.{$c$} ]
       \edge node[auto=left] {$S$};
       [.{$d$} ]
    ]
]
\end{tikzpicture}
~~~~~~&~~~~~~
\begin{tikzpicture}[level distance=1.25cm,sibling distance=1cm, ->, baseline={(a)}]
\Tree
[.\node(a)[draw,circle]{$a$};
    \edge node[auto=right] {$R$};
    [.{$b$} 
       \edge node[auto=left] {$R$};
       [.{$c$} ]
        ]
    \edge node[auto=left] {$R$};
    [.{$b'$} 
        \edge node[auto=right] {$S$};
        [.{$d$} ]
        ]
]
\end{tikzpicture}
~~~~~~&~~~~~~
\begin{tikzpicture}[level distance=1.5cm,sibling distance=1.25cm, ->, baseline={(a)}]
\Tree
[.\node(a)[draw,circle]{$a$};
    \edge[thick] node[auto=right] {$R$};
    [.\node(b){$b$} ;
       \edge[thick] node[auto=left] {$R$};
       [.\node(c){$c$} ; ]
        ]
    \edge[thick] node[auto=left] {$R$};
    [.\node(bprime){$b'$} ;
        \edge[thick] node[auto=right] {$S$};
        [.\node(d){$d$} ; ]
        ]
]

\begin{scope}[xshift=-3cm,yshift=.25cm,level distance=1cm, sibling distance=0.5cm, font=\scriptsize]
\Tree [.\node(a1){} ; \edge node[auto=right] {$R$} ; [.{} \edge node[auto=right,pos=.8] {$R$} ; [.{} ] \edge node[auto=left,pos=.8] {$S$} ; [.{} ] ] ]
\end{scope}

\begin{scope}[xshift=-3cm,yshift=-2cm,level distance=1cm, sibling distance=0.5cm, font=\scriptsize]
\Tree [.{} \edge node[auto=right] {$R$} ; [.\node(b1){} ; \edge node[auto=right,pos=.8] {$R$} ; [.{} ] \edge node[auto=left,pos=.8] {$S$} ; [.{} ] ] ]
\end{scope}

\begin{scope}[xshift=3cm,yshift=.25cm,level distance=1cm, sibling distance=0.5cm, font=\scriptsize]
\Tree [.\node(a2){} ; \edge node[auto=left] {$R$} ; [.{} \edge node[auto=right,pos=.8] {$R$} ; [.{} ] \edge node[auto=left,pos=.8] {$S$} ; [.{} ] ] ]
\end{scope}

\begin{scope}[xshift=3cm,yshift=-2cm,level distance=1cm, sibling distance=0.5cm, font=\scriptsize]
\Tree [.{} \edge node[auto=left] {$R$} ; [.\node(b2){} ; \edge node[auto=right, pos=.8] {$R$} ; [.{} ] \edge node[auto=left,pos=.8] {$S$} ; [.{} ] ] ]
\end{scope}

\draw[->] (a1) edge[out=-60,in=120,looseness=1.2] (b) node[pos=.5,xshift=-2cm,yshift=-.2cm] {$R$};
\draw[->] (a2) edge[out=240,in=60,looseness=1.2] (bprime) node[pos=.5,xshift=2cm,yshift=-.2cm] {$R$};
\draw[->] (b1) edge[out=-60,in=140,looseness=1.2]  (c) node[pos=.5,xshift=-2cm,yshift=-2.4cm] {$R$};
\draw[->] (b2) edge[out=220,in=60,looseness=1.2]  (d) node[pos=.5,xshift=2cm,yshift=-2.4cm] {$S$};

\end{tikzpicture}
\\
Input structure & $\mathcal{F}_{a}$ single structure & $\mathcal{F}^*_{a}$ single structure \\
$(A,a)$
\end{tabular}
\caption{Second example illustrating Definitions~\ref{def:F} and~\ref{def:F*}.}
\label{fig:example-frontier-2}
\end{figure}

The examples in Figures~\ref{fig:example-frontier-1} and~\ref{fig:example-frontier-2} 
illustrate the construction of $\mathcal{F}_a$ and $\mathcal{F}^*_a$. In these figures, for clarity, the distinguished element is marked by a circle. In Figure~\ref{fig:example-frontier-1}, all edges represent the same binary relation $R$, and edge labels are omitted for the sake of readability.
In both examples, it happens (coincidentally) that the frontier consists of a single structure. 

Definition~\ref{def:radial-ext} and \ref{lem:radial-ext} further down may provide additional intuition on the $(\cdot)^*$ operation used in Definition~\ref{def:F*}.

 \begin{theorem} \label{thm:acyclic-unary}
Let $(A,a_0)$ be a finite structure that is \emph{core}, \emph{c-connected} and \emph{acyclic}, 
let $a$ be any node of $A$, and let $\mathcal{F}^*_a$ be as defined above.
\begin{enumerate}
\item
 For each $F^*\in \mathcal{F}^*_a$, $(F^*,a)\to (A,a)$. 
\item
 For each $F^*\in \mathcal{F}^*_a$, $(A|a,a)\not\to (F^*,a)$.
 \item 
 Let $(B,b)$ be  acyclic and c-connected. If $(B,b)\to (A,a)$ and $(A|a,a)\not\to (B,b)$, then $(B,b)$ maps homomorphically to $(F,a)$ for some structure $F\in\mathcal{F}^*_a$.  
\end{enumerate}
In particular (since $A|a_0=A$),  $\mathcal{F}^*_{a_0}$ is a frontier for $(A,a_0)$ w.r.t.~the class of c-connected, acyclic structures with one distinguished element.
 \end{theorem}
 
\begin{proof}
 Item 1 follows immediately from the construction: the natural projection from $(F^*,a)$ to $(A,a)$  is a homomorphism.
 
 For item 2, we proceed by induction on $rank(a)$.  
 Item 2 holds true, trivially, when $rank(a)=0$, as $\mathcal{F}^*_a$ is empty in this case. 
 For the inductive case, now, assume that $rank(a)>0$. In this case, by definition, we know that $F\in \mathcal{F}_a$ is of the form $H_i$ for some child $a'_i$ of $a$.
 In other words, $F$ was obtained from $A|a$ by removing the subtree $A|a'_i$, and (provided $rank(a'_i)>0$), adding a fresh isomorphic copy of each structure in $\mathcal{F}_{a'_i}$ joining $a$ with the newly created copy of $a'_i$ with a $S_i$-edge, where $S_i$ is the label of the edge $a\to a'_i$ in the original structure.

 Consider the homomorphism $h:(F^*,a)\to (A,a)$ given by item 1.
  Suppose for the sake of a contradiction that there were also a homomorphism 
$h': (A|a,a)\to (F^*,a)$.  Let $h'(a'_i)=b$. Towards our contradiction, we perform a
case distinction on $b$.
%
%
Clearly, $b$ must be one of the neighbours of $a$ in $F^*$. By construction, these neighbours of $a$ in $F^*$ are:
(i)
the children $a'_1, \ldots, a'_{i-1}, a'_{i+1}, \ldots, a'_n$ in $F$; (ii) the copies of $a'_i$ belonging to 
isomorphic copies of structures in $\mathcal{F}_{a'_i}$; and, provided $a\neq a_0$, (iii) the parent, $p$, of $a$ in $A$, as well as the copy of $p$ introduced in Definition~\ref{def:F*}. 

Let us first consider case (i) and (iii). In both cases, let $h''$ be the composition of $h'$ and $h$. Then $h''$ is a homomorphism from $(A|a, a)$ to $(A,a)$ whose range omits $a'_i$. 
Furthermore, $h''$ extends straightforwardly to a non-injective endomorphism on $(A,a_0)$
by mapping all elements outside $A|a$ to themselves. This is a contradiction with the fact that $(A,a_0)$ is a core.

Finally consider case (ii). By a similar argument as the above, no element $c$ from $A|a'_i$ can be mapped by $h'$ to $a$. For, in this case, the composition $h''$ of $h'$ and $h$ would be a homomorphism from $(A|a, a)$ to $(A,a)$ whose range excludes $c$, and $h''$ could then be extended to a non-injective endomorphism on $(A,a_0)$, contradicting the fact that $(A,a_0)$ is a core.
It follows that the restriction of $h'$ to $A|a'_i$ must be such that its range is entirely contained in 
$(F'^*,a'_i)$ for some $F'$ in $\mathcal{F}^*_{a'_i}$, i.e., it 
defines a homomorphism from $A|a'$ to $(F'^*,a'_i)$, which contradicts the inductive hypothesis.


Item 3 is proved by induction on $rank(a)$. Again, item 3 holds true, trivially, when $rank(a)=0$. Note that when $rank(a)=0$, $A|a$ is a single-node structure without any relations. Therefore, it is impossible that $(A|a,a)\not\to (B,b)$. 

Now, consider the inductive case with $rank(a)>0$.  
Since $(A|a,a)\not\to (B,b)$, it must be the case that, for some child $a'_i$ of $a$ there is no element $b'$ in $B$ such that  $(A|a'_i,a'_i)\to (B,b')$ and $S_i(b,b')$ holds in $B$ where $S_i$ is the label of the edge joining $a$ and $a'_i$. 
Let $F=H_i\in \mathcal{F}_a$ be the corresponding structure as constructed in Definition~\ref{def:F}. 

Let $h$ be the homomorphism from $(B,b)$ to $(A,a)$ given by the hypothesis. We shall construct a homomorphism $h'$ from  $(B,b)$ to $(F^*,a)$. We have $h'(b)=a$. Note that, since $B$ is acyclic and c-connected, we can similarly regard $B$ as an ordered tree rooted at $b$.  Then, for each child, $b'$, of $b$, we define $h'$ on $B|b'$ as follows. If $h(b')=a'_i$, then since $(A|a'_i,a'_i)$ has no homomorphism to $(B,b')$,  we apply the inductive hypothesis to define $h'$ on $B|b'$. If $h(b')$ is the parent of $a$ then we map $B|b'$ entirely to the isomorphic copy of $A$ attached to $a$ introduced 
in Definition \ref{def:F*}. If $h(b')$ is some child of $a$ different than $a'_i$ then we define $h'$ on $B|b'$ starting at $b'$ and by increasing depth as in the homomorphism, $h$, from $(B,b)$ to $(A,a)$ until we find some edge $c\to d$ in $B|b'$ such that its $h$-image is not an edge in $A$. When we found such edge $c\to d$, then we define
$h'(d)$ to be the copy of $h(d)$ attached to $h(c)$ in $F^*$ according to Definition \ref{def:F*} and we extend $h'$ to the rest of elements in
$B|d$ mapping them as well to the same isomorphic copy.
\end{proof}


\begin{definition}
The \emph{size} of a structure $A$, denoted by $size(A)$, is the number of facts. 
The \emph{total size} of a set of structures is $totalsize(\mathcal{A}) = \Sigma_{A\in \mathcal{A}} (size(A)+1)$ \end{definition}

(The definition of total size is conveniently chosen so that the total size of a set of tree-shaped structures is equal to the size of a tree-shaped structure consisting of the given forest with an additional root connected to the root of each original tree.)

\begin{theorem}[Our construction is polynomial]
$totalsize(\mathcal{F}^*_{a_0}) = O(size(A)^4)$.
\end{theorem}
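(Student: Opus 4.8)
The plan is to analyze the recursion defining $\mathcal{F}_a$ and $\mathcal{F}^*_a$ and bound the total size by a careful accounting argument. First I would establish a bound on $totalsize(\mathcal{F}_a)$. Recall that $\mathcal{F}_a = \{H^1,\ldots,H^n\}$ where $n$ is the number of children of $a$, and each $H^i$ is obtained from $A|a$ by deleting the subtree $A|a'_i$ and grafting on fresh copies of all structures in $\mathcal{F}_{a'_i}$. Writing $s(a) = size(A|a)$, we have $size(H^i) = \bigl(s(a) - s(a'_i) - 1\bigr) + totalsize(\mathcal{F}_{a'_i})$ (the $-1$ accounts for the edge $a\to a'_i$ being removed and then one edge re-added for each grafted tree, which is already counted inside $totalsize$). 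Hence $totalsize(\mathcal{F}_a) = \sum_{i=1}^n \bigl(size(H^i) + 1\bigr) \le n\cdot s(a) + \sum_{i=1}^n totalsize(\mathcal{F}_{a'_i})$. Since each element of $A$ is the node $a$ for exactly one call and the children $a'_i$ of distinct nodes are disjoint, unfolding this recursion gives $totalsize(\mathcal{F}_{a_0}) \le \sum_{a} (\deg(a)\cdot s(a)) \le size(A)\cdot\sum_a \deg(a) = O(size(A)^2)$, using $s(a)\le size(A)$ and $\sum_a\deg(a) = O(size(A))$ in a tree.

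Next I would bound the blow-up incurred by the $(\cdot)^*$ operation. For $F\in\mathcal{F}_a$, the structure $F^*$ is obtained from $F$ by attaching, to each element $b$ of $F$ with $h(b)\neq a_0$, a fresh copy of $A$ together with one connecting edge; so $size(F^*) \le size(F) + |dom(F)|\cdot\bigl(size(A)+1\bigr)$. Since $F$ is a tree, $|dom(F)| = size(F) + 1$, so $size(F^*) = O(size(F)\cdot size(A))$, and therefore $totalsize(\mathcal{F}^*_{a_0}) = O\bigl(totalsize(\mathcal{F}_{a_0})\cdot size(A)\bigr) = O(size(A)^3)$.

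To reach the stated $O(size(A)^4)$ bound I would be slightly more generous: the preceding paragraphs actually give $O(size(A)^3)$, which is stronger, so it suffices to verify that no hidden multiplicative factor of $size(A)$ has been dropped — in particular that $totalsize$, and not $size$, is the right quantity propagated by the recursion in Definition~\ref{def:F} (it is, since each grafted structure contributes its full $size + 1$). The main obstacle, and the point I would be most careful about, is the recursion in Definition~\ref{def:F}: one must check that the copies of structures from $\mathcal{F}_{a'_i}$ are not themselves subjected to further recursive expansion within the construction of $H^i$ — i.e., that $\mathcal{F}_a$ is defined in terms of $\mathcal{F}_{a'_i}$ for the \emph{children} only, with a single level of grafting, so that the total contribution of each node $a$ across all of $\mathcal{F}_{a_0}$ is just $\deg(a)\cdot s(a)$ rather than something exponential. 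Granting this reading of the definition, the bound follows by the summation above; the factor $size(A)^4$ in the statement is then a safe over-estimate, and I would simply note that the cruder bound already suffices for the polynomial-time claims in Theorem~\ref{thm:frontier-closed}.
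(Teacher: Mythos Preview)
Your proposal is correct and in fact yields a tighter bound than the paper's own proof. Both arguments start from the same recursion inequality $totalsize(\mathcal{F}_a)\le n\cdot s(a)+\sum_i totalsize(\mathcal{F}_{a'_i})$, but the paper proceeds by induction on $rank(a)$ with the invariant $totalsize(\mathcal{F}_a)\le size(A|a)^3$, using $\sum_i size(A|a'_i)\le size(A|a)-1$ and $\sum_i x_i^2\le(\sum_i x_i)^2$ to close the inductive step; this gives $totalsize(\mathcal{F}_{a_0})\le size(A)^3$ and hence $totalsize(\mathcal{F}^*_{a_0})=O(size(A)^4)$. You instead unfold the recursion globally to obtain $totalsize(\mathcal{F}_{a_0})\le\sum_a \deg(a)\cdot s(a)\le size(A)\cdot\sum_a\deg(a)=O(size(A)^2)$, and then the $(\cdot)^*$ blow-up gives $O(size(A)^3)$. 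Your unfolding is both simpler and sharper; the paper's inductive invariant is convenient because it is self-contained at each node, but it over-counts by a factor of $size(A)$. Your cautionary remark about checking that the grafted copies of $\mathcal{F}_{a'_i}$ are not themselves recursively expanded is exactly the right point to flag, and your reading of the definition is correct.
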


\begin{proof} We will show that  $totalsize(\mathcal{F}_{a_0}) \leq size(A)^3$. The proposition then follows immediately, by  construction of $\mathcal{F}^*$. More precisely, we show that, for all elements $a$, $totalsize(\mathcal{F}_a)\leq size(A|a)^3$. 
The  claim is proved by induction on $rank(a)$.

If $rank(a)=0$, then $totalsize(\mathcal{F}_a) = 0$.
If $rank(a)>0$, it follows from the construction of $\mathcal{F}_a$ that 
\[
\begin{split}
totalsize(\mathcal{F}_a) &\leq n \cdot size(A|a) + \Sigma_{i=1}^n totalsize(\mathcal{F}_{a'_i}) \\
& \leq size(A|a)^2 + \Sigma_{i=1}^n size(A|a'_i)^3 \\
& \leq size(A|a) \cdot \big(size(A|a) + \Sigma_{i=1}^n size(A|a'_i)^2\big) \\
& \leq size(A|a) \cdot \big(size(A|a) + (size(A|a)-1)^2\big) \\
& \leq size(A|a) \cdot size(A|a)^2 \\
& \leq  size(A|a)^3
\end{split}
\]
where $n$ is the number of successors of $a$. Note that we are using here the fact that 
$\Sigma_{i=1}^n size(A|a'_i) \leq size(A|a)-1$, and the general fact that $\Sigma_{i=1}^n x_i^2 \leq (\Sigma_{i=1}^n x_i)^2$.
\end{proof}

\subsection*{Extending the result to $k\geq 1$}

We now extend the result to $k\geq 1$. For the time being, we still assume that the schema consists of binary relations only. 

\begin{definition}[Skeleton and offshoots]
Let $(A,\textbf{a})$ be a c-connected, acyclic structure with at least one distinguished element.
A \emph{skeleton node} of $(A,\textbf{a})$ (with $\textbf{a}=a_1, \ldots, a_k$) is any element $s$ that is either a distinguished element (i.e., $s \in \{a_1, \ldots, a_k\}$) or such that $s$ lies on a minimal path between two distinguished elements. We denote by $\skeleton(A,\textbf{a})$ the substructure of 
$(A,\textbf{a})$ (with the same distinguished elements), consisting of the skeleton nodes only. It is easy to see that, in a c-connected and acyclic structure, for every element $c$ there is a unique skeleton node $s$ that is closest to $c$, i.e., such that $c$ is connected to $s$ and such that every path from $c$ to any other skeleton node passes through $s$. In this case, we say that $c$ is \emph{affiliated with} the skeleton node $s$.
For any skeleton node $s$ of $(A,\textbf{a})$, the \emph{offshoot} of $s$ in $(A,\textbf{a})$, which we will denote by $(A|s,s)$,
\footnote{We acknowledge that this notation is slightly misleading as it does not reflect the dependence of $(A|s,s)$ on the distinguished elements $\textbf{a}$.}
is the structure $(A',s)$ where $A'$ is the substructure of $A$ consisting of all elements affiliated with $s$ (including $s$ itself), and where $s$ is the only distinguished element.
\end{definition}

Thus, every c-connected and acyclic structure can be thought of as consisting of a skeleton with offshoots, similar to the example depicted schematically in Figure~\ref{fig:skeleton}.

\begin{figure}
    \centering
\usetikzlibrary{shapes.geometric}
\tikzset{
triangle1/.style={
  draw,shape border uses incircle,
  isosceles triangle,shape border rotate=55,yshift=0cm,xshift=-.2cm},
triangle2/.style={
  draw,shape border uses incircle,
  isosceles triangle,shape border rotate=125,yshift=0cm,xshift=.2cm},
triangle3/.style={
  draw,shape border uses incircle,
  isosceles triangle,shape border rotate=90,yshift=0cm},
}
    \begin{tikzpicture}
    \node[draw,circle,fill=black,label={$a_1$}] at (0, 0)     (a1) {}; \node[triangle1] at (-0.1,-0.5) {};
    \node[draw,circle,fill=black,label=below right:{$a_2$}] at (1.5, -2)  (a2) {}; \node[triangle1] at (1.4,-2.5) {};
    \node[draw,circle,fill=black,label={$a_3$}] at (5, -2)    (a3) {}; \node[triangle2] at (5.1,-2.5) {};
    \node[draw,circle,fill=black,label={$a_4$}] at (3.5, 0)   (a4) {}; \node[triangle2] at (3.6,-0.5) {};
    \node[draw,circle,fill=black,label={$a_5$}] at (6, -1)    (a5) {}; \node[triangle2] at (6.1,-1.5) {};
    \node[draw,circle,fill=black] at (1, -.7)   (b1) {};  \node[triangle1] at (0.9,-1.2) {};
    \node[draw,circle,fill=black] at (2, -1)    (b2) {};  \node[triangle2] at (2.1,-1.5) {};
    \node[draw,circle,fill=black] at (3, -1)    (b3) {};  \node[triangle3] at (3,-1.6) {};
    \node[draw,circle,fill=black] at (4, -1.3)  (b4) {};  \node[triangle1] at (3.9,-1.8) {};

    \draw[thick] (a1) -- (b1) -- (b2) -- (b3) -- (b4) -- (a3);
    \draw[thick] (a2) -- (b2);
    \draw[thick] (b3) -- (a4);
    \end{tikzpicture}

    \caption{Schematic depiction of an example c-connected acyclic structure with five distinguished elements. The skeleton consists of the black nodes (which are the elements of the structure that are either distinguished or lie on a shortest path connecting two distinguished elements), while the triangles depict offshoots.}
    \label{fig:skeleton}
\end{figure}

Our frontier construction for structures with multiple distinguished elements
will make use of two operations that we now define.

\newcommand{\splitting}{\textrm{\upshape split}}
\begin{definition}[Splitting]
Consider a structure $(A,\textbf{a})$ with $\textbf{a}=a_1, \ldots, a_k$.
Let $(X,Y)$ be a proper partition of $\{1, \ldots, k\}$, that is,
    $X$ and $Y$ are disjoint non-empty sets such that $X\cup Y = \{1, \ldots, k\}$.
    We will denote by $\splitting_{(X,Y)}(A,\textbf{a})$ the structure $(A', \textbf{a}')$ where
    \begin{itemize}
        \item $A'=A^{(1)}\uplus A^{(2)}$ is a disjoint union of two isomorphic copies of $A$. 
          For each element $a$ of $A$, we will denote its two copies in $A'$ 
          by $a^{(1)}$ and $a^{(2)}$, respectively. 
          \item For $\textbf{a}=a_1, \ldots, a_n$, we set $\textbf{a}'=a_1^{(j_1)}, \ldots, a_n^{(j_n)}$ where $j_i= 1$ if $i\in X$ and $j_i=2$ otherwise.
    \end{itemize}
\end{definition}

For the next definition, we need to introduce some auxiliary notation.
If $a, b$ are elements belonging to the same connected component of some structure $A$, 
we will denote by $dist(a,b)$ the length of the 
shortest path from $a$ to $b$ in the incidence graph of $A$ (where the
length may be counted by the number of facts on the path). 
If $a,b,c$ are elements that all belong to the same connected component of $A$
we will write $a<_c b$ if $dist(a,c) < dist(b,c)$ (``$a$ is closer to $c$ than $b$ is'').

\newcommand{\radialext}{\textrm{\upshape radial-extension}}
\begin{definition}[Radial extension]\label{def:radial-ext}
Let $h:(A,s)\to (B,t)$ be a homomorphism between acyclic structures with one distinguished element. We denote by $\radialext_{h:(A,s)\to (B,t)}(A)$ the structure obtained from $A$ as follows: for each fact $f$ containing elements $a_1, a_2$ with $a_1 <_s a_2$, we add a disjoint isomorphic copy $\widetilde{B}$ of $B$ (without 
distinguished elements), and we add a connecting fact that is a copy of $f$ in which
$a_1$ is replaced by the isomorphic copy of $h(a_1)$ in $\widetilde{B}$.
\end{definition}

Note that the way in which $\mathcal{F}^*_a$ was constructed from $\mathcal{F}_a$ in Definition~\ref{def:F*}, is a concrete instance of the radial-extension operation.

In the statement of the next lemma, we write $(A,\textbf{a}, s)$,
where $(A,\textbf{a})$ is a structure with $k$ distinguished elements, 
to denote the corresponding structure with $k+1$ distinguished elements.

\begin{lemma}\label{lem:radial-ext}
Let $(A,\textbf{a}, s)$ and $(B, \textbf{b}, t)$ be acyclic structures with $(B,\textbf{b}, t)\stricthom (A,\textbf{a}, s)$, 
and such that $(A,\textbf{a}, s)$ is a core.
Let $B' = \radialext_{h:(B,t)\to (A,s)}(B)$ for some $h:(B,\textbf{b}, t)\to (A,\textbf{a}, s)$. 
Then $(B,\textbf{b}, t) \to (B',\textbf{b}, t)\stricthom(A,\textbf{a}, s)$.
\end{lemma}

\begin{proof}
 
 Clearly, $(B',\textbf{b}, t)$ extends $(B,\textbf{b}, t)$. It is also clear from the construction that the map $h$ naturally extends to a homomorphism $g:(B',\textbf{b},t)\to (A,\textbf{a},s)$ (using the natural projection for the additional elements of $B'$).
 
 Let us show now that $(A,\textbf{a}, s)$ does not map homomorphically to  $(B',\textbf{b}, t)$.  For the sake of a contradiction, assume that $h':(A,\textbf{a}, s)\to (B',\textbf{b}, t)$.
We can assume from Lemma~\ref{lem:core-injective} that the composition of $g$ and $h'$ is injective.
Since
$(A,\textbf{a}, s)\not\to (B, \textbf{b}, t)$, $h'$ must map some element $c$ of $A$ to 
 $h'(c)=d$, where $d$ is an element of $B'$ that does not belong to $B$. Then $d$ belongs to 
 an isomorphic copy of $A$ that was added for some fact $f$ of $B$, where $f$ contains
 elements $b_1, b_2$ with $b_1 <_t b_2$. Let $b_1'$ be the copy of $h(b_1)$ belonging to
 the respective isomorphic copy of $A$. Since $h'(s)=t$ it then follows that the image according to $h'$ of the nodes in the shortest path connecting $c$ with $s$ must necessarily contain all nodes  in the shortest path (in $B'$) connecting $d$ and $t$. Note that the path connecting $d$ and $t$ must contain
 both $b'_1$ and $b_1$.  Since $g(b'_1)=g(b_1)$
 it follows that the composition of $g$ and $h'$ is non-injective, a contradiction.
\end{proof}

Now, let $(A,\textbf{a})$ be c-connected and acyclic.
We may also assume that $(A,\textbf{a})$ is a core. 
We define $\mathfrak{F}$ to be the set of all structures obtained as follows:%
\footnote{Note that $\mathfrak{F}$ depends on $(A,\textbf{a})$, even though we did not
make this dependence explicit in our notation here.}
\begin{enumerate}
    \item For each proper partition $(X,Y)$ of the distinguished elements, we add
    $\splitting_{(X,Y)}(A,\textbf{a})$ to $\mathfrak{F}$, provided there are $a_i\in X$ and $a_j\in Y$ such that $a_i$ and $a_j$ are connected in $A$. (Note that, by construction,
    the corresponding distinguished elements $a_i^{(1)}$ and $a_j^{(2)}$ are disconnected in $\splitting_{(X,Y)}(A,\textbf{a})$.)
    
    \item For each proper partition $(X,Y)$ of the distinguished elements, and for each fact $R(c,d)$ of $A$, let $(B,\textbf{b})$ be the structure obtained by extending $\splitting_{(X,Y)}(A,\textbf{a})$ with the fact
    $R(c^{(1)}, d^{(2)})$. We add $(B,\textbf{b})$ to $\mathfrak{F}$, provided that there
    exists distinguished elements $a_i\in X$ and $a_j\in Y$, such that $a_i$ and $a_j$ are connected in $A$ by a path of some length $n$, but there is no path of length $n$ connecting the corresponding distinguished elements $a_i^{(1)}$ and $a_j^{(2)}$ in $(B,\textbf{b})$.
    
\item For each skeleton node $s$, and for every $(F,s)$ in the frontier of the offshot $(A|s,s)$ (as in Theorem~\ref{thm:acyclic-unary}) we add to $\mathfrak{F}$ the structure obtained from $(A,\textbf{a})$ by: (i)     removing the offshoot $(A|s,s)$ and replacing it $(F,s)$, resulting in a new structure $(A',\textbf{a})$, and (ii)      taking $B = \radialext_{h:(A',s)\to (A,s)}(A')$, where $h$ is the homomorphism from $(F,s)\to (A|s,s)$, extended to the entire structure $A'$ by mapping every element outside the offshoot to itself.

\end{enumerate}
The intuition behind this construction is that (1)--(3) capture different ways of homomorphically weakening an acyclic structure with designated elements: in (1), we take two connected distinguished elements and force them to become disconnected. In (2), we increase the length of the shortest path between two distinguished elements, by forcing the minimal path to go through a specified edge. In (3), we make no change to the skeleton of the structure but we weaken one of the offshoots. 

\begin{proposition} If $(A,\textbf{a})$ is c-connected, acyclic, and is a core, then
the set of structures $\mathfrak{F}$ constructed above is a frontier for $(A,\textbf{a})$ w.r.t. the class of c-connected acyclic structures.
\end{proposition}

\begin{proof}
It is clear that each structure in $\mathfrak{F}$ homomorphically maps to $(A,\textbf{a})$.

We also claim that there is \emph{no} homomorphism from $(A,\textbf{a})$ to any structure $(B,\textbf{b})\in \mathfrak{F}$. For (1) and (2) this follows immediately from the construction. For (3), the argument is as follows: 
let $s$ be the skeleton node whose offshoot was replaced, and let $(F,s)\in \mathfrak{F}^*_s$ be the frontier structure was used as the replacement of $(A|s,s)$.
Suppose, for the sake of a contradiction, that $(A,\textbf{a})\to (B,\textbf{b})$. By Lemma~\ref{lem:radial-ext}, then, there is already a homomorphism
$h:(A,\textbf{a},s)\to (A',\textbf{a},s)$.
By Lemma~\ref{lem:core-injective}, we may assume that the composition of  $h$ with the natural homomorphism from $h':(A',\textbf{a},s)$ to $(A,\textbf{a},s)$ is the identity function on $A$. 
Since $(A|s,s)\not\to(F,s)$ is follows that there exists some element $a$ in $A|s$ such that $h(a)$ does not belong to $(F,s)$. However, this contradicts the fact that $h'(h(a))$ is the identity. 

It remains to establish the last property of frontiers: let $(C,\textbf{c})$  be any c-connected and acyclic structure such that there is a homomorphism $h:(C,\textbf{c})\to (A,\textbf{a})$ and such that $(A,\textbf{a})\not\to (C,\textbf{c})$.

We can distinguish three cases:

The first case is where $(C,\textbf{c})$ differs from $(A,\textbf{a})$ in the 
number of connected components. Since both structures are c-connected and 
$(C,\textbf{c})\to (A,\textbf{a})$, this can only happen if there are
distinguished elements $c_i, c_j$ that belong to different components in 
$(C,\textbf{c})$ but such that $h(c_i), h(c_j)$
are connected in $(A,\textbf{a})$. In this case, we use (1) above.
Specifically, let $X = \{h(c_k) \mid \text{$c_k\in\textbf{c}$ is connected to $c_i$}\}$, and let $Y = \{\textbf{a}\}\setminus X$. Note that
$h(c_j)\in Y$. Then it is easy to see that $(C,\textbf{c})\to \splitting_{(X,Y)}(A,\textbf{a})$.

The second case is there there exists distinguished elements $c_i,c_j$ connected in $C$
such that the image of the path $s_1,\dots,s_m$ in $C$ connecting $c_i$ and $c_j$ is not injective. It follows from the acyclicity of $A$ that there exists $\ell$ such that $h(s_{\ell})=h(s_{\ell+2})$. 
Let $f$ be the fact in $C$ joining $s_{\ell}$ and $s_{\ell+1}$ and let $Z$ be the set containing all
elements in $C$ that remain connected to $c_i$ after removing fact $f$. Then consider the structure $(B,\textbf{b})$ obtained as in (2) above by setting $X$ to be the set of distinguished elements in $h(Z)$, $Y$ to be the rest of distinguished elements in $C$, and the fact of $A$ used in the construction to be the $h$-image of $f$. If we let $a_i=h(c_i)$ and $a_j=h(c_j)$ it follows directly from the construction that the distance of $a_i^{(1)}$ and $a_j^{(2)}$ in $(B,\textbf{b})$ is larger than the distance of $a_i$ and $a_j$ in $A$, and hence $(B,\textbf{b})\in \mathfrak{F}$. Finally, it follows easily that the map $h'$ sending every element $c\in C$ to $h(c)^{(1)}$ if $c\in Z$ and to $h(c)^{(2)}$ otherwise defines a homomorphism
from $(C,\textbf{c})$ to $(B,\textbf{b})$. 

The third case is where none of the previous two cases hold. We first note that if the second case does not hold then it must be the case that the restriction of $h$ to each connected component of
$\skeleton(C,\textbf{c})$ must be injective. Furthermore, since $(C,\textbf{c})$ and $(A,\textbf{a})$
have the same number of connected components it follows that $h$ maps $\skeleton(C,\textbf{c})$ isomorphically to $\skeleton(A,\textbf{a})$. It follows that there exists some node $s$ in the skeleton of $C$ such that the offshoot $(A|h(s),h(s))$ does \emph{not} homomorphically map to $(C,s)$, since otherwise, $(A,s)\to (C,s)$. Consider the offshoot $(C|s,s)$ of $C$ and let $D$ be the maximal substructure of $C$ that contains $s$, is connected, and satisfies the property that $h(D)$ is contained in $A|h(s)$ and that no other element, besides $s$, is mapped by $h$ to $h(s)$. 
Since $(D,s)\stricthom (A|h(s),h(s))$ it follows that there exists some homomorphism $g$ from $(D,s)$ to some structure $(F,h(s))$ in the frontier of $(A|h(s),h(s))$. Now consider the structure $(B,\textbf{b})$ in 
$\mathfrak{F}$ produced in step (3) above for $h(s)$ and $(F,h(s))$. We shall construct a homomorphism $h'$ from $(C,\textbf{c})$ to $(B,\textbf{b})$. 

Let $(A',\textbf{a})$ as constructed in step (3) and let $(E,\textbf{c})$ be the maximal $c$-connected substructure of $(C,\textbf{c})$ containing $\skeleton(C,\textbf{c})$ such that no other element in $E$ besides $s$ is mapped by $h$ to $h(s)$. Note that $E$ contains $D$.
We shall start by defining $h'$ on $E$ so that $h'$ defines a homomorphism from  $(E,\textbf{c})$ to $(A',\textbf{a})$.
If $e\in D$ then we define $h'(d)$ to be $g(d)$.
If $e$ does not belong to $D$ then it follows that $h(e)$ cannot be
in the offshot $A|h(s)$. In this case we define $h'(e)$ to be $h(e)$.
Clearly $h'$ defines as well a homomorphism from $(E,\textbf{c})$ to $(B,\textbf{a})$ (since $B$ contains a copy of $A'$). It only remains to extend $h'$ to all the elements in $C$. For every maximal connected substructure $F$ of $C$ not containing any element in $E$ we extend $h'$ to $F$ as follows. Since $E$ contains $\skeleton(C,\textbf{c})$
and $(C,\textbf{c})$ is $c$-connected it follows that there is some fact in $C$ joining elements $e\in E$ and $f\in F$. By the maximality of $E$ it follows that $h(f)=h(s)$. Necessarily $h(f) <_{h(s)} h(e)$ and, consequently, we
can define $h'$ so that $f$ and every other element in $F$ is mapped according to $h$ in the copy of $A$ in $B$ introduced due to $h(f) <_{h(s)} h(e)$. 

Incidentally, it may be worth noting that we did not even make full use of the radial extension: for the purpose of this proof, it would have sufficed in case (3) to use a restricted version of radial extension where a copy of the original structure is added only for every fact 
\emph{connected with $s$}.
\end{proof}

\subsection*{Lifting the  restriction to binary relations}
This concludes the proof for the case with binary relations only.
We now show how to lift the result to schemas containing relations of arbitrary arity.

For a schema $\mathcal{S}$, let $\mathcal{S}^*$ be the schema containing for each $n$-ary relation $R\in \mathcal{S}$, $n$ binary relations $R_1, \ldots, R_n$.
For any structure $A$ over schema $\mathcal{S}$, let $A^*$ be the structure over schema $\mathcal{S}^*$ whose domain consists of all elements in the domain of $A$ as well as all facts of $A$, and containing all facts of the form $R_i(b,f)$ where $f$ is a fact of $A$, of the form $R(\textbf{a})$ with $a_i = b$. Intuitively, we can think of $A^*$ as a bipartite encoding of the structure $A$. Conversely, we associate to every structure $B$ over the schema $\mathcal{S}^*$ a corresponding structure $B_*$ over the original schema $\mathcal{S}$, namely the structure whose domain is the same as that of $B$ and containing all facts of the form $R(a_1, \ldots, a_n)$ for which it is the case that $B$ satisfies $\exists y \bigwedge_{i=1\ldots n} R_i(a_i, y)$. 
Note that $(A^*)_* = A$ but $(B_*)^*$ need not be isomorphic to $B$.

\begin{lemma}\label{lem:binary-reduction}
For all structures $A, A'$ over schema $\mathcal{S}$ and structures $B$ over schema $\mathcal{S}^*$:

\begin{enumerate}
\item If $(B,\textbf{b})\to (A^*,\textbf{a})$ then $(B_*,\textbf{b})\to (A,\textbf{a})$.
\item $(A,\textbf{a})\to (B_*,\textbf{b})$ iff $(A^*,\textbf{a})\to (B,\textbf{b})$.
\item $(A,\textbf{a})\to (A',\textbf{a'})$ iff $(A^*,\textbf{a})\to ({A'}^*,\textbf{a'})$.
\item If $(A,\textbf{a})$ is core, c-connected, and acyclic, then so is $(A^*,\textbf{a})$
\item If $(B,\textbf{b})$ is acyclic, then so is $(B_*,\textbf{b})$. 
\end{enumerate}
\end{lemma}

\begin{proof} 
1. Let $h:(B,\textbf{b})\to (A^*,\textbf{a})$. 
   It is easy to see that, 
   for each element $b$ of $B_*$ that participates in at least one fact, it must
   be the case that $h(b)$ belongs to the domain of $A$. Note that elements of $B_*$ that do not participate in any fact can be ignored, as they can be mapped to an
   arbitrary element of $A$. Finally, it is clear from the constructions that,
   whenever $R(b_1, \ldots, b_n)$ holds true in $B_*$, then $R(h(b_1), \ldots, h(b_n))$ holds true in $A$. 
   
2. Suppose $h: (A,\textbf{a})\to (B_*,\textbf{b})$. We can extend $h$ to the entire domain of $A^*$ as follows: let $f$ be any fact of $A$ of the form 
  $R(a_1, \ldots, a_n)$. Since $B_*$ satisfies $R(h(a_1), \ldots, h(a_n))$, this means that
 $B$ must satisfy $\exists y \bigwedge_{i=1\ldots n} R_i(h(a_i), y)$. 
  Choose any such $y$ as the image of the fact $f$. Doing this for each fact,
  we obtain a mapping $h'$ that extends $h$ to the entire domain of $A^*$. Moreover,
  whenever $R_i(a_i,f)$ holds in $A^*$, then, by construction, $R_i(h'(a_i), h'(f))$ holds true in $B$. In other words $h'$ is a homomorphism from $(A^*,\textbf{a})$ to  $(B,\textbf{b})$
  
  Conversely, suppose $h:(A^*,\textbf{a})\to (B,\textbf{b})$. Let $h'$ be the restriction of $h$ to elements of $A$. We claim that the mapping $h': (A,\textbf{a})\to (B_*,\textbf{b})$ is a homomorphism. Let $f = R(a_1, \ldots, a_n)$ be any fact of $A$. Then $B$ satisfies $R_i(h'(a_i),h(f))$ for all $i=1\ldots n$. 
  Therefore, by construction, $B_*$ satisfies $R(h'(a_1), \ldots, h'(a_n))$.

3. Every homomorphism $h:(A,\textbf{a})\to (A',\textbf{a'})$ natural extends to a homomorphism from $(A^*,\textbf{a})$ to $({A'}^*,\textbf{a'})$ by sending each fact
$R(a_1, \ldots, a_n)$ to $R(h(a_1), \ldots, h(a_n))$. Conversely, every homomorphism
$h:(A^*,\textbf{a})\to ({A'}^*,\textbf{a'})$, when restricted to elements of $A$, is clearly a homomorphism from $(A,\textbf{a})$ to $(A',\textbf{a'})$.

4. For acyclicity, this holds by definition (recall that acyclicity was defined by reference to the incidence graph, in the first place; and note that the
  incidence graph of $(A^*,a)$ is obtained from that of $(A,a)$ by subdividing every
  edge in two). Similarly, it is easy to see
that whenever $(A,\textbf{a})$ is c-connected, then so is $(A^*,\textbf{a})$. Finally, 
to show that core-ness is preserved, we proceed by contraposition: suppose
$(A^*,\textbf{a})$ is not a core, i.e., admits a proper endomorphism $h$. It is not 
hard to see that $h$ must map elements of $A$ to elements of $A$, and fact of $A$ to facts of $A$. Therefore, $h$ must either map two distinct elements of $A$ to the same element, or two distinct facts of $A$ to the same fact. However, even in the latter case, the only way that this can happen is if $h$ also maps two distinct elements to the same element. It follows that $h$ also induces a proper endomorphism of $A$.

5. By contraposition: suppose $(B_*,\textbf{b})$ is not acyclic. Take a minimal cycle in the incidence graph of $(B_*,\textbf{b})$. Each edge that is part of the 
cycle (being a fact of $B_*$), by construction of $B^*$, gives rise to a path of length 2 in $B$. Therefore, we obtain a cycle in $B$.
\end{proof}

Now, let $(A,\textbf{a})$ be any acyclic, c-connected structure over schema $\mathcal{S}$.
We may again assume that $(A,\textbf{a})$ is a core.
By Lemma~\ref{lem:binary-reduction}(4), $(A^*,\textbf{a})$ is also acyclic, c-connected, and core.
Let $F$ be an acyclic c-connected frontier for $(A^*,\textbf{a})$, let $F' = \{(B_*,\textbf{b})\mid (B,\textbf{b})\in F\}$.
We claim that $F'$ is a frontier for $(A,\textbf{a})$. 

First, note that each structure in $F'$ homomorphically maps to $(A,\textbf{a})$. This follows from Lemma~\ref{lem:binary-reduction}(1), because each $(B,\textbf{b})\in F$ homomorphically maps to $(A^*,\textbf{a})$. 
Second, note that $(A,a)$ does not homomorphically map to any structure in $F'$. Indeed, suppose $(A,\textbf{a})\to (B_*,\textbf{b})$ with $(B_*,\textbf{b})\in F'$. Then, by Lemma~\ref{lem:binary-reduction}(2), $(A^*,\textbf{a})\to (B,\textbf{b})$, which contradicts the fact that $F$ is a frontier for $(A^*,\textbf{a})$.
Finally, for the third property of frontiers, suppose $(C,\textbf{c})\to (A,\textbf{a})$ and $(A,\textbf{a})\not\to (C,\textbf{c})$. Then, by Lemma~\ref{lem:binary-reduction}(3), $(A^*,a)\to (C^*,\textbf{c})$ and $(C^*,\textbf{c})\not\to (A^*,\textbf{a})$. Therefore, since $F$ is a frontier for $(A^*,\textbf{a})$, we have that $(C^*,\textbf{c})\to (B,\textbf{b})$ for some $(B,\textbf{b})\in F$. It follows by Lemma~\ref{lem:binary-reduction}(2) that $(C,\textbf{c})\to (B_*,\textbf{b})$. Since $(B^*,\textbf{b})\in F'$, this shows that $(C,\textbf{c})$ maps to a structure in $F'$. 

Note that $F'$ consists of acyclic structures (by Lemma~\ref{lem:binary-reduction}(5)), but may contain structures that are not c-connected. However, this is easily addressed by the following observation, which follows from Proposition~\ref{prop:reach}: if a set of structures $F$ is a frontier for a structure $(A,\textbf{a})$ w.r.t.~a class of c-connected structures $\mathcal{C}$, then $\{(B, \textbf{b})^{\textrm{reach}}\mid (B, \textbf{b})\in F\}$ is also a frontier for $(A,\textbf{a})$ w.r.t.~$\mathcal{C}$.

This concludes the proof of Theorem~\ref{thm:frontier-closed}.

\section{Unique Characterizations for Conjunctive Queries}
\label{sec:characterizations}

In this section, we study the question of when a CQ is uniquely characterizable by a finite set of positive and/or negative examples. 

\begin{definition}[Data Examples, Fitting, Unique Characterizations]
Let $\mathcal{C}$ be a class of $k$-ary CQs over a schema
$\mathcal{S}$ (for some $k\geq 0$), and let $q$ be a $k$-ary query over $\mathcal{S}$.
\begin{enumerate}
    \item A \emph{data example} is a structure $(A,\textbf{a})$ over 
    schema $\mathcal{S}$ with $k$ distinguished elements.
    If $\textbf{a} \in q(A)$, we call $(A,\textbf{a})$ a 
    \emph{positive example} (for $q$), otherwise a
    \emph{negative example}.
\item Let $E^+, E^-$ be finite sets of data examples. We say that $q$ \emph{fits}
$(E^+,E^-)$ if every example in $E^+$ is a positive example for $q$ and 
every example in $E^-$ is a negative example for $q$. We say that
$(E^+,E^-)$ \emph{uniquely
characterizes $q$ w.r.t.~$\mathcal{C}$} if $q$ fits $(E^+,E^-)$ and every 
$q'\in\mathcal{C}$ that fits $(E^+,E^-)$ is logically equivalent to $q$.
\end{enumerate}
\end{definition}

It turns out that there is a precise correspondence between unique characterizations and frontiers. Recall that the canonical structure of a query $q$ is denoted by $\widehat{q}$.
Similarly, for any class of CQs 
$\mathcal{C}$, we will denote by $\widehat{\mathcal{C}}$ the class of structures
$\{\widehat{q}\mid q\in\mathcal{C}\}$.

\begin{proposition}[Frontiers vs Unique Characterizations]
\label{prop:frontiers-characterizations}
Fix a schema $\mathcal{S}$ and $k\geq 0$.
Let $q$ be any $k$-ary CQ over $\mathcal{S}$ and 
$\mathcal{C}$ a class of $k$-ary CQs over $\mathcal{S}$.
\begin{enumerate}
    \item 
If $F$ is a frontier for $\widehat{q}$ w.r.t.~$\widehat{\mathcal{C}}$,  then $(E^+ = \{\widehat{q}\},E^- = F)$ uniquely characterizes $q$ w.r.t.~$\mathcal{C}$.
\item
Conversely, if $(E^+,E^-)$ uniquely characterizes $q$ w.r.t.~$\mathcal{C}$, 
then 
$F = \{ \widehat{q} \times (B,\textbf{b}) \mid 
     (B,\textbf{b})\in E^-\}$ is a frontier for $\widehat{q}$ w.r.t.~$\widehat{\mathcal{C}}$.
\end{enumerate}
\end{proposition}

\begin{proof} 
1. Let $F$ be a frontier for $\widehat{q}$ w.r.t.~$\widehat{\mathcal{C}}$,
let $q'\in\mathcal{C}$ be a conjunctive query that fits $(E^+ = \{\widehat{q}\},E^- = F)$.
From the fact that the canonical structure $\widehat{q}$ is a positive example for $q'$, it follows that there is a homomorphism from $\widehat{q'}$ to $\widehat{q}$. Furthermore, since
all structures in the set $F$ are negative examples for $q'$, we know that $\widehat{q'}$ does not homomorphically map to any of these structures. 
Since $F$ is a frontier w.r.t.~$\widehat{\mathcal{C}}$ and $\widehat{q'}\in\widehat{\mathcal{C}}$, we can conclude that $\widehat{q}$
homomorphically maps to $\widehat{q'}$. Therefore, $\widehat{q}$ and $\widehat{q'}$ are homomorphically equivalent, which implies that $q$ and $q'$ are logically equivalent.

2.
Let $(E^+,E^-)$ uniquely characterize $q$ w.r.t, $\mathcal{C}$,
and let $F = \{ \widehat{q} \times (B,\textbf{b}) \mid 
     (B,\textbf{b})\in E^-\}$.
It follows from the basic properties of the direct product operation that
each structure in $F$ homomorphically maps to $\widehat{q}$.
Furthermore, if there were a homomorphism from $\widehat{q}$ to 
some structure $\widehat{q} \times (B,\textbf{b})\in F$, then 
there would be a homomorphism from $\widehat{q}$ to $(B,\textbf{b})$,
which would imply that $(B,\textbf{b})$ is a positive example for $q$, 
which we know is not the case. 
Finally, consider any $\widehat{q'}\in\widehat{\mathcal{C}}$ such that
$\widehat{q'}\to\widehat{q}$ and $\widehat{q}\not\to\widehat{q'}$.
This implies that $q$ and $q'$ are not logically equivalent, 
and hence, since $q'\in\mathcal{C}$, the two queries must disagree on some example in $E^+$ or $E^-$. However, it follows from the fact that $\widehat{q}\to\widehat{q}$, that all positive examples for $q$ are also positive examples for $q'$. Therefore, some structure $(B,\textbf{b})\in E^+$ must be a positive example for $q'$, that is, $\widehat{q'}\to(B,\textbf{b})$. 
It follows that $\widehat{q'}\to \widehat{q}\times (B,\textbf{b})$.
\end{proof}

Proposition~\ref{prop:frontiers-characterizations} allows us to take the results on frontiers from the previous section, and rephrase them in terms of unique characterizations. Incidentally, note that results in~\cite{AlexeCKT2011} imply an analogous relationship between \emph{finite dualities} and uniquely characterizing sets of examples for \emph{unions of conjunctive queries}.
We need two more lemmas. Recall that a structure $(A,\textbf{a})$ corresponds to a conjunctive
query only if every distinguished element occurs in at least one fact. 
Let us call such structures \emph{safe}. The following lemmas, essentially, allow us 
to ignore unsafe structures, thereby bridging the gap between structures and
CQs. 

\begin{lemma} \label{lem:characterisation-safe}
Let $q$ be a $k$-ary CQ over schema $\mathcal{S}$ and $\mathcal{C}$ a class of $k$-ary CQs over 
$\mathcal{S}$. If $q$ is uniquely characterized w.r.t.~$\mathcal{C}$ by 
positive and negative examples
$(E^+,E^-)$, then $E^+$ consists of safe structures and 
$q$ is uniquely characterized w.r.t.~$\mathcal{C}$ by 
$(E^+, \{(A,\textbf{a})\in E^-\mid \text{$(A,\textbf{a})$ is safe}\})$.
\end{lemma}

\begin{proof} 
It suffices to observe that if $(A,\textbf{a})$ is not safe, then 
$(A,\textbf{a})$ is a negative example for every conjunctive query, and therefore,
cannot meaningfully contribute to characterizing any given conjunctive query w.r.t.~a
class of CQs.
\end{proof}

\begin{lemma}\label{lem:frontier-safe}
A safe structure has a frontier w.r.t.~all structures if and only if it has a frontier w.r.t.~the class of all safe structures.
\end{lemma}

\begin{proof} 
The left-to-right direction is trivial. The right-to-left direction relies on a 
homomorphism duality argument of sorts: let $(A,\textbf{a})$ be any safe structure that has a frontier $F$ w.r.t.~the class of all safe structures. Let $\mathcal{S}$ be its schema and $k$ the number of distinguished elements.
For every non-empty set $S\subseteq \{1, \ldots, k\}$, we will
denote by $(A_S,\textbf{a}_S)$ the structure with two elements, denoted $b$ and $c$, 
 that contains all possible facts involving only $c$ and no other facts; and 
 $\textbf{a}_S$ is the tuple $a_1, \ldots, a_k$, where $a_i=b$ if $i\in S$, and $a_i=c$ otherwise. Note that, by construction, none of these structures is safe.
It is not hard to see that a structure is unsafe if and only if it admits a homomorphism
to a structure in the set $G=\{(A_S,\textbf{a}_S)\mid 
\text{$S\subseteq \{1, \ldots, k\}$ is non-empty}\}$. 
Now, let $G' = \{ (A,\textbf{a})\times (B,\textbf{b})\mid (B,\textbf{b})\in G\}$. 
Then $F\cup G'$ is a frontier for $(A,\textbf{a})$ w.r.t.~all structures.
To see this, note that each structure in $F\cup G'$ homomorphically maps to $(A,\textbf{a})$. Furthermore,
$(A,\textbf{a})$ does not map to any structure in $F$ (by initial assumption) or in $G'$ 
(because $G'$ consists of unsafe structures while $(A,\textbf{a})$ is safe). Finally, 
consider any $(C,\textbf{c})\stricthom (A,\textbf{a})$. If $(C,\textbf{c})$ is safe, then 
it maps to a structure in $F$. Otherwise, it maps to a structure in $G$ and hence also to the corresponding structure in $G'$. In either case, it maps to a structure in $F\cup G'$.
\end{proof}

Putting everything together, we obtain the main result of this section.
We call a CQ $q$ c-acyclic (or acyclic, or c-connected) if the structure
$\widehat{q}$ is c-acyclic (resp.~acyclic, c-connected).

\begin{theorem} \label{thm:characterizations-main}
Fix a schema and fix $k\geq 0$.
\begin{enumerate}
    \item If $\mathcal{C}$ is a class of $k$-ary CQs such
    that $\widehat{C}$ has bounded expansion, then every CQ $q\in\mathcal{C}$ is uniquely characterizable w.r.t.~$\mathcal{C}$ by finitely many positive and negative examples (which can be effectively constructed from the query).
    \item A $k$-ary CQ $q$ is uniquely characterizable by finitely
    many positive and negative examples
    (w.r.t.~the class of all $k$-ary CQs) iff $q$ is
    logically equivalent to a c-acyclic CQ.
    Moreover, for a c-acyclic CQ, a uniquely characterizing set of examples can be constructed in polynomial time. 
    \item Assume $k\geq 1$ 
    and let $\mathcal{C}_{ca}$ be the class of $k$-ary CQs that are c-connected and acyclic. Then every $q\in \mathcal{C}_{ca}$ is uniquely characterizable w.r.t.~$\mathcal{C}_{ca}$ by
    finitely many positive and negative examples belonging to
    $\widehat{\mathcal{C}_{ca}}$. Moreover, the set of examples in question can be constructed in polynomial time.
    \end{enumerate}
\end{theorem}

\begin{remark}\label{rem:unsafe}
For the purpose of applications discussed in Section~\ref{sec:applications}, we note that Theorem~\ref{thm:characterizations-main} remains true if the safety condition for CQs were to be dropped. Indeed, the proof in this case is even simpler, as it does not require Lemma~\ref{lem:characterisation-safe} and Lemma~\ref{lem:frontier-safe}.
\end{remark}

\subsection*{Examples showing that Theorem~\ref{thm:characterizations-main}(3) cannot easily be generalized.}

Theorem~\ref{thm:characterizations-main}(3) applies to CQs of arity $k \geq 1$ that are c-connected, and acyclic. None of these restrictions can be dropped. 
Recall that a CQ of arity zero is called a \emph{Boolean CQ}.


\newcommand{\bG}{\operatorname{{\bf G}}}
\newcommand{\bI}{\operatorname{{\bf I}}}
\newcommand{\bJ}{\operatorname{{\bf J}}}
\newcommand{\bA}{\operatorname{{\bf A}}}
\newcommand{\bB}{\operatorname{{\bf B}}}
\newcommand{\dom}{\operatorname{dom}}
\newcommand{\bT}{\operatorname{{\bf T}}}
\newcommand{\bR}{\operatorname{{\bf R}}}
\newcommand{\bS}{\operatorname{{\bf S}}}
\newcommand{\invd}{\operatorname{sdr}}

\begin{theorem}\label{thm:unary-is-necessary}
The Boolean acyclic connected CQ ~
$\bT() \textrm{ :- } Ry_1y_2\land Ry_2y_3\land Ry_3y_4 \land Ry_4y_5$
is not characterized, w.r.t.~the class of Boolean acyclic connected CQs, by finitely many acyclic positive and negative examples.
\end{theorem}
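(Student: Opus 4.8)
The plan is to exhibit, for every natural number $N$, an acyclic connected Boolean CQ $q_N$ that is \emph{not} logically equivalent to $\bT$ but nevertheless agrees with $\bT$ on every acyclic example of size at most $N$. Since any purported uniquely characterizing set $(E^+,E^-)$ of acyclic examples is finite, it has a maximum size, and the corresponding $q_N$ then fits $(E^+,E^-)$ while differing from $\bT$, contradicting unique characterization w.r.t.\ the class of acyclic connected Boolean CQs.

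The natural candidate is to take $q_N$ to be (the canonical CQ of) a long directed $R$-path, say of length $M$ for $M$ much larger than both $4$ and $N$. Observe first that $\bT$ corresponds to the directed $R$-path $P_4$ of length $4$ (five variables), so $\bT$ holds in a structure $A$ iff $P_4 \to A$, i.e.\ iff $A$ contains a homomorphic image of $P_4$; and $q_N$ holds in $A$ iff $P_M \to A$. Since $P_4 \to P_M$ but $P_M \not\to P_4$, the queries are not logically equivalent (indeed $q_N \subsetneq \bT$ as Boolean queries on all structures). It remains to check that $\bT$ and $q_N$ cannot be separated by a \emph{small acyclic} example. For a negative example $(A)$ of $\bT$ we need $P_4 \not\to A$; I claim that then also $P_M \not\to A$, so $(A)$ is automatically a negative example of $q_N$ as well — this direction is free and needs no size or acyclicity bound. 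The content is in the positive examples: if $(A)$ is a \emph{small acyclic} positive example of $\bT$, I must show $P_M \to A$, i.e.\ $A$ contains arbitrarily long directed $R$-paths. This is where acyclicity of $A$ and the size bound are used: in an acyclic (Berge-acyclic, so in particular directed-cycle-free once we note the relevant structure) small structure, the existence of \emph{some} directed $R$-path of length $4$ forces the existence of a long one, because... — and here is the crux.

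The hard part will be making precise why a small acyclic positive example of $\bT$ must contain a long path. A structure with an acyclic incidence graph and a single binary relation $R$ is essentially a disjoint union of directed trees (orientations of trees, with no element repeated in a fact), hence contains no directed cycle at all; a homomorphism $P_4 \to A$ into such a structure must be injective on a walk, so $A$ literally contains a directed path of $5$ distinct vertices. But that only gives length $4$, not length $M$. So the candidate "long directed path" for $q_N$ is too crude; I expect the right choice is instead to let $q_N$ be $\bT$ \emph{augmented with extra dangling structure that is homomorphically invisible on small acyclic instances}, or, more promisingly, to use a \emph{padding} argument: take $q_N$ to be a CQ whose canonical structure is a path in which the relation $R$ is replaced on one edge by a "gadget" that behaves like $R$ on all small structures but blocks the homomorphism $\bT \to q_N$ globally. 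Concretely, I would build $q_N$ so that $\widehat{q_N}$ is obtained from $\widehat{\bT}$ by a construction (in the spirit of the exponential-duality examples of Ne\v{s}et\v{r}il--Tardif cited in the paper) that leaves the "$\le N$-neighbourhood behaviour" intact; then every acyclic example distinguishing $\bT$ from $q_N$ must "see" the gadget and hence be large. Pinning down this gadget — and proving both that $q_N \not\equiv \bT$ and that the two queries are indistinguishable by acyclic examples below the threshold — is the main obstacle, and is presumably exactly what the authors' proof carries out in detail.

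Once the family $\{q_N\}_{N\in\mathbb{N}}$ with these two properties (not equivalent to $\bT$; indistinguishable from $\bT$ on acyclic examples of size $\le N$) is in hand, the theorem follows immediately: given any finite $(E^+,E^-)$ of acyclic examples fitting $\bT$, set $N$ to exceed the size of the largest example in $E^+\cup E^-$; then $q_N$ fits $(E^+,E^-)$ as well, witnessing that $(E^+,E^-)$ does not uniquely characterize $\bT$ w.r.t.\ the class of acyclic connected Boolean CQs.
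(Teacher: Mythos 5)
There is a genuine gap: the entire construction that makes the theorem true is missing, and the direction in which you are trying to approximate $\bT$ cannot work. Your candidate $q_N=P_M$ (and, more generally, any $q_N$ with $q_N\subsetneq \bT$) is defeated by the single small acyclic positive example $\widehat{\bT}$ itself, i.e.\ the five-vertex directed path: it is acyclic, connected, and a positive example for $\bT$, so the adversary may place it in $E^+$, and any $q$ fitting it satisfies $\widehat{q}\to\widehat{\bT}$, hence $\bT\subseteq q$. Thus the fooling query must be \emph{strictly weaker} than $\bT$ (i.e.\ $\widehat{q_N}\to\widehat{\bT}$ but $\widehat{\bT}\not\to\widehat{q_N}$), the positive examples are then handled for free, and all of the real work sits on the \emph{negative} side --- the opposite of where your proposal locates it. You correctly diagnose that the long path fails, but the ``gadget'' you then gesture at is left entirely unconstructed, and as sketched it again aims at breaking the wrong containment.

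The paper's proof supplies exactly the missing object: a long ``zigzag'' query $\bR$ built from overlapping directed paths of length $3$, glued by connector atoms, whose canonical structure maps onto $\widehat{\bT}$ via $z^i_j\mapsto y_j$ (so $\bT\subseteq\bR$ and $\bR$ fits every positive example) yet contains no directed path of length $4$ (so $\bR\not\equiv\bT$). Then $\bR$ must be true in some negative example $A\in E^-$ of size at most $n$; a pigeonhole argument shows the witnessing homomorphism $h:\widehat{\bR}\to A$ identifies two vertices at distance $2$ in the tree underlying $\widehat{\bR}$, and a case analysis on where the fold occurs shows that $A$ must contain either a directed cycle (contradicting acyclicity of the examples) or a directed path of length $4$ (contradicting $A\in E^-$). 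Your outer ``compactness'' skeleton --- for each $N$ produce a non-equivalent query agreeing with $\bT$ on all examples of size $\le N$ --- matches the paper's, but without the zigzag construction and the folding argument on small acyclic structures, the proof does not go through.
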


\begin{proof} 
For the sake of a contradiction, assume that $(E^+,E^-)$ is a finite collection of acyclic
positive and negative examples that uniquely characterizes $\bT$ within the class of Boolean acyclic connected CQs. Let $n$ be a bound on the size of the examples in $E^+$ and $E^-$. 
Consider now the following Boolean acyclic connected  conjunctive query $\bR$ (cf.~Figure~\ref{fig:counterexample}):

\[ \bR() \text{ :- } 
\mathop{\bigwedge_{i\in\{1,\ldots,n\} \text{ odd}}}_{j\in\{1,2,3\}} \hspace{-5mm} R(z_j^i,z_{j+1}^i) ~~~\land
\mathop{\bigwedge_{i\in\{1,\ldots,n\} \text{ even}}}_{j\in\{2,3,4\}} \hspace{-5mm} R(z_j^i,z_{j+1}^i) ~~~\land
\bigwedge_{i\in\{1,\ldots,n\} \text{ even}} \hspace{-5mm} R(z^i_3,z_4^{i-1})\land R(z^i_2,z_3^{i+1})
\]

\begin{figure}
\centering
$\begin{array}{c@{}c@{}c@{}c@{}c@{}c@{}c@{}c@{}c@{}c@{}c}
         &          & z^2_5    &          &          &          & z^4_5    &          &          &  \\ 
         &          & \uparrow &          &          &          & \uparrow &          &          &  \\ 
z^1_4    &          & z^2_4    &          & z^3_4    &          & z^4_4    &          & z^5_4    &  \\ 
\uparrow & \nwarrow & \uparrow &          & \uparrow & \nwarrow & \uparrow &          & \uparrow & \nwarrow \\ 
z^1_3    &          & z^2_3    &          & z^3_3    &          & z^4_3    &          & z^5_3    & & ~~\cdots \\ 
\uparrow &          & \uparrow & \nearrow & \uparrow &          & \uparrow & \nearrow & \uparrow &  \\ 
z^1_2    &          & z^2_2    &          & z^3_2    &          & z^4_2    &          & z^5_2    &  \\ 
\uparrow &          &          &          & \uparrow &          &          &          & \uparrow &  \\ 
z^1_1    &          &          &          & z^3_1    &          &          &          & z^5_1    &  \\ 
\end{array}$
\caption{The (canonical structure of the) conjunctive query $\bR$}
\label{fig:counterexample}
\end{figure}

Note that $\bR$ and $\bT$ are \emph{not} logically equivalent since $\bR$ does not contain any directed path of length $4$. The mapping  that sends  $z^i_j$ to $y_j$ defines a homomorphism from (the canonical structure of) $\bR$ to (the canonical structure of) $\bT$. This implies that $\bT\subseteq \bR$, that is, every positive example for $\bT$ is also a positive example for $\bR$, and hence, in particular, $\bR$ fits all the positive examples in $E^+$.
Since $\bR$ and $\bT$ are not logically equivalent, $\bR$ must therefore disagree with $\bT$ on one of the 
negative examples, that is, some example $A\in E^-$ is a positive example for $\bR$.
Let $h:\widehat{\bR}\to A$ be a witnessing homomorphism.

\bigskip

\textbf{Claim:}
There are two elements $u$ and $v$ in $\widehat{\bR}$ at distance $2$ (distance, here, is measured in the underlying tree of $\widehat{\bR}$) such that $h(u)=h(v)$.

\bigskip

\emph{Proof of claim:}
Since $A$ has at most $n$ elements it follows that $h$ is not injective, so there are two elements $u'$, $v'$ of $\widehat{\bR}$ that are mapped by $h$ to the same element. Now,
consider the unique path $u'=x_1,\dots,x_m=v'$ in (the underlying tree of) $\widehat{\bR}$ connecting $u'$ and $v'$. Then, $h(x_1),\dots, h(x_m)$ is a walk in (the underlying tree of)
$A$. Indeed, it is a closed walk since $h(x_1)=h(x_m)$. Now, since $h(x_1),\dots,h(x_m)$ is a closed walk in a tree, it must backtrack in some vertex $h(x_i)$. This means that $h(x_{i-1})=h(x_{i+1})$.
\emph{End of proof of claim.}

\bigskip

Let $u=z^i_j$ and $v=z^{i'}_{j'}$ be the two elements at distance $2$ that are mapped by $h$ to the same element in $A$. It follows from  the definition of $\bR$ that, since $u$ and $v$ are at distance $2$, $|i-i'|\leq 1$. We first consider the case where $i=i'$. In this case, we can assume wlog. that  $j'=j+2$. It then follows that $h(z^i_j),h(z^i_{j+1}),h(z^i_{j+2})$ is a directed cycle in $A$, a contradiction because $A\in E^-$,
and $E^-$ was assumed to consist of acyclic structures.
Next, consider the case where $i\neq i'$. Then $|i-i'|=1$. We can assume without loss of generality that $i$ is even and $i'$ is odd. Again it follows directly by the construction of $\bR$ that $j'=j$. Note that $h(z^{i'}_1),\dots,h(z^{i'}_{j'})$ is a directed path in $A$ of length $j'-1$. Similarly $h(z^i_j),\dots,h(z^i_5)$ is a directed path in $A$ of length $5-j$. Since $h(z^i_j)=h(z^{i'}_{j'})$ it follows that we can concatenate the two directed paths obtaining a directed path of length $4+j'-j\geq 4$. Consequently $\widehat{\bT}$ homomorphically maps to $A$, a contradiction because $A\in E^-$.
\end{proof}

This shows that in Theorem~\ref{thm:characterizations-main}(3), 
the restriction to \emph{non-Boolean} queries cannot be dropped. Similarly, 
the restriction to \emph{c-connected} queries cannot be dropped, and
acyclicity cannot be replaced by the weaker condition of c-acyclicity.
\begin{theorem}\label{thm:connectedness-is-necessary}
The unary acyclic CQ ~~
$\bT'(x) \text{ :- } P(x)\land Ry_1 y_2\land Ry_2y_3\land Ry_3y_4 \land Ry_4 y_5$ is not uniquely characterizable, w.r.t.~the class of unary acyclic CQs,  by finitely many acyclic positive and negative examples.
\end{theorem}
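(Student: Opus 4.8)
The plan is to reduce to Theorem~\ref{thm:unary-is-necessary} --- more precisely, to re-use and slightly strengthen its competitor construction. Write $\bT'(x)$ as $P(x)\land\bT^\circ$, where $\bT^\circ$ denotes the Boolean CQ $Ry_1y_2\land Ry_2y_3\land Ry_3y_4\land Ry_4y_5$; this is exactly the query $\bT$ of Theorem~\ref{thm:unary-is-necessary}, and $\widehat{\bT^\circ}$ is a directed $R$-path with four edges. Since the atom $P(x)$ shares no variable with the atoms of $\bT^\circ$, for every structure $(A,a)$ we have $a\in\bT'(A)$ iff $P^A(a)$ holds and $\widehat{\bT^\circ}\to A_R$, where $A_R$ is the $R$-reduct of $A$ (obtained by discarding the $P$-facts); more generally, for every Boolean acyclic connected CQ $q$ over $\{R\}$ the unary CQ $q^+(x):=P(x)\land q$ is again acyclic, satisfies $q^+\equiv\bT'$ iff $q\equiv\bT^\circ$, and has $a\in q^+(A)$ iff $P^A(a)$ and $\widehat q\to A_R$.

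Suppose, for contradiction, that a finite set $(E^+,E^-)$ of acyclic data examples uniquely characterizes $\bT'$ with respect to the class of unary acyclic CQs. Unwinding the definitions, a CQ $q^+=P(x)\land q$ (with $q$ Boolean, acyclic, connected) fits $(E^+,E^-)$ precisely when (i)~$\widehat q\to A_R$ for all $(A,a)\in E^+$, and (ii)~$\widehat q\not\to A_R$ for all $(A,a)\in E^-$ with $P^A(a)$; examples in $E^-$ where $P^A(a)$ fails impose no constraint, as then $a\notin q^+(A)$ automatically. Two observations simplify this. First, each reduct $A_R$ occurring in (ii) is a directed acyclic graph of height at most three (no directed path of four edges): it admits no homomorphism from $\widehat{\bT^\circ}$ since $(A,a)$ is a negative example of $\bT'$, and any structure with a directed cycle --- or with a directed path of four edges --- would admit one. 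Second, because $\widehat{\bT^\circ}\to A_R$ for every $(A,a)\in E^+$, condition (i) holds automatically once $\widehat q\to\widehat{\bT^\circ}$. Hence it suffices to exhibit a Boolean acyclic connected CQ $q$ with $\widehat q\to\widehat{\bT^\circ}$ and $\widehat{\bT^\circ}\not\to\widehat q$ (so that $q^+\not\equiv\bT'$ while $q^+$ still satisfies (i)) for which $\widehat q\not\to F$ holds for each of the finitely many DAGs $F$ of height at most three arising as reducts of examples in $E^-$: then $q^+$ fits $(E^+,E^-)$ but is not logically equivalent to $\bT'$, contradicting the assumed unique characterization.

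Producing such a $q$ is the crux, and amounts to Theorem~\ref{thm:unary-is-necessary} in a strengthened form. I would fix an infinite family $q_1,q_2,\dots$ of pairwise inequivalent Boolean acyclic connected CQs, each with $\widehat{q_n}$ tree-shaped, mapping homomorphically to a directed path of four edges but containing no directed path of four edges --- for instance ``stretched'' variants of the smallest such witness, with $n$ copies of a fixed gadget chained together --- and then argue that for any fixed finite collection $\mathcal F$ of DAGs of height at most three, all but finitely many $\widehat{q_n}$ fail to map into every $F\in\mathcal F$. I would prove this by pigeonhole on a hypothetical homomorphism $\widehat{q_n}\to F$: once $n>|F|$, the homomorphism must collapse several chained gadgets of $\widehat{q_n}$ onto one another, and a short case analysis shows that this forces $F$ to contain either a directed cycle or a directed path of four edges --- impossible for a DAG of height at most three.

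I expect the real obstacle to be exactly this pigeonhole step: the competitors $\widehat{q_n}$ must evade \emph{all} sufficiently small DAGs of height at most three, not merely the tree-shaped ones. If only tree-shaped reducts had to be evaded, a black-box appeal to Theorem~\ref{thm:unary-is-necessary} would do --- take the reducts of $E^-$ as Boolean negative examples and add $\widehat{\bT^\circ}$ as a positive example --- but non-tree reducts genuinely occur here, since an example $(A,a)\in E^-$ is only required to be acyclic as a structure \emph{with} the distinguished element $a$, so $A_R$ may contain cycles passing through $a$ (for instance a transitive triangle on $a$ and two fresh elements). The remaining ingredients --- the translation between $\bT'$ and $\bT^\circ$, the safety and acyclicity of $q^+$, and the reduction to conditions (i)--(ii) --- are routine.
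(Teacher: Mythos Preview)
Your overall strategy---reduce to Theorem~\ref{thm:unary-is-necessary} by lifting a Boolean competitor $q$ for $\bT$ to the unary competitor $P(x)\land q$ for $\bT'$---is correct and is exactly what the paper does (the paper organizes the reduction via connected components of the examples rather than $R$-reducts, a cosmetic difference).

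The gap is a misreading of the hypothesis. In this paper ``acyclic'' means that the incidence graph of the structure is acyclic (Berge acyclicity), independently of the distinguished element; the weaker, element-dependent condition is what the paper calls \emph{c-acyclic}, and Theorem~\ref{thm:characterizations-main}(3) explicitly distinguishes the two. Since the examples in $(E^+,E^-)$ are acyclic, not merely c-acyclic, every $R$-reduct $A_R$ is already a forest, and your transitive-triangle example is excluded. Your own remark then finishes the proof: the black-box appeal to Theorem~\ref{thm:unary-is-necessary} works directly. The proposed workaround for the non-existent obstacle is not only unnecessary but, as stated, false: if arbitrary DAGs of height at most three were allowed as negative reducts, the transitive tournament $T_4$ on $\{0,1,2,3\}$ would qualify, and since every digraph with no directed path of four edges maps homomorphically to $T_4$ (send each vertex to the length of the longest directed path ending there), every one of your candidates $\widehat{q_n}$ would map to $T_4$ simultaneously---no pigeonhole argument can rescue the claim.
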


\begin{proof} 
Assume for the sake of a contradiction that there are finitely many acyclic positive and negative examples $(E^+,E^-)$ that uniquely characterizes $\bT'(x)$ w.r.t.~the class of unary acyclic CQs. 
We will construct acyclic positive and negative examples $(E'^+,E'^-)$ that uniquely characterizes $\bT$ w.r.t.~the class of Boolean acyclic c-connected CQs,  contradicting Theorem~\ref{thm:unary-is-necessary}.
For each example in $(A,a)\in E^+\cup E^-$, we take every connected component of $A$
(without any distinguished element) and add it to $E'^+$ or $E'^-$,
depending on whether the component in question satisfies $\bT$.
Note that each of these examples is acyclic. 
By construction, the query $\bT$ fits $(E'^+,E'^-)$. We claim that $(E'^+,E'^-)$ uniquely characterizes $\bT$ w.r.t.~the class of Boolean acyclic connected CQs. 
To see this, let $q$ be any Boolean acyclic connected query that fits $(E'^+,E'^-)$.
Now, let $q'(x)$ be the unary acyclic (disconnected) query that is the conjunction of  $q$ with $P(x)$. Then it is not hard to see that $q'(x)$ fits $(E^+,E^-)$, and therefore, $q'(x)$ is equivalent to $\bT'(x)$. From this, it easily follows that $q$ must be equivalent to $\bT$: any counterexample to the equivalence of $q$ and $\bT$ can be extended to a counterexample to the equivalence of $q'(x)$ and $\bT'(x)$ simply by adding an isolated element satisfying $P$. 
\end{proof}

\begin{theorem}\label{thm:acyclicity-is-necessary}
The unary c-acyclic c-connected CQ ~
$\bT''(x) \text{ :- } Ry_1 y_2\land Ry_2y_3\land Ry_3y_4 \land Ry_4 y_5\land \bigwedge_{i=1\ldots 5} Rxy_i$
~
is not uniquely characterizable, 
w.r.t.~the class of unary c-acyclic c-connected CQs, by 
finitely many c-acyclic positive and negative examples.
\end{theorem}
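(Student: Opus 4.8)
The plan is to exhibit, for each $n$, a positive example and a negative example that cannot be separated by any CQ in $\mathcal{C}_n$ (the class of unary c-acyclic c-connected CQs of small size), thereby showing no finite set of c-acyclic examples can distinguish $\bT''$ from some other query in the class. Concretely, write $\bT'' = (A, a)$ where $A$ has the ``spine'' path $y_1 \to y_2 \to \cdots \to y_5$ together with the ``hub'' edges $R x y_i$; note $\bT''$ is c-acyclic (every incidence cycle passes through $x$) and c-connected. The obstruction, just as in Theorem~\ref{thm:unary-is-necessary}, comes from the fact that long directed paths have only exponential-size dualities: the structure $\bT''$ essentially encodes a length-$4$ directed path ``hanging off'' the distinguished element, and detecting the absence of a homomorphism from a long path to a short one requires large witnesses.

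The key steps, in order: (1) Fix a finite candidate set of c-acyclic examples $(E^+, E^-)$ that $\bT''$ fits, and let $N$ be larger than the size of every example in it. (2) Construct a ``decoy'' query $q_N \in \mathcal{C}_{\text{unary, c-acyclic, c-connected}}$ that is \emph{not} logically equivalent to $\bT''$ but agrees with $\bT''$ on all structures of size below $N$ — for instance, replace the $4$-edge spine by a longer gadget, or by a transitive-tournament-like structure on $\{1,\dots,N-1\}$ attached to $x$, chosen so that $q_N$ and $\bT''$ are homomorphically incomparable yet the homomorphism order below size $N$ cannot see the difference. (3) Show $q_N$ fits $(E^+, E^-)$: for $E^+$, use that $\bT'' \to \widehat{q_N}$-style structures and that each positive example, being small, admits a homomorphism from $\widehat{q_N}$ iff it admits one from $\bT''$; for $E^-$, use that $\widehat{q_N} \not\to B$ for each small negative example $B$ because such a homomorphism would, by a pigeonhole/walk argument on the spine (mirroring the diameter argument in the proof of Proposition~\ref{prop:fgconnected-cacyclic-unp}), force $\bT'' \to B$, contradicting $B \in E^-$. (4) Conclude that $(E^+, E^-)$ does not uniquely characterize $\bT''$ w.r.t.\ the class, since $q_N$ is a non-equivalent fitting query.

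The heart of the argument — and the main obstacle — is step (2) together with the size-below-$N$ indistinguishability claim in step (3): one must design the decoy $q_N$ so that it genuinely lies in the restricted class (c-acyclic, c-connected, unary, and of bounded ``shape'' if the class is taken to be bounded in size), is provably not equivalent to $\bT''$, and yet is homomorphically indistinguishable from $\bT''$ by all structures up to size $N$. The cleanest route is to invoke the known exponential lower bound on dualities for long directed paths (as cited via \cite{NesetrilTardif2005} in the discussion after Theorem~\ref{thm:cacyclic-polyfrontier}): any finite set of c-acyclic examples separating $\bT''$ from its neighbors in the homomorphism order would yield, after taking products with $\widehat{\bT''}$ as in Proposition~\ref{prop:frontiers-characterizations}, a frontier for $\widehat{\bT''}$ w.r.t.\ the restricted class, and restricting a frontier to stay inside the acyclic class forces the ``path-detection'' components to blow up super-polynomially — indeed, to be infinite, since within the acyclic class one cannot build the cyclic product structures that a genuine frontier requires. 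I would therefore phrase the final contradiction as: a finite c-acyclic uniquely characterizing set would give a finite frontier for $\widehat{\bT''}$ consisting of c-acyclic structures, but an explicit ascending chain of c-acyclic structures strictly below $\widehat{\bT''}$, with no c-acyclic upper bound below $\widehat{\bT''}$, shows no such finite frontier exists.

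Finally, I would remark that the same three examples $\bT, \bT', \bT''$ are engineered in parallel: each isolates one hypothesis of Theorem~\ref{thm:characterizations-main}(3), and the $\bT''$ case is the subtlest because $\bT''$ \emph{is} c-acyclic and c-connected, so the failure is attributable purely to allowing c-acyclic (rather than genuinely acyclic) \emph{examples} — the frontier of $\widehat{\bT''}$ produced by Theorem~\ref{thm:cacyclic-polyfrontier} necessarily contains non-acyclic structures (cf.\ the first Main Contribution, where this unavoidability is asserted), and no c-acyclic substitute exists.
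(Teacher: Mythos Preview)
Your proposal is not a proof but a survey of possible approaches, none of which you carry through. The decisive gap is step~(2): you never actually construct the decoy query $q_N$. The suggestions you offer (``a longer gadget'', ``a transitive-tournament-like structure'') are not verified to lie in the class of unary c-acyclic c-connected CQs, and the indistinguishability claim in step~(3) is asserted without argument. Your fallback via frontiers is circular --- the claim that ``no c-acyclic substitute exists'' for the frontier elements is precisely the content of the theorem --- and the appeal to exponential lower bounds on path dualities concerns \emph{size}, not \emph{existence}, so it does not by itself rule out a finite (if large) characterizing set.

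The paper proves the result by a clean reduction to Theorem~\ref{thm:unary-is-necessary}. Observe that $\bT''$ is just $\bT$ with a hub $x$ attached by $R$-edges to every spine vertex. Given a hypothetical finite c-acyclic characterization $(E^+,E^-)$ of $\bT''$, for each negative example $(A,a)\in E^-$ let $A'$ be the substructure of $A$ induced by $\{b : R(a,b)\text{ holds in }A\}$; since $(A,a)$ is c-acyclic and $a\notin A'$ (else $R(a,a)$ would already make $(A,a)$ positive for $\bT''$), the structure $A'$ is genuinely acyclic. One then shows that $(\{\widehat{\bT}\},\{A' : (A,a)\in E^-\})$ uniquely characterizes $\bT$ among Boolean acyclic connected CQs: any fitting $q$ in that class lifts, by re-attaching a hub (add a conjunct $R(x,y)$ for each variable $y$ of $q$), to a unary c-acyclic c-connected $q'$ that fits $(E^+,E^-)$, forcing $q'\equiv\bT''$ and hence $q\equiv\bT$. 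This contradicts Theorem~\ref{thm:unary-is-necessary}. Your direct approach could in principle be completed along the same lines --- take the query $\bR$ from that proof and attach a hub --- but you did not do so, and the reduction is both shorter and avoids redoing the combinatorial core.
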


\begin{proof} 
Assume towards a contradiction that $\bT''$ is uniquely characterized, w.r.t.~the class of unary c-acyclic c-connected CQs, by
a finite collection of c-acyclic positive and negative examples $(E^+,E^-)$.
Let $E'^-$ be the set that contains, for each $(A,a)\in E^-$, the substructure $A'$
of $A$ consisting of all elements $b$ satisfying $R(a,b)$. Note that this substructure
does not include $a$ itself (because if $R(a,a)$ was true in $A$, then 
$(A,a)$ would have been a positive example for $\bT''$) and therefore must be acyclic. 
We claim that $(\{\widehat{\bT}\}, E'^-)$ uniquely characterizes $\bT$ w.r.t.~the class of
Boolean acyclic connected CQs,
contradicting Theorem~\ref{thm:unary-is-necessary}. 

Let $q$ be any Boolean acyclic connected conjunctive query that fits $(\{\widehat{\bT}\}, E'^-)$. Let $q'(x)$ be the conjunctive query expressing that $q$ holds true in the 
substructure consisting of elements reachable by an $R$-edge from $x$. Note that $q'(x)$
can be obtained by extending $q$ with an additional conjunct $R(x,y)$ for every variable $y$ occurring in $q$. Also, note that $q'(x)$ is c-acyclic and c-connected.
Since $\widehat{\bT}$ is a positive example for $q$, 
there is a homomorphism $h:\widehat{q}\to\widehat{\bT}$. By extending $h$ in the obvious way, we have that $\widehat{q'}\to \widehat{\bT''}$. Therefore, 
$\bT''\subseteq q'$, and hence 
every positive example
for $\bT''$ is also a positive example for $q'$, and hence $q'$ fits all the positive
examples in $E^+$. Similarly, $q'$ fits all negative examples in $E^-$, because, if it did not,
then there would be a homomorphism from $\widehat{q'}$ to some $(A,a)\in E^-$, 
from which it would clearly follow that $q$ homomorphically maps to the corresponding $A'\in E'^-$,
which we know is not the case. Therefore, $q'$ fits all examples in $(E^+,E^-)$, and hence,
$q'$ is logically equivalent to $\bT''$. It follows that $q$ must be logically
equivalent to $\bT$: any counterexample for the equivalence of $q$ and $\bT$ can be extended to a counterexample for the equivalence of $q'$ and $\bT''$ by adding a fresh distinguished element connected to all existing elements by means of an $R$-edge.
\end{proof}

\section{Exact learnability with membership queries}
\label{sec:learning}

The unique characterization results in the previous section immediately imply (not-necessarily-efficient) exact learnability results:

\begin{theorem} \label{th:learn-bounded-expansion}
Fix a schema and $k\geq 0$.
Let $\mathcal{C}$ be a computably enumerable 
class of $k$-ary CQs. If
 $\widehat{\mathcal{C}}$ has bounded expansion,
then
$\mathcal{C}$ is exactly learnable with membership queries.
\end{theorem}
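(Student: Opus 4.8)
The plan is to build an exact learning algorithm that uses only membership queries and relies on two ingredients: the computable enumerability of $\mathcal{C}$, and the effective unique characterizability guaranteed by Theorem~\ref{thm:characterizations-main}(1) (which applies because $\widehat{\mathcal{C}}$ has bounded expansion). The learner never needs to be efficient here, so a brute-force dovetailing strategy suffices.

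Concretely, I would proceed as follows. First, fix an effective enumeration $q_1, q_2, \ldots$ of all CQs in $\mathcal{C}$ (using the given computable enumerability). The learner processes these queries in order. For the $i$-th candidate $q_i$, it invokes the effective construction from Theorem~\ref{thm:characterizations-main}(1) to compute a finite set of positive and negative examples $(E^+_i, E^-_i)$ that uniquely characterizes $q_i$ w.r.t.~$\mathcal{C}$. It then queries the membership oracle $\text{MEM}_c$ on every example in $E^+_i \cup E^-_i$ and checks whether the observed labels agree with $(E^+_i, E^-_i)$, i.e., whether the goal concept $c$ fits $(E^+_i, E^-_i)$. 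If all labels agree, the learner outputs $q_i$ and halts; otherwise it moves on to $q_{i+1}$.

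For correctness, observe that the goal concept is (by assumption) some CQ in $\mathcal{C}$, hence logically equivalent to $q_j$ for some least index $j$. When the learner reaches stage $j$, the oracle's answers on $E^+_j \cup E^-_j$ will agree with $(E^+_j, E^-_j)$, since $q_j$ fits its own characterizing set; so the learner halts at or before stage $j$ and terminates after finitely many steps. Conversely, whenever the learner halts and outputs some $q_i$, the goal concept fits $(E^+_i, E^-_i)$, and since this set uniquely characterizes $q_i$ w.r.t.~$\mathcal{C}$ and the goal concept lies in $\mathcal{C}$, the goal concept must be logically equivalent to $q_i$; thus the output is (a representation of) the goal concept, as required by the definition of exact learnability.

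The only subtle point — and the part that needs a word of care rather than real difficulty — is ensuring that every step is genuinely effective: the enumeration of $\mathcal{C}$ is computable by hypothesis, and Theorem~\ref{thm:characterizations-main}(1) asserts the characterizing examples can be \emph{effectively} constructed, so the membership queries to be asked at each stage are computable. No equivalence queries are used, and no running-time bound is claimed, so the delicate polynomial-time loophole discussed in Section~\ref{sec:preliminaries} is irrelevant here. This gives exact learnability of $\mathcal{C}$ with membership queries.
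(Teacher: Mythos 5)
Your proposal is correct and matches the paper's own argument: the paper likewise enumerates $\mathcal{C}$, effectively constructs the uniquely characterizing examples from Theorem~\ref{thm:characterizations-main}(1), and uses membership queries to test whether the goal fits them. Nothing further is needed.
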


The learning algorithm in question simply enumerates all queries $q\in\mathcal{C}$ and uses membership queries to test if the goal query fits the uniquely
characterizing set of examples of $q$ (cf.~Theorem~\ref{thm:characterizations-main}(1)).
Unfortunately, this learning algorithm does not run in polynomial time. Indeed, the number of membership queries is not known to be bounded by any fixed tower of exponentials (even for classes $\mathcal{C}$ for which membership can be tested in polynomial time).
For the special case of c-acyclic queries, we can do a little better by taking advantage of the fact that a uniquely characterizing set of examples can be constructed in polynomial time. Indeed, the class of c-acyclic $k$-ary CQs is exponential-time exactly learnable with membership queries: the learner can simply enumerates all c-acyclic queries in order of increasing size. For each query $q$ (starting with the smallest query), it uses Theorem~\ref{thm:characterizations-main}(2) to test, using polynomially many membership queries, whether the goal query is equivalent to $q$. After at most $2^{O(n)}$ many attempts (where $n$ is the size of the goal query), the algorithm is guaranteed to find a query that is equivalent to the goal query.%
\footnote{Similarly, by  Theorem~\ref{thm:characterizations-main}(3), the 
class of unary acyclic c-connected queries is exponential-time exactly learnable with subset queries, where a \emph{subset query}
is an oracle query asking whether a given CQ from the concept class is implied by the goal query. Subset queries correspond precisely to membership queries where the example is the canonical structure of a query from the concept class.}
Our main result in this section improves on this by establishing \emph{efficient} (i.e., polynomial-time) exact learnability:

\begin{theorem}
\label{th:membership} For each schema and $k\geq 0$, the class of c-acyclic $k$-ary CQs
  is efficiently exactly learnable with membership queries.  
\end{theorem}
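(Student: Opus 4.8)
The plan is to design an Angluin-style exact learning algorithm that uses membership queries only, and combines the polynomial-time frontier construction from Theorem~\ref{thm:cacyclic-polyfrontier} with an incremental "grow the hypothesis" strategy. The overall structure I would follow is this. The learner maintains a current hypothesis CQ $q$, which it keeps c-acyclic and which always satisfies $q_{\text{goal}}\subseteq q$ (equivalently $\widehat{q}\to\widehat{q_{\text{goal}}}$); here $q_{\text{goal}}$ is the hidden goal query, also c-acyclic. Initially $q$ is the most general c-acyclic query over $\mathcal{S}$ of arity $k$ consistent with the observed labels — concretely one can start from a single-atom canonical structure or, more robustly, from a membership-query probe on a small "generic" positive example to find a starting point with $q_{\text{goal}}\subseteq q$. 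The key invariant is that at every stage $\widehat{q}\to\widehat{q_{\text{goal}}}$, and the algorithm terminates once additionally $\widehat{q_{\text{goal}}}\to\widehat{q}$, i.e. $q\equiv q_{\text{goal}}$, which by Theorem~\ref{thm:characterizations-main}(2) it can certify using polynomially many membership queries on the uniquely characterizing example set of $q$.

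The core loop is the refinement step. Given the current hypothesis $q$ with $\widehat{q}\to\widehat{q_{\text{goal}}}$ but $\widehat{q_{\text{goal}}}\not\to\widehat{q}$, use Theorem~\ref{thm:cacyclic-polyfrontier} to compute, in polynomial time, a frontier $F=\{(B_1,\mathbf{b}_1),\ldots,(B_m,\mathbf{b}_m)\}$ for $\widehat{q}$ w.r.t.\ all structures. By the defining property of a frontier and the fact that $\widehat{q_{\text{goal}}}$ maps to $\widehat{q}$ but not conversely, there is some $(B_j,\mathbf{b}_j)\in F$ with $\widehat{q_{\text{goal}}}\to (B_j,\mathbf{b}_j)$; the learner identifies such a $B_j$ by posing the membership query "is $\mathbf{b}_j\in q_{\text{goal}}(B_j)$?" for each $j$ in turn (at most $m$ membership queries, and $m$ is polynomial). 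The canonical CQ of the chosen $B_j$ is now strictly more specific than $q$ — it still satisfies $q_{\text{goal}}\subseteq q_{B_j}$, and $\widehat{q}\not\to\widehat{q_{B_j}}$ so we have made strict progress downward in the homomorphism order. However $B_j$ need not be c-acyclic, and its size is bounded only in terms of $|\widehat{q}|$, not $|\widehat{q_{\text{goal}}}|$; so one cannot simply set $q\coloneqq q_{B_j}$ and iterate, since size and c-acyclicity could both be lost.

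This is where the main obstacle lies, and the way to resolve it is a "normalization toward the goal" step. After moving to $B_j$, the learner must extract from it a c-acyclic query of controlled size that still sits between $q_{\text{goal}}$ and the old $q$. The idea is: since $\widehat{q_{\text{goal}}}\to (B_j,\mathbf{b}_j)$ via some homomorphism $g$, the image $g(\widehat{q_{\text{goal}}})$ is a substructure of $B_j$ that is homomorphically equivalent to $\widehat{q_{\text{goal}}}$ and hence (being a homomorphic image of a c-acyclic core) has c-acyclic core of size at most $|\widehat{q_{\text{goal}}}|$; the learner does not know $g$, but it can search for the relevant fragment using membership queries — e.g. by greedily deleting facts and elements from $B_j$ and asking whether the goal query still maps into the result, keeping a deletion whenever the answer stays "yes", until a minimal such substructure is reached. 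A minimal substructure $B_j^\circ$ into which $q_{\text{goal}}$ still maps, with $B_j^\circ\to\widehat{q_{\text{goal}}}$ enforced by minimality together with the fact that $q_{\text{goal}}$ is a core, must be homomorphically equivalent to $\widehat{q_{\text{goal}}}$, hence its core is c-acyclic of size $O(|\widehat{q_{\text{goal}}}|)$, computable in polynomial time by Proposition~\ref{prop:cacyclic-core}. Setting $q$ to the canonical CQ of that core both restores c-acyclicity and bounds the size by $n=|\widehat{q_{\text{goal}}}|$. One must be careful that the deletion search is polynomial — it is, since $B_j$ has polynomial size and each deletion test is one membership query — and that minimality really forces homomorphic equivalence to the goal; establishing the latter cleanly (using that $q_{\text{goal}}$ is a core and c-acyclic, and that homomorphisms between c-acyclic structures are polynomial-time checkable) is the delicate point.

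Finally I would assemble the pieces and verify the complexity bound in Angluin's sense. Each iteration of the loop runs in time polynomial in $n$ (frontier construction, the at-most-$m$ membership queries, the deletion search, the core computation), and strictly decreases the hypothesis in the homomorphism order while keeping it c-acyclic of size $O(n)$; since there is no infinite strictly descending chain of c-acyclic structures of bounded size below $\widehat{q_{\text{goal}}}$ up to homomorphic equivalence, the loop terminates after polynomially many iterations, at which point $q\equiv q_{\text{goal}}$ and the termination check of Theorem~\ref{thm:characterizations-main}(2) succeeds. Since no equivalence queries are used, the parameter $m$ in the definition of polynomial-time learnability is $0$, and the whole running time is bounded by a polynomial in $n$ alone. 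The only subtlety worth double-checking in the write-up is the initialization: one needs some c-acyclic $q$ with $q_{\text{goal}}\subseteq q$ to start, and the cleanest route is to take $q$ whose canonical structure is a large "saturated" example that $q_{\text{goal}}$ necessarily maps into (ascertained by a membership query), then immediately apply the normalization step once to bring it down to size $O(n)$ before entering the main loop.
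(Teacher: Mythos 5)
Your overall skeleton does match the paper's algorithm: maintain a c-acyclic hypothesis $(H,\mathbf{h})$ with $\widehat{q_{goal}}\to(H,\mathbf{h})$, compute its frontier via Theorem~\ref{thm:cacyclic-polyfrontier}, spend one membership query per frontier element to find one that the goal still maps into, and descend. (As an aside, your stated invariant has the homomorphism direction reversed: you write $\widehat{q}\to\widehat{q_{goal}}$ but the frontier step and the saturated-structure initialization require $\widehat{q_{goal}}\to\widehat{q}$; only the latter is coherent with the rest of your argument.) The real problem is the step you yourself flag as delicate: recovering a c-acyclic hypothesis of size $O(n)$ after moving to a frontier element $B_j$. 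Your argument rests on the claim that a fact-minimal substructure $B_j^\circ$ of $B_j$ into which $q_{goal}$ still maps must be homomorphically equivalent to $\widehat{q_{goal}}$, and hence have a c-acyclic core. This is false. Minimality only forces $B_j^\circ$ to be the \emph{image} of $\widehat{q_{goal}}$ under some homomorphism, i.e., a quotient, and a quotient of a c-acyclic core need neither be c-acyclic nor map back to the original: identifying two elements of a path already creates a cycle, and an image with a cycle through non-designated elements admits no homomorphism back to an acyclic pattern. Since the frontier of a c-acyclic structure in general contains non-c-acyclic members (the paper notes this is unavoidable), such collapsing images genuinely arise. Consequently your normalization neither restores c-acyclicity nor yields equivalence to the goal, and without c-acyclicity the next iteration cannot even compute a frontier, because non-c-acyclic cores have no finite frontier at all (Theorem~\ref{thm:cacyclic}).

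This missing piece is exactly what the paper identifies as the heart of the proof: Proposition~\ref{pr:toCC} converts an arbitrary positive example into a c-acyclic one using polynomially many membership queries by an \emph{unfolding} argument --- working with $m$-c-acyclicity, picking an edge on a short cycle that avoids designated elements, taking two copies of the structure minus that edge glued at the designated elements and reconnected by two ``crossing'' edges so that every offending cycle at least doubles in length, proving (by induction on the tree structure of $\widehat{q_{goal}}$ with designated elements removed) that the goal still maps into the result, and re-minimizing; a counting argument on ``good'' edges bounds the number of rounds. Nothing in your proposal substitutes for this construction. A secondary gap: your termination argument (``no infinite strictly descending chain of bounded-size structures below the goal'') gives only finiteness, since there are exponentially many homomorphism-equivalence classes of structures of size at most $n$. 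The paper instead shows that the domain size of the fact-minimal hypothesis strictly increases at every iteration --- the projection $(H_{i+1},\mathbf{h}_{i+1})\to(H_i,\mathbf{h}_i)$ is surjective by minimality of $(H_i,\mathbf{h}_i)$ and cannot be injective because $(H_i,\mathbf{h}_i)\not\to(H_{i+1},\mathbf{h}_{i+1})$ --- which caps the number of iterations at $n$.
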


At a high level, the learning algorithm works by maintaining a c-acyclic hypothesis that is an over-approximation of the actual goal query. At each iteration, the hypothesis is strengthened by replacing it with one of the elements of its frontier, a process that is shown to terminate and yield a query that is logically equivalent to the goal query.
Note, however, that the frontier of a c-acyclic structure does not, in general, consist of c-acyclic structures. At the heart of the proof of Theorem~\ref{th:membership} lies a non-trivial argument showing how to turn an arbitrary hypothesis into a c-acyclic one with polynomially many membership queries. The detailed proof is given in Section~\ref{sec:learning-proofs} below.


The class of \emph{all} $k$-ary queries is not exactly learnable with membership queries (even with unbounded amount of time and the ability to ask an unbounded number of oracle queries), because exact learnability with membership queries would imply that every query in the class is uniquely characterizable, which we know is not the case. On the other hand, we have:

\begin{theorem}[from~\cite{CateDK13:learning}]\label{thm:memb-equiv}
For each schema $\mathcal{S}$ and $k\geq 0$, 
the class of all $k$-ary CQs over $\mathcal{S}$
  is efficiently exactly learnable with membership and equivalence queries.
\end{theorem}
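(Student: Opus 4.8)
The plan is to prove Theorem~\ref{thm:memb-equiv} by designing a learner that maintains a ``most general fitting hypothesis'' and uses equivalence queries to progressively refine it, with membership queries used to shrink each counterexample to a minimal one before incorporating it. Concretely, I would keep as the current hypothesis a CQ $h$ whose canonical structure $\widehat{h}$ is (homomorphically equivalent to) the direct product of the canonical structures of all positive counterexamples seen so far; equivalently, $h$ is the most specific CQ that is implied by the goal query $q$ among those consistent with the data collected. We start with a trivial hypothesis (e.g.\ the product of no examples, or a canonical ``largest'' query), pose an equivalence query, and on each returned counterexample $(A,\textbf{a})$ do the following: if it is a positive counterexample (so $\textbf{a}\in q(A)$ but $\textbf{a}\notin h(A)$), we use membership queries to prune $(A,\textbf{a})$ to a minimal positive example --- repeatedly deleting facts and identifying/deleting non-distinguished elements as long as the membership oracle still answers ``yes'' --- and then replace $h$ by (the canonical CQ of) $\widehat{h}\times(A,\textbf{a})$; if it is a negative counterexample, this cannot happen once the invariant $\widehat{q}\to\widehat{h}$ is maintained, so the real work is only in the positive case.

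The key steps, in order, are: (1) establish the invariant that at all times $q\subseteq h$, i.e.\ $\widehat{q}\to\widehat{h}$, which holds initially and is preserved because $\widehat{q}$ maps into each minimized positive counterexample (membership queries certify this) and hence into the product; (2) show that minimizing a positive counterexample via membership queries is correct and terminates in polynomially many steps, and crucially that the resulting minimal positive example has size bounded by a polynomial in $n$ (the size of $q$) --- this is where one argues that a $\subseteq$-minimal positive example for $q$ is homomorphically equivalent to a substructure of $\widehat{q}$ (or at least no larger than $\widehat q$ in the relevant measure), using the Chandra--Merlin characterization; (3) bound the number of iterations: each positive counterexample strictly increases $\widehat{h}$ in the homomorphism order toward $\widehat{q}$ from above, and since every intermediate hypothesis contains a homomorphic copy of (a substructure of) the bounded-size $\widehat{q}$, the chain of hypotheses, measured appropriately, has length polynomial in $n$ --- so after polynomially many equivalence queries $h$ becomes logically equivalent to $q$ and the final equivalence query returns ``yes''; (4) verify the polynomial-time bookkeeping: each product computation, homomorphism test (between structures of size polynomial in $n$ and $m$), and minimization step runs in time polynomial in $n$ and $m$, matching the two-variable polynomial $p(n,m)$ in the definition of efficient exact learnability.

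The main obstacle I anticipate is step~(2)--(3): controlling the size of the hypothesis and the number of refinement rounds simultaneously. The naive product $\widehat{h}\times(A,\textbf{a})$ can blow up in size as counterexamples accumulate, so one must work with the hypothesis only up to homomorphic equivalence and, in fact, take cores (or some canonical small representative) after each product --- but cores of general structures are expensive, so the argument must instead show that if counterexamples are first minimized, their product stays small because it still homomorphically maps onto (and from, in the limit) the fixed target $\widehat{q}$. The delicate point is that strict decrease in the homomorphism order does not by itself bound the number of steps; one needs a potential function --- e.g.\ the number of ``$q$-types'' realizable, or the number of homomorphisms from $\widehat q$ into $\widehat h$, or the size of a core of $\widehat h$ --- that is bounded by a polynomial in $n$ and strictly decreases (or whose complement strictly increases) at each step. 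Making this potential argument precise while keeping every intermediate object polynomially sized is the technical heart of the proof, and it is exactly the contribution attributed to~\cite{CateDK13:learning}. Everything else --- correctness of minimization, preservation of the invariant, and the per-step polynomial running time --- is routine given the Chandra--Merlin theorem and the fact (used throughout the paper) that homomorphisms between small structures can be found in time polynomial in their sizes when one side is fixed-arity.
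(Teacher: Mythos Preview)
Your overall architecture matches the paper's: maintain a hypothesis $(H,\textbf{h})$ with $\widehat{q_{goal}}\to(H,\textbf{h})$, ask equivalence queries, and on each positive counterexample replace the hypothesis by (a small representative of) its product with the counterexample. The gap is in exactly the place you flagged as the obstacle, and it is a real gap rather than a detail to be filled in routinely.

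You propose to minimize the \emph{counterexample} via membership queries and then take the product, hoping that ``if counterexamples are first minimized, their product stays small.'' That hope is not justified, and you correctly note that you would then need a potential function you do not actually produce. The paper does the opposite: it minimizes the \emph{product} $(A,\textbf{a})\times(H_i,\textbf{h}_i)$ via membership queries. You dismissed minimizing the product because ``cores of general structures are expensive,'' but this conflates two things: computing the core is indeed hard, but here one does not compute a core --- one simply removes facts one by one and asks the membership oracle whether $\widehat{q_{goal}}$ still maps in (exactly the pruning procedure you already described for counterexamples). This is polynomial and guarantees that the resulting $(H_{i+1},\textbf{h}_{i+1})$ has at most $|q_{goal}|$ facts, because any minimal structure into which $\widehat{q_{goal}}$ maps is a homomorphic image of $\widehat{q_{goal}}$.

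Once this is done, the termination argument you were searching for becomes concrete and elementary: the potential function is the \emph{domain size} of $(H_i,\textbf{h}_i)$, and it strictly \emph{increases}. The natural projection $(H_{i+1},\textbf{h}_{i+1})\to(H_i,\textbf{h}_i)$ is surjective (otherwise composing with $\widehat{q_{goal}}\to(H_{i+1},\textbf{h}_{i+1})$ would contradict the membership-minimality of $(H_i,\textbf{h}_i)$), and it cannot be injective (an isomorphism would contradict $(H_i,\textbf{h}_i)\not\to(H_{i+1},\textbf{h}_{i+1})$). Hence the domain strictly grows, yet is bounded by $|\widehat{q_{goal}}|$, so the algorithm halts after at most $n$ rounds. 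Your proposed potentials (number of homomorphisms, core size) are neither needed nor obviously monotone; the domain-size argument is the missing idea.
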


In fact, it follows from results in~\cite{CateDK13:learning} that the larger class of
all \emph{unions of conjunctive queries} is efficiently exactly learnable with 
membership and equivalence queries (for fixed $k$ and fixed schema). Efficient exact learnability with membership and equivalence queries is not 
a monotone property of concept classes, but
 the result from~\cite{CateDK13:learning}
transfers to CQs as well.
For the sake of completeness, a self-contained
proof of Theorem~\ref{thm:memb-equiv} is given below as well.

\begin{remark}\label{rem:learning-unsafe}
For the purpose of applications discussed in 
Section~\ref{sec:applications}, 
we note that Theorems~\ref{th:learn-bounded-expansion}--\ref{thm:memb-equiv} remain true if the safety
condition for CQs were to be dropped.
\end{remark}

\subsubsection*{Related Work}
 There has been considerable prior work that formally studies  the task of identifying some unknown goal query $Q$ from examples. Work in this direction includes learning CQs, Xpath queries, Sparql, tree patterns, description logic concepts, ontologies, and schema mappings among others \cite{BonifatiCL15,CateDK13:learning,GottlobS10,StaworkoW12}. We shall describe mostly the previous work regarding learning CQs. Some of the work in this direction (\cite{Barcelo017,CateD15,Cohen94a,Hirata00,Willard10} for example) assumes that a background structure $A$ is fixed and known by the algorithm. In this setting, an {\em example} is a $k$-ary tuple $(a_1,\dots,a_k)$ of elements in $A$, labelled positively or negatively depending on whether it belongs or not to $Q(A)$. In the present paper (as in \cite{CateDK13:learning,Haussler89}) we do not fix any background structure (i.e., examples are pairs of the form $(A,\textbf{a})$). Our setting corresponds also to the extended instances with empty background in \cite{Cohen95}. 

In both cases a number of different learning protocols has been considered. In the reverse-engineering problem (as defined in~\cite{WeissC17}) it is only required that the algorithm produces a query consistent with the examples. In a similar direction, the problem of determining whether such a query exists has been intensively studied under some variants (satisfiability, query-by-example, definability, inverse satisfiability) \cite{Barcelo017,CateD15,Willard10}. In some scenarios, it is desirable that  the query produced by the learner not only explains the examples received during the training phase, but also has predictive power. In particular, the model considered in \cite{BonifatiCS16} follows the paradigm of identification in the limit by Gold and requires that, additionally, there exists a finite set of examples that uniquely determines the target query $Q$. In a different direction, the model introduced in \cite{GottlobS10}, inspired by the minimum description length principle, requires to produce a hypothesis consistent after some repairs. A third line of work  (see \cite{Cohen93,Haussler89,Hirata00}) studies this problem under Valiant's probably approximately correct (PAC) model. The present paper is part of a fourth direction based on the exact model of query identification by Angluin. In this model, instead of receiving labelled examples, the learner obtains information about the target query by mean of calls to an oracle. As far as we know, we
are the first to study the exact learnability of CQs using a membership oracle. 


\subsection{Proofs for Theorem~\ref{th:membership} and Theorem~\ref{thm:memb-equiv}}
\label{sec:learning-proofs}

To warm up, we first establish Theorem~\ref{thm:memb-equiv}, because its proof is simpler.
The proof relies on the following lemmas. Recall that we denote by $\widehat{q}$ the canonical 
structure of a conjunctive query $q$.

\begin{lemma} \label{lem:minimize-membership}
Let $(A,\textbf{a})$ be any structure such that $\widehat{q_{goal}}\to (A,\textbf{a})$, where
$q_{goal}$ is the goal conjunctive query. Then, using membership oracle queries, we can compute in polynomial 
time (in the size of $(A,\textbf{a})$) a substructure $(A',\textbf{a})$ of $(A,\textbf{a})$ such that $\widehat{q_{goal}}\to (A',\textbf{a})$, and such that 
 $\widehat{q_{goal}}$ does not homomorphically map to any strict substructure of $(A',\textbf{a})$.
\end{lemma}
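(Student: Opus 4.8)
The plan is to minimize $(A,\textbf{a})$ greedily, one fact at a time, using the membership oracle to check whether the goal query still maps into the candidate substructure.

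\medskip

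\textbf{Approach.} First I would observe that the hypothesis $\widehat{q_{goal}}\to (A,\textbf{a})$ is, by the Chandra–Merlin Theorem, equivalent to saying that $\textbf{a}\in q_{goal}(A)$, i.e.\ that the membership oracle $\text{\rm MEM}_{q_{goal}}$ answers ``yes'' on the example $(A,\textbf{a})$. This is the key point: the property ``$\widehat{q_{goal}}\to (B,\textbf{b})$'' can be decided, for any structure $(B,\textbf{b})$, by a single call to the membership oracle, even though we do not know $q_{goal}$ itself. So the minimization can proceed blind.

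\medskip

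\textbf{Key steps.} The algorithm is as follows. Initialize $(A',\textbf{a}) := (A,\textbf{a})$. Repeatedly iterate over the facts $f$ of $A'$; for each $f$, form the substructure $(A'',\textbf{a})$ obtained by deleting the fact $f$ (but keeping all elements, so that $\textbf{a}$ still consists of elements of the domain and the structure still has $k$ distinguished elements), and ask the membership oracle whether $\textbf{a}\in q_{goal}(A'')$. If the oracle says ``yes'', replace $(A',\textbf{a})$ by $(A'',\textbf{a})$ and restart the scan; if for every remaining fact $f$ the oracle says ``no'', stop and output $(A',\textbf{a})$. Since each successful deletion strictly decreases the number of facts, and each scan makes at most $|A|$ oracle calls plus polynomial bookkeeping, the whole procedure runs in time polynomial in the size of $(A,\textbf{a})$ and uses polynomially many membership queries. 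It remains to argue correctness: throughout the run we maintain the invariant $\widehat{q_{goal}}\to (A',\textbf{a})$ (it holds initially by hypothesis, and is preserved because we only delete a fact when the oracle confirms the goal still maps in). When the procedure halts, no single-fact deletion preserves the property $\widehat{q_{goal}}\to (\cdot,\textbf{a})$. I then need to upgrade this ``no single fact can be removed'' statement to ``$\widehat{q_{goal}}$ maps to no strict substructure at all'': if $(A'',\textbf{a})$ is any strict substructure of $(A',\textbf{a})$ missing at least one fact $f$, and $\widehat{q_{goal}}\to (A'',\textbf{a})$, then composing with the inclusion $(A'',\textbf{a})\hookrightarrow (A'\setminus\{f\},\textbf{a})$ (which exists since $A''$ omits $f$, hence is a substructure of $A'$ with $f$ removed) would give $\widehat{q_{goal}}\to (A'\setminus\{f\},\textbf{a})$, contradicting the halting condition. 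One small care point: deleting a fact may strand some element of $\textbf{a}$ so that it no longer occurs in any fact, i.e.\ the result may be unsafe; but this is harmless here, since the lemma is stated about structures, not CQs, and an unsafe structure is still a perfectly good argument to the membership oracle.

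\medskip

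\textbf{Main obstacle.} There is no deep obstacle; the only subtlety is the final ``local minimum implies global minimum'' argument, which is exactly the classical observation that minimality of a structure with respect to removing individual facts is equivalent to minimality with respect to all substructures (for a fixed domain). I would make sure to phrase the invariant and the halting condition precisely enough that this step is a one-line composition-of-homomorphisms argument, and to note explicitly that the number of restarts is bounded by the initial number of facts so the polynomial bound is clear.
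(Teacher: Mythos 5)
Your proposal is correct and is essentially identical to the paper's proof: greedily delete facts one at a time, using a membership query (justified via Chandra--Merlin) to check that $\widehat{q_{goal}}$ still maps in, and stop when no single fact can be removed. The paper leaves the ``no single fact removable implies no strict substructure works'' step implicit, whereas you spell it out via composition with the inclusion; otherwise the arguments coincide.
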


\begin{proof}
It suffices to iteratively remove one of the facts from the structure, 
and use a membership query to test if
$\widehat{q_{goal}}$ still admits a homomorphism to the structure after removing the fact in question. Once
no further fact can be removed, we have arrived at $(A',\textbf{a})$.
\end{proof}

Recall that we call a structure \emph{safe} if every distinguished element occurs in a fact, that is, the structure is the canonical structure of a conjunctive query.
The proof of the next lemma is obvious.

\begin{lemma}
\label{lem:safety-preservation}
Let $q$ be a $k$-ary conjunctive query over a schema $\mathcal{S}$, and 
let $(B,\textbf{b})$ be any structure over schema $\mathcal{S}$ with $k$ distinguished
elements. If \, $\widehat{q}\to (B,\textbf{b})$, then 
$(B,\textbf{b})$ is safe.
\end{lemma}

We now present the proof of Theorem~\ref{thm:memb-equiv}.

\begin{proof}[Proof of Theorem~\ref{thm:memb-equiv}]
The learning algorithm maintains a structure $(H,\textbf{h})$, which we can intuitively think of as (the canonical structure of) the algorithm's guess of the goal query. The
structure $(H,\textbf{h})$ is refined in a series of iterations in such a way that at each iteration $i$, its value, $(H_i,\textbf{h}_i)$, satisfies the following properties: (i) 
$\widehat{q_{goal}}\to (H_i,\textbf{h}_i)$, and (ii) the size of $(H_i,\textbf{h}_i)$ is bounded by the size of the $q_{goal}$.

We start by considering $(A,\textbf{a})$ where 
$A$ is the structure containing a single node $a$ that satisfies all possible facts over the schema and $\textbf{a}$ is the $k$-ary tuple $(a,\dots,a)$ containing only element $a$.
Clearly, $\widehat{q_{goal}}$ homomorphically maps to this structure.
We apply Lemma~\ref{lem:minimize-membership} to find a minimal
substructure of it into which the goal query maps. We will denote it be $(H_0,\textbf{h}_0)$.

Next, at each stage we perform an equivalence oracle query to test if 
the canonical conjunctive query of $(H_i,\textbf{h}_i)$ is logically equivalent to $q_{goal}$. 
Note that, by Lemma~\ref{lem:safety-preservation}, the structure $(H_i,\textbf{h}_i)$
indeed has a canonical conjunctive query.
If the answer to the equivalence oracle query is ``yes'', then we are done. Otherwise, 
we receive a counterexample $(A,\textbf{a})$. This counterexample must be a structure 
in which
the goal query is true but the hypothesis is false. Thus, we have
$\widehat{q_{goal}} \to (A,\textbf{a})$ and $(H_i,\textbf{h}_i)\not\to (A,\textbf{a})$. 
Recall that we also have $\widehat{q_{goal}}\to (H_i,\textbf{h}_i)$.
It follows that $\widehat{q_{goal}} \to (A,\textbf{a}) \times (H_i,\textbf{h}_i)$.
We now set $(H_{i+1},\textbf{h}_{i+1})$ to be a minimal substructure of
$(A,\textbf{a}) \times (H_i,\textbf{h}_i)$ into which $\widehat{q_{goal}}$ maps (using
Lemma~\ref{lem:minimize-membership} again). It is clear from the construction that
$(H_{i+1},\textbf{h}_{i+1}) \stricthom (H_i,\textbf{h}_i)$, and that the size
of $(H_{i+1},\textbf{h}_{i+1})$ is bounded by the size of $\widehat{q_{goal}}$
(otherwise $\widehat{q_{goal}}$ would not be a minimal substructure of
$(A,\textbf{a}) \times (H_i,\textbf{h}_i)$ into which $\widehat{q_{goal}}$ maps).

All that remains to be shown is that this algorithm terminates after polynomially
many iterations. We show that with each iteration, the domain size of the structure $(H_i,\textbf{h}_i)$ strictly increases. Suppose that the domain size of $(H_i,\textbf{h}_i)$
and $(H_{i+1},\textbf{h}_{i+1})$ is the same. 
We know that the homomorphism (natural projection)  
$h:(H_{i+1},\textbf{h}_{i+1})\to (H_i,\textbf{h}_i)$ is 
surjective, and that every fact of $H_i$ is the $h$-image of a fact of $H_{i+1}$
(otherwise the composition with the homomorphism 
from $\widehat{q_{goal}}$ to $(H_{i+1},\textbf{h}_{i+1})$ would constitute a non-surjective
homomorphism from $\widehat{q_{goal}}$ to $(H_i,\textbf{h}_i)$, which would contradict the 
minimality of $(H_i,\textbf{h}_i)$). Therefore, it cannot also be injective,
otherwise it would be an isomorphism. Therefore, the number of elements of 
$(H_{i+1},\textbf{h}_{i+1})$ is strictly greater than the number of elements in $(H_i,\textbf{h}_i)$. Since the size of each $(H_i,\textbf{h}_i)$ is bounded by the 
size of $q_{goal}$, this shows that the algorithm terminates after at most $n$ 
rounds, where $n$ is the size of $q_{goal}$.

Incidentally, 
note that this algorithm runs in polynomial time (in the size of $q_{goal}$), even 
if the schema and the arity $k$ of the conjunctive query are not fixed but treated as part of the input.
\end{proof}

\newcommand{\ba}{{\bf a}}
\newcommand{\bb}{{\bf b}}
\newcommand{\bc}{{\bf c}}
\newcommand{\bg}{{\bf g}}
\newcommand{\bh}{{\bf h}}
\newcommand{\bff}{{\bf f}}
\newcommand{\CC}{\operatorname{CC}}

Next, in the remainder of this section, we prove Theorem~\ref{th:membership}.

First, we argue that we may restrict attention to schemas consisting of binary relations only. Let $q_{goal}$ be any c-acyclic goal conjunctive query 
 over an arbitrary schema $\mathcal{S}$ and consider the corresponding conjunctive
 query $q_{goal}^*$
over the binary schema $\mathcal{S}^*$, as in Lemma~\ref{lem:binary-reduction}. By Lemma~\ref{lem:binary-reduction}(2),
every membership query w.r.t.~the goal query $q_{goal}^*$ can be efficiently reduced to a
membership query w.r.t.~$q_{goal}$. Therefore, if 
$q_{goal}^*$ can be efficiently identified using membership queries for $q_{goal}^*$,
then $q_{goal}^*$ can also be efficiently identified using membership queries for $q_{goal}$,
and consequently 
also $q_{goal}$ can be efficiently identified using membership queries for $q_{goal}$ (namely, by first computing a conjunctive query $q$ over $\mathcal{S}^*$ that is logically equivalent to $q_{goal}^*$, and then returning $q_*$ as the final answer
(note that, by~Lemma~\ref{lem:binary-reduction}(3), $q_*$ must then be logically equivalent to $q_{goal}$). 
Finally, it is easy to see that $q_{goal}^*$ is 
c-acyclic if and only if $q_{goal}$ is c-acyclic. Therefore, it suffices to prove 
Theorem~\ref{th:membership} for the special case of schemas consisting of binary relations. In the remainder of this section, we will therefore restrict ourselves to binary relations.

\begin{proposition}[Any positive example can be transformed into a c-acyclic one]
\label{pr:toCC}
Let $q_{goal}$ be a c-acyclic goal query.
Given a structure $(A,\ba)$ satisfying $\widehat{q_{goal}}\to(A,\ba)$, using membership queries, we can construct, in time polynomial in $size(A)+size(q_{goal})$, a structure $(B,\bb)$, denoted $\CC(A,\ba)$,  such that
\begin{enumerate}
\item $(B,\bb)$ is c-acyclic
\item $(B,\bb)\to (A,\ba)$
\item $\widehat{q_{goal}}\to (B,\bb)$
\item $\widehat{q_{goal}}$ is not homomorphic to any structure obtained removing some fact of $(B,\bb)$
\end{enumerate}
\end{proposition}

\begin{proof}
We say that a structure is $m$-c-acyclic if every cycle of length at most $m$ goes through a distinguished element. (Here, by a ``cycle'' we mean a cycle in the incidence graph of the structure, and, to simplify the exposition, by the length of the cycle we refer to the number of facts that lie on the cycle).
Note that when this holds for $m=n$, where $n$ is the size of $\widehat{q_{goal}}$, then every homomorphic image of $\widehat{q_{goal}}$ contained in the structure in question must be c-acyclic. 
We will describe a method (using membership queries) that takes a $m$-c-acyclic structure $(A,\ba)$ with $\widehat{q_{goal}}\to (A,\ba)$ and turns it into a $(m+1)$-c-acyclic structure $(A',\ba)$ with $\widehat{q_{goal}}\to (A',\ba)$ and $(A',\ba)\to (A,\ba)$. 
By applying this method repeatedly for increasing $m$ (and always minimizing w.r.t.~$\widehat{q_{goal}}$ using membership queries, cf.~Lemma~\ref{lem:minimize-membership}), we are guaranteed to
reach the situation where we have a structure that is $n$-c-acyclic and therefore, is in fact, c-acyclic. 

Let $(A,\ba)$ be $m$-c-acyclic with $\widehat{q_{goal}}\to (A,\ba)$.
First, we use membership queries to minimize $(A,\ba)$ (cf.~Lemma~\ref{lem:minimize-membership}) and ensure that its size is at most $n$. Next, we say that an edge is \emph{bad} if it is part of a cycle of length $m+1$ that does not contain a distinguished element, and \emph{good} otherwise. 
If there are no bad edges then we are done. Otherwise, let $e = R(c,d)$ be a bad edge. Let $A_1, A_2$ be isomorphic copies of $A\setminus\{e\}$ that are disjoint except for the distinguished elements. Now, let $(B,\textbf{a})$ be the structure obtained by extending the fg-disjoint union
$(A_1,\textbf{a})\uplus(A_2,\textbf{a})$ with additional ``special edges'' $R(c_1, d_2)$ and $R(c_2, d_1)$. 
Clearly, $(B,\ba)\to (A,\ba)$.

\textbf{Claim 1:} for each good edge of $(A,\ba)$, its isomorphic copies belonging to $B$ are good in $(B,\ba)$.

\textbf{Claim 2:} the special edges $R(c_1, d_2)$ and $R(c_2, d_1)$ are both good in $B$. 

Claim 1 is obvious from the construction of $B$, as no new short cycles are introduced.
To see that Claim 2 holds, consider any minimal cycle in $B$ that does not contain any 
distinguished element, and that goes through one of these edges. Then, clearly, 
the cycle must go through both of these edges. That is, it must be of the form 
\[c_1 \xrightarrow{R(c_1, d_2)} d_2 \xrightarrow{~~~\pi~~~} c_2 \xrightarrow{R(c_2, d_1)} d_1 \xrightarrow{~~~\pi'~~~} c_1
\] where $\pi$ is a path contained in $A_2$ and 
$\pi'$ is a path contained in $A_1$. Now, we know that the paths $\pi$ and $\pi'$ must have length at least $m$ (because otherwise $(A,\ba)$ would not be $m$-c-acyclic). Therefore, the entire cycle must have length at least
$2m+2$.

\textbf{Claim 3:} $\widehat{q_{goal}}\to (B,\ba)$.

Claim 3 is essentially proved by an induction on the tree structure of $\widehat{q_{goal}}$, after removing all distinguished elements. More precisely, 
let $h:\widehat{q_{goal}}\to (A,\ba)$. Let $G$ be the substructure of $\widehat{q_{goal}}$ obtained by removing all distinguished elements and facts involving distinguished elements. 
Clearly, $G$ is acyclic, i.e., $G$ can be oriented as a forest. 
By induction on this forest, we can construct a homomorphism
$h':G\to B$, with the additional property that, 
for each element $g$ of $G$, $h'(g)$ is equal to either
$h(g)_1$ or $h(g)_2$. 
Recall that $h(g)_1$ is the copy of $h(g)$ in $A_1$ and that $h(g)_2$ is the copy of $h(g)$ in $A_2$.
Next, let $h''$ be the extension of $h'$ to $\widehat{q_{goal}}$ that agrees with $h$ on all
distinguished elements. We can show that $h'':\widehat{q_{goal}}\to (B,\ba)$. To see this, consider any fact of $\widehat{q_{goal}}$. If that fact involves at least one distinguished element, then the $h$-image of that
fact involves at least one distinguished element of $A$. Therefore, by construction, all the facts obtainable by replacing every non-distinguished element  (if there is such) by one of its two copies, are present in $B$, and therefore, no matter
how $h'$ acts on the non-distinguished element of that fact, the $h''$-image will be present in $B$. Next, consider the case where the fact doesn't involve any distinguished element of $A$. In this case, the fact belongs to $G$ and hence we know that the 
$h'$-image of that fact (which is also the $h''$-image) belongs to $B$. 
This concludes the proof of claim 3.

Next, let $(B',\ba)$ be a minimal substructure of $(B,\ba)$ into which $\widehat{q_{goal}}$ homomorphically maps (obtained using membership queries, cf.~Lemma~\ref{lem:minimize-membership}). 
Clearly, Claim 1-3 above are preserved under the passage from $(B,\ba)$ to its substructure $(B',\ba)$.
We claim that (1) $B'$ must contain at least one of the edges
$R(c_1, d_2)$ and $R(c_2, d_1)$, and (2) for every edge of $A\setminus\{e\}$,
$B'$ must contain at least one of its two isomorphic copies. Because if not, then 
the homomorphism from $\widehat{q_{goal}}$ to $(B',\ba)$ could be composed with the natural projection from $(B',\ba)$ to $(A,\ba)$ to obtain a homomorphism from $\widehat{q_{goal}}$ to a proper substructure of $(A,\ba)$, a contradiction with the assumed minimality of $(A,\ba)$. 
Combined with Claim 1 and 2, this allows us to conclude that $(B',\ba)$ has strictly more good edges than $(A,\ba)$. 
Since the number of edges (good or bad) is bounded by
$n$, by repeating the above procedure, we obtain a structure that has no bad edges,
and therefore, is $(m+1)$-c-acyclic. 
\end{proof}

\begin{proof}[Proof of Theorem~\ref{th:membership}:]
The algorithm maintains a structure, denoted $(H,\bh)$, which can be interpreted as 
(the canonical structure of)
the algorithm's guess of the goal query. At every moment in the execution of the algorithm, $(H,\bh)$, satisfies the following properties:
\begin{enumerate}
\item $\widehat{q_{goal}}\to (H,\bh)$
\item $\widehat{q_{goal}}$ does not homomorphically map to any structure obtained by removing a fact from $(H,\bh)$
\item $(H,\bh)$ is c-acyclic
\end{enumerate}
Note that conditions (1) and (2) imply that $H$ cannot have more elements than $\widehat{q_{goal}}$ and that $size(H)\leq size(\widehat{q_{goal}})$

Initially, $(H,\bh)$ is defined to be $\CC(A,\ba)$ where $\CC(\cdot)$ is defined as in Proposition~\ref{pr:toCC} and 
$A$ is the structure containing a single node $a$ that satisfies all possible facts over the schema and $\ba$ is the $k$-ary tuple $(a,\dots,a)$ containing only element $a$. The algorithm refines $(H,\bh)$ in a sequence of iterations. At each iteration, the algorithm first constructs  the frontier ${\mathcal F}$ of $(H,\bh)$. Note that by condition (3) $(H,\bh)$ is $c$-acyclic. 
Hence, by Theorem~\ref{thm:cacyclic-polyfrontier}, ${\mathcal F}$ can be computed in time polynomial in $size(H)$ (and, hence, in $size(q_{goal})$). Then, the algorithm asks a membership query for each $(F,\bff)$ in $\mathcal F$ until either it receives a 'yes' answer or, otherwise, it exhausts all structures in ${\mathcal F}$ without receiving a 'yes' answer. In the latter case the algorithm stops and returns the canonical query of $(H,\bh)$, cf.~Lemma~\ref{lem:safety-preservation}, 
as it is immediate from the fact that $\widehat{q_{goal}}\to (H,\bh)$ and that $\widehat{q_{goal}}$ does not homomorphically map to any structure in ${\mathcal F}$ that $(H,\bh)$ is homomorphically equivalent to $\widehat{q_{goal}}$, and therefore,
the canonical query of $(H,\bh)$ is logically equivalent to $q_{goal}$. In the former case, the algorithm picks any structure $(F,\bff)$ in ${\mathcal F}$ that (when asked as a membership query) produces a 'yes' answer, updates $(H,\bh)$, by setting $(H,\bh):=\CC(F,\bff)$ and starts a new iteration. It follows immediately that $(H,\bh)$ preserves properties (1-3) above. 

To show the correctness of the algorithm it only remains to see that the number of iterations is polynomially bounded in $size(G)$. This follows directly from the following claim: the domain size of $H$ increases at each iteration. To prove the claim, let $(H_i,\bh_i)$ be the value of $(H,\bh)$ at the $i$th iteration, and note that, by design, we have $(H_{i+1},\bh_{i+1})\to (F,\bff)$ where $(F,\bff)$ belongs to the frontier of $(H_i,\bh_i)$. It follows that $(H_{i+1},\bh_{i+1})\to (H_i,\bh_i)$ and $(H_i,\bh_i)\not\to (H_{i+1},\bh_{i+1})$. Let $h$ be the homomorphism from $(H_{i+1},\bh_{i+1})$
to $(H_i,\bh_i)$. It follows from the fact that $\widehat{q_{goal}}\to (H_{i+1},\bh_{i+1})$ and condition (2) that the image of $(H_{i+1},\bh_{i+1})$ according to $h$ is precisely $(H_i,\bh_i)$. 
Therefore, $h$ must be non-injective (otherwise it would be an isomorphism, contradicting the fact that $(H_i,\bh_i)\not\to (H_{i+1},\bh_{i+1})$). Since $h$ is surjective and non-injective, we can conclude that the domain size of $H_{i+1}$ is larger than the domain size of $H_i$. 
\end{proof}

\section{Another type of data examples: input-output examples}
\label{sec:input-output}

In the previous sections, we have focused on \emph{positive and negative data examples}. However, there is also another type of data example that is natural 
to consider, namely a pair $(I,R)$ where $I$ is an input instance (over a schema $\mathcal{S}$ of the query), and $R$ is a $k$-ary relation over the domain of $I$, where $k$ is the arity of the query. A conjunctive query $q$ \emph{fits} 
$(I,R)$ if $R$ is precisely the set of all tuples over the domain of $I$ that 
satisfy $q$.
Input-output examples are analogous to
\emph{universal examples} for schema mappings as studied in~\cite{AlexeCKT2011}, in that they 
capture the complete behavior of the concept (in our case, the conjunctive query) 
on a database instance.

One input-output example, intuitively, captures the same information as a polynomial number of positive and negative data examples (if we treat $k$ as a constant), 
namely all positive data examples $(I,\textbf{a})$ for $\textbf{a}\in R$
and all negative data examples $(I,\textbf{a})$ for $\textbf{a}\in dom(I)^k\setminus R$. It follows that a CQ is uniquely characterizable by a finite set of input-output data examples if and only if it is uniquely characterizable by a finite set of positive and negative data examples. In the case of connected CQs, in fact, a single input-output example suffices:

\begin{theorem}
Fix a schema and $k\geq 0$, and let $C$ be any class of $k$-ary CQs over the given schema. Then the following are equivalent
for all queries $q\in C$:
\begin{enumerate}
    \item $q$ is uniquely characterized w.r.t.~$C$ by a finite collection of positive and negative data examples. 
    \item $q$ is uniquely characterized w.r.t.~$C$ by a finite collection of input-output examples.
\end{enumerate}
And, if $C$ consists of c-connected CQs (or  $k=0$ and $C$ consists of connected CQs):
\begin{enumerate}
    \item[(3)]  $q$ is uniquely characterized w.r.t.~$C$ by a single input-output example.
\end{enumerate}
Moreover, the equivalences are witnessed by polynomial-time transformations in each direction.
\end{theorem}
\begin{proof}
For the direction from (1) to (2): for given $(E^+,E^-)$, let $E^{io} = \{ (I,q(I)) \mid (I,\textbf{a})\in E^+\cup E^-\}$. Note that whenever a query $q'\in C$ fits $E^{io}$, then
it must also fit $(E^+, E^-)$.
For the converse direction, from (2) to (1), the construction was already described above.

The direction (3) to (2) is trivial.

Finally, for the direction from (2) to (3), let $E^{io}=\{(I_1,q(I_1)), \ldots, (I_n,q(I_n))\}$.
Let $I$ be the disjoint union $\biguplus_{i=1\ldots n}(I_i)$. 
For every tuple $\textbf{a}$ from the domain of $I_i$, and for every c-connected query $q'$, $\textbf{a}\in q'(I_i)$ if and only if $\textbf{a}\in q'(I)$. 
It follows that, whenever two $c$-connected queries $q, q'$ agree on their output on $I$ (that is to say,
$q(I)=q'(I)$), then they must also agree on their output on each $I_i$. Therefore, 
if $E^{io}$ uniquely characterizes $q$ w.r.t.~$C$, then also $(I,q(I))$ uniquely characterizes $q$ w.r.t.~$C$. The same argument applies to connected Boolean CQs.

Incidentally, note that the above argument only works for c-connected CQs. For instance, the CQ
$q(x) =  \exists y (P(x)\land Q(y))$ cannot be uniquely characterized by a 
single input-output example $(I,q(I))$, because if $q(I)$ is non-empty, then
$q(I)=q'(I)$ where $q'(x)$ is the query $P(x)$; whereas if $q(I)$ is empty, 
then $q(I) = q''(I)$, where $q''(x)$ is the query $P(x)\land Q(x)$.
\end{proof}

It follows that all results regarding the existence and polynomial-time computability of unique characterizations in Section~\ref{sec:characterizations} 
remain true when considering input-output data examples instead of (or even in addition to) positive and negative data examples.

Similarly, in the exact learning context, we can also consider a different type of oracle query, namely where the algorithm provides the oracle with an input instance $I$
and the oracle responds with an input-output example of the form $(I,R)$ that fits the target CQ.
We could call such oracle queries \emph{input-output queries}. They naturally capture a scenario in which we have black-box access to an executable version of
the target CQ.
For the same reasons discussed above, input-output queries are no more powerful than membership queries, 
since one input-output query can be simulated by a polynomial number of 
membership queries (assuming $k$ is fixed). Therefore, also, all our results on exact learnability remain true when considering input-output queries instead of membership queries.

\section{Further Applications}
\label{sec:applications}

While our main focus in this paper is on unique characterizability and exact learnability for 
CQs, in this section, we explore some implications for other application domains.

\subsection{Characterizability and learnability of LAV schema mappings}

A schema mapping is a high-level declarative specification of the relationships between two database schemas~\cite{Kolaitis05schema}. Two of the most well-studied  schema mapping specification languages are 
\emph{LAV (``Local-as-View'')} and \emph{GAV (``Global-as-View'')} schema mappings. 

In~\cite{AlexeCKT2011}, the authors studied the question of
when a schema mapping can be uniquely characterized by a finite set of data examples. 
Different types of data examples were introduced and studied, namely positive examples, negative examples, and ``universal'' examples.  
In particular, it was shown in~\cite{AlexeCKT2011} that a GAV schema mapping can be 
uniquely characterized by a finite set of positive and negative examples (or, equivalently, by a finite set of universal examples) if and only if the schema mapping in question is logically equivalent to one that is specified using c-acyclic GAV constraints. 

It was shown in~\cite{AlexeCKT2011} that
every LAV schema mapping is uniquely characterized by a finite set of universal examples, and that there are LAV schema mappings that are not uniquely characterized by any finite set of positive and negative examples. In this section, we will consider the question
\emph{which} LAV schema mappings are uniquely characterizable by a finite set of 
positive and negative examples, and how to construct such a set of examples efficiently. 

We will also consider the exact learnability of LAV schema mappings with membership queries. Exact learnability of GAV schema mappings was studied in~\cite{CateDK13:learning}, where it was shown that GAV schema mappings are learnable with membership and equivalence queries (and, subsequently, also in a variant of the PAC model) but is not exactly learnable with membership queries alone or with equivalence queries alone. The exact learning algorithm for GAV schema mappings from~\cite{CateDK13:learning} was further put to use and validated experimentally in~\cite{Kun18active}.
Here, we consider exact learnability of LAV schema mappings
with membership queries.

\begin{definition}
A \emph{LAV (``Local-As-View'') schema mapping} is a triple $M=(S,T,\Sigma)$ where
$S$ and $T$ are disjoint schemas (the ``source schema'' 
and ``target schema''), and $\Sigma$ is a finite set of 
\emph{LAV constraints}, that is,  first-order sentences of the form $\forall\textbf{x}(\alpha(\textbf{x})\to\exists \textbf{y}\phi(\textbf{x},\textbf{y}))$,
where $\alpha(\textbf{x})$ is an atomic formula using a relation from $S$,
and $\phi(\textbf{x},\textbf{y})$ is a conjunction of atomic formulas using relations from $T$. 
\end{definition}

By a \emph{schema-mapping example} we will mean a pair $(I,J)$ where $I$ is a
structure over schema $\mathcal{S}$ without distinguished elements, and $J$ is a
structure over schema $\mathcal{T}$ without distinguished elements. We say
that $(I,J)$ is a \emph{positive example} for a schema mapping $M=(\mathcal{S},\mathcal{T},\Sigma)$ if $(I,J)$, viewed as a single structure over
the joint schema $\mathcal{S}\cup\mathcal{T}$, satisfies all constraints in $\Sigma$,
and we call $(I,J)$ a \emph{negative example} for $M$ otherwise.
Note that schema-mapping examples were called \emph{data examples} in~\cite{AlexeCKT2011}.
Unique characterizations and learnability with membership queries are defined as before. 
In particular, by a \emph{membership query}, in the context of learning LAV schema mappings, we will mean
an oracle query that consists of a schema-mapping example, which the oracle then labels as
positive or negative depending on whether it satisfies the constraints of the goal LAV schema mapping. It is assumed here, that the source and target schemas are fixed and known to the learner.

Given a fixed source schema $\mathcal{S}$, there are only finitely many different possible
left-hand sides $\alpha$ for a LAV constraint, up to renaming of variables. 
Furthermore, if a schema mapping contains two LAV constraints with the same
left-hand side, then they can be combined into a single LAV constraint by 
conjoining
the respective right-hand sides. Since the
right-hand side of a LAV constraint can be thought of as a CQ,
 this means that, intuitively, 
a LAV schema mapping can be thought of as a finite collection of CQs
(one for each possible left-hand side). In the light of this observation, it is no
surprise that questions about the unique characterizability and learnability of 
LAV schema mappings can be reduced to questions about the unique characterizability 
and learnability of CQs. 


\newcommand{\ATOMS}{\textrm{ATOMS}}

Let us capture this observation a little 
more precisely. Let $k$ be the maximum arity of a relation in $\mathcal{S}$, and
let $\ATOMS_{\mathcal{S}}$ be the finite set of all atomic formulas using a relation from $\mathcal{S}$ and variables from $\{z_1, \ldots, z_k\}$. 
Given a LAV schema mapping $M=(\mathcal{S},\mathcal{T},\Sigma)$ and an $\alpha(\textbf{z})\in \ATOMS_{\mathcal{S}}$, 
we denote by $q_{M,\alpha}(\textbf{z})$ the following first-order formula over schema $\mathcal{T}$:
\[\mathop{\bigwedge_{\forall\textbf{x}(\beta(\textbf{x})\to\exists \textbf{y}\phi(\textbf{x},\textbf{y}))\in \Sigma}}_{
\text{$h:\{\textbf{x}\}\to\{\textbf{z}\}$ a function s.t.~$\beta(h(\textbf{x})) = \alpha(\textbf{z})$}} 
\exists \textbf{y}\phi(h(\textbf{x}),\textbf{y})\]
For example, if $M$ consists of the LAV constraints 
$\forall x_1, x_2, x_3 . R(x_1,x_2,x_3)\to S(x_1,x_2,x_3)$ and 
$\forall x_1, x_2 . R(x_1,x_2,x_2)\to \exists y T(x_1,y)$,
and $\alpha(z_1)$ is $R(z_1, z_1, z_1)$, then
$q_{M,\alpha} = S(z_1,z_1,z_1)\land \exists y T(z_1,y)$.
Similarly, for $\alpha'(z_1, z_2, z_3) = R(z_1, z_2, z_3)$ then
$q_{M,\alpha'} = S(z_1,z_2,z_3)$.
Note that $q_{M,\alpha}(\textbf{z})$ can  be equivalently written as a not-necessarily-safe CQ over $\mathcal{T}$
(by pulling the existential quantifies to the front).

\begin{lemma}\label{lem:lav-lemma1}
Let $M=(\mathcal{S},\mathcal{T},\Sigma)$ be any LAV schema mapping, and let
$\alpha(\textbf{z})\in \ATOMS_\mathcal{S}$ have $k$ distinct variables.
For every structure $(A,\textbf{a})$, over schema $\mathcal{T}$ and with $k$ distinguished elements, the following are equivalent:
\begin{enumerate}
    \item  $(A,\textbf{a})$ is a positive data example for $q_{M,\alpha}(\textbf{z})$,
    \item The schema-mapping example
$(I,J)$ is a positive  example for $M$, where $I$ is the structure over $\mathcal{S}$ consisting of the single fact $\alpha(\textbf{a})$, and $J = A$.
\end{enumerate}
\end{lemma}

We omit the proof, which is straightforward (note that the left-hand side of a LAV constraint can have at most one homomorphism to $I$, and the latter can be extended to the right-hand side of the constraint to $J$ iff
the respective conjunct of $q_{M,\alpha}$ is satisfied. Also note that if $(I,J)$ is a 
positive example for a LAV schema mapping $M$ then, so is $(I,J')$ for $J\subseteq J'$).

Intuitively, Lemma~\ref{lem:lav-lemma1} shows that the behavior of $q_{M,\alpha}$ on arbitrary data examples,
is fully determined by the behavior of $M$ on arbitrary schema-mapping examples.
The converse turns out to be true as well, that is, 
the semantics of a LAV schema mapping $M=(\mathcal{S},\mathcal{T},\Sigma)$ is  determined (up to logical equivalence) by its associated queries $q_{M,\alpha}$ for $\alpha\in \ATOMS_{\mathcal{S}}$:

\begin{lemma}\label{lem:lav-lemma2}
Two LAV schema mappings $M_1 = (\mathcal{S},\mathcal{T},\Sigma_1)$, 
$M_2 = (\mathcal{S},\mathcal{T},\Sigma_1)$ are logically equivalent
iff, for every $\alpha(\textbf{z})\in \ATOMS_{\mathcal{S}}$, 
$q_{M_1,\alpha}(\textbf{z})$ and $q_{M_2,\alpha}(\textbf{z})$ are logically equivalent.
\end{lemma}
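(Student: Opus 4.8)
The right-to-left direction is immediate: if $M_1$ and $M_2$ are logically equivalent, then they have the same positive and negative schema-mapping examples, so by Lemma~\ref{lem:lav-lemma1} (applied in both directions) $q_{M_1,\alpha}$ and $q_{M_2,\alpha}$ have the same positive data examples for every $\alpha\in\ATOMS_{\mathcal{S}}$, hence are logically equivalent. So the work is all in the left-to-right direction. The plan is to assume $q_{M_1,\alpha}$ and $q_{M_2,\alpha}$ are logically equivalent for every $\alpha\in\ATOMS_{\mathcal{S}}$, and to show that every schema-mapping example $(I,J)$ that is positive for $M_1$ is also positive for $M_2$ (and vice versa, by symmetry).

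\textbf{Key steps.} First, observe that $(I,J)$ is a positive example for a LAV schema mapping $M=(\mathcal{S},\mathcal{T},\Sigma)$ if and only if \emph{for every fact $R(\textbf{b})$ of $I$}, the constraints of $\Sigma$ that are triggered by that fact are satisfied in $J$. This reduces satisfaction of $M$ on $(I,J)$ to a ``fact-by-fact'' condition, because the left-hand side of a LAV constraint is a single atom, so each triggering instance of a constraint uses exactly one fact of $I$. Second, for a fixed fact $R(\textbf{b})$ of $I$, pick the atom $\alpha(\textbf{z})\in\ATOMS_{\mathcal{S}}$ of the form $R(\textbf{z}')$ where $\textbf{z}'$ records the equality pattern of $\textbf{b}$ (i.e.~$\alpha(\textbf{z})$ has $k$ distinct variables, and there is a substitution sending $\alpha$ to $R(\textbf{b})$ that matches the equalities among the entries of $\textbf{b}$); then checking that $M$'s constraints triggered by $R(\textbf{b})$ are satisfied in $J$ is, by the construction of $q_{M,\alpha}$, exactly the statement that $(J,\textbf{b})$ (after identifying repeated entries appropriately, as in Lemma~\ref{lem:lav-lemma1}) is a positive data example for $q_{M,\alpha}$. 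Third, since $q_{M_1,\alpha}$ and $q_{M_2,\alpha}$ are logically equivalent, they agree on this data example; doing this for every fact of $I$ shows $(I,J)$ is positive for $M_1$ iff it is positive for $M_2$.

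\textbf{Main obstacle.} The delicate point is the bookkeeping around repeated variables/entries: $\ATOMS_{\mathcal{S}}$ ranges over atoms with up to $k$ variables but a given fact $R(\textbf{b})$ may have a nontrivial equality type among its arguments, and one must use the atom $\alpha$ whose equality type \emph{matches} that of $R(\textbf{b})$, so that $q_{M,\alpha}$ indeed captures exactly the constraints triggered by $R(\textbf{b})$ (and so that Lemma~\ref{lem:lav-lemma1} applies with the correct number $k'$ of distinct distinguished elements). One also has to check the (easy but necessary) fact that for LAV constraints, distinct triggering facts impose independent conditions, so that ``positive for $M$'' decomposes as a conjunction over the facts of $I$ — this is where the single-atom shape of LAV left-hand sides is essential. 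Once this decomposition and the equality-type matching are set up correctly, the equivalence follows directly from Lemma~\ref{lem:lav-lemma1} and the hypothesis, with no further combinatorial work.
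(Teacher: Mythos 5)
Your proof is correct and follows essentially the same route as the paper: both hinge on the single-atom left-hand side of LAV constraints, which lets satisfaction of $(I,J)$ be reduced to single-fact source instances, and on Lemma~\ref{lem:lav-lemma1} to translate each such instance into a data example for $q_{M,\alpha}$ with the matching equality type. The only differences are cosmetic --- you argue the harder direction directly as a conjunction over the facts of $I$, whereas the paper argues by contraposition from one disagreeing fact (and note that your labels ``left-to-right'' and ``right-to-left'' are swapped relative to the statement of the lemma).
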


\begin{proof}
The left-to-right direction follows immediately from the preceding Lemma.
For the right-to-left direction: suppose $M_1$ and $M_2$ are not logically equivalent.
Then they disagree on some schema-mapping example $(I,J)$. 
Without loss of generality, we may assume that $(I,J)$ is a positive example for 
$M_1$ and a negative example for $M_2$. In particular, one of the LAV constraints in
$\Sigma_2$ is false in $(I,J)$. Since the left-hand side of a LAV constraint consists
of a single atom, it follows that, for some fact $R(\textbf{a})$ of $I$, 
the schema-mapping example $(\{R(\textbf{a})\}, J)$ is a negative example for $M_2$. 
Moreover, an easy monotonicity argument shows that 
$(\{R(\textbf{a})\}, J)$ is a positive example for $M_1$. 
Let $\alpha$ be obtained from the fact $R(\textbf{a})$ by replacing each distinct 
element $a_i$ by a corresponding variable $z_i$. It follows from 
Lemma~\ref{lem:lav-lemma1} that $q_{M_1,\alpha}$ and $q_{M_2,\alpha}$ 
disagree on the structure $(J,\textbf{a})$, and are not logically equivalent.
\end{proof}

It follows directly from the above Lemmas that the unique 
characterizability of a LAV schema 
mapping $M$ reduces to the unique characterizability of each query 
$q_{M,\alpha}$:

\begin{lemma}\label{lem:lav-lemma3}
 For all LAV schema mappings $M=(\mathcal{S},\mathcal{T},\Sigma)$,
 the following are equivalent:
 \begin{enumerate}
 \item $M$ is uniquely characterizable by finitely many positive and negative schema-mapping examples (w.r.t.~the class of all LAV schema mappings over $\mathcal{S},\mathcal{T}$).
 \item For each $\alpha(z_1, \ldots, z_k)\in \ATOMS_{\mathcal{S}}$, 
 $q_{M,\alpha}(z_1, \ldots, z_k)$ is uniquely characterizable by finitely many
 positive and negative data examples w.r.t. the class of all $k$-ary not-necessarily-safe CQs over $\mathcal{T}$.
 \end{enumerate}
\end{lemma}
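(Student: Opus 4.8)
The plan is to prove Lemma~\ref{lem:lav-lemma3} by combining Lemma~\ref{lem:lav-lemma2} with Lemma~\ref{lem:lav-lemma1}, translating a uniquely characterizing set of schema-mapping examples into uniquely characterizing sets of data examples for each $q_{M,\alpha}$ and vice versa. The two directions are handled separately, and in each direction the main idea is that Lemma~\ref{lem:lav-lemma1} lets us move back and forth between schema-mapping examples of the restricted form $(\{\alpha(\mathbf{a})\},A)$ and data examples $(A,\mathbf{a})$ for $q_{M,\alpha}$, while Lemma~\ref{lem:lav-lemma2} guarantees that logical equivalence of LAV schema mappings is exactly captured, component by component, by logical equivalence of the associated queries.

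For the direction from (1) to (2), I would start from a finite set $(E^+,E^-)$ of schema-mapping examples that uniquely characterizes $M$ w.r.t.\ all LAV schema mappings over $\mathcal{S},\mathcal{T}$. Fix $\alpha(z_1,\dots,z_k)\in\ATOMS_{\mathcal{S}}$ with $k$ distinct variables. Using the monotonicity argument already used in the proof of Lemma~\ref{lem:lav-lemma2} (the left-hand side of a LAV constraint is a single atom, so any violating example has a sub-example whose source part is a single fact), I would first argue that we may assume every example in $E^+\cup E^-$ has a source structure consisting of a single fact; more precisely, I replace each example by an equivalent finite collection of such single-fact examples without changing which LAV schema mappings fit. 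Then, for the fixed $\alpha$, collect those single-fact examples whose source fact, after renaming distinct elements to variables, yields exactly $\alpha$; by Lemma~\ref{lem:lav-lemma1}, each such example $(\{\alpha(\mathbf{a})\},J)$ corresponds to a data example $(J,\mathbf{a})$ for $q_{M,\alpha}$, with the same positive/negative label. Call the resulting data-example sets $E^+_\alpha,E^-_\alpha$. To see that $(E^+_\alpha,E^-_\alpha)$ uniquely characterizes $q_{M,\alpha}$: $q_{M,\alpha}$ fits it by Lemma~\ref{lem:lav-lemma1}; and if some not-necessarily-safe $k$-ary CQ $q'$ over $\mathcal{T}$ also fits it, then (since $q'$ can be written as $q_{M',\alpha}$ for a suitable LAV schema mapping $M'$ — one simply takes the single constraint $\forall\mathbf{z}(\alpha(\mathbf{z})\to q'(\mathbf{z}))$, extended by the constraints of $M$ for all other left-hand sides) the schema mapping $M'$ fits $(E^+,E^-)$ and hence is logically equivalent to $M$, so by Lemma~\ref{lem:lav-lemma2} $q' = q_{M',\alpha}$ is logically equivalent to $q_{M,\alpha}$.

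For the direction from (2) to (1), suppose that for each $\alpha(z_1,\dots,z_k)\in\ATOMS_{\mathcal{S}}$ we have a finite uniquely characterizing set $(E^+_\alpha,E^-_\alpha)$ of data examples for $q_{M,\alpha}$ w.r.t.\ all $k$-ary not-necessarily-safe CQs over $\mathcal{T}$. Using Lemma~\ref{lem:lav-lemma1} in the other direction, turn each data example $(A,\mathbf{a})\in E^+_\alpha$ into the schema-mapping example $(\{\alpha(\mathbf{a})\},A)$, with the same label, and similarly for $E^-_\alpha$; take the union over all $\alpha\in\ATOMS_{\mathcal{S}}$ (a finite union, since $\ATOMS_{\mathcal{S}}$ is finite) to get finite sets $E^+,E^-$ of schema-mapping examples. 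By Lemma~\ref{lem:lav-lemma1}, $M$ fits $(E^+,E^-)$. If $M'$ is any LAV schema mapping over $\mathcal{S},\mathcal{T}$ that fits $(E^+,E^-)$, then for each $\alpha$ the schema-mapping examples coming from $E^+_\alpha,E^-_\alpha$ force, again via Lemma~\ref{lem:lav-lemma1}, that $q_{M',\alpha}$ fits $(E^+_\alpha,E^-_\alpha)$; by unique characterization, $q_{M',\alpha}$ is logically equivalent to $q_{M,\alpha}$ for every $\alpha$, and so by Lemma~\ref{lem:lav-lemma2}, $M'$ is logically equivalent to $M$. This completes the equivalence. The main obstacle I anticipate is the reduction to single-fact source structures in the (1)$\Rightarrow$(2) direction: one has to check carefully that replacing a schema-mapping example $(I,J)$ by the family $\{(\{R(\mathbf{a})\},J) : R(\mathbf{a})\in I\}$ preserves the class of fitting LAV schema mappings (the ``only if'' part of this uses that a LAV constraint fails at $(I,J)$ iff it fails at some single-fact restriction, which is exactly the monotonicity observation from Lemma~\ref{lem:lav-lemma2}'s proof), and that one does not need unsafe subtleties here. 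Everything else is bookkeeping over the finite index set $\ATOMS_{\mathcal{S}}$.
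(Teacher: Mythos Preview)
Your (2)$\Rightarrow$(1) direction is clean and is exactly what the paper has in mind (the paper merely says the lemma ``follows directly'' from Lemmas~\ref{lem:lav-lemma1} and~\ref{lem:lav-lemma2}, giving no further detail).

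In (1)$\Rightarrow$(2), the reduction to single-fact source instances is fine: fitting the refined collection implies fitting the original one, and $M$ fits both, so the refined collection still uniquely characterizes $M$. The gap is in your construction of $M'$ and the assertion that $q'=q_{M',\alpha}$. Two things go wrong. First, any constraint of $M$ whose left-hand atom $\beta$ is strictly \emph{more general} than $\alpha$ (for instance $\beta=R(x_1,x_2)$ when $\alpha=R(z_1,z_1)$) still contributes a conjunct to $q_{M',\alpha}$ via the definition of $q_{M',\alpha}$, so in general $q_{M',\alpha}$ is $q'$ conjoined with further material, not $q'$ itself. Second---and this is the more serious issue---your new constraint $\alpha\to q'$ also contributes to $q_{M',\gamma}$ for every $\gamma$ that is a proper \emph{specialization} of $\alpha$ (obtained by identifying variables). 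Since your sets $E^\pm_\alpha$ contain only those single-fact examples whose source fact has equality type \emph{exactly} $\alpha$, you have no information about how $q'$, and hence $q'|_\gamma$, behaves on the $\gamma$-typed examples; consequently $M'$ may fail to fit $(\tilde E^+,\tilde E^-)$, and you cannot invoke the unique-characterization hypothesis for $M$. Even if $M'$ did fit, you would only learn $q_{M',\alpha}\equiv q_{M,\alpha}$, which gives $q_{M,\alpha}\models q'$ but not the converse.

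A repair along your lines is possible but needs more care: for each $\alpha$, enlarge $E^\pm_\alpha$ to include the data examples coming (via Lemma~\ref{lem:lav-lemma1}) from \emph{all} single-fact examples whose source fact is an instance of $\alpha$, i.e.\ also those whose equality type is a specialization of $\alpha$ (these yield data examples with repeated distinguished elements). With this enlargement one can push through an argument using two auxiliary schema mappings to obtain both entailment directions between $q'$ and $q_{M,\alpha}$. The step is genuinely more delicate than ``bookkeeping over $\ATOMS_{\mathcal S}$''.
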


Intuitively, this shows that a LAV schema mapping is uniquely characterizable iff each of its constraints (joined together according to their left-hand side atom) are.
By combining these lemmas with Theorem~\ref{thm:characterizations-main} (cf.~Remark~\ref{rem:unsafe}),
we can link the 
unique characterizability of a LAV schema mapping to the condition of c-acyclicity.
We say that a LAV schema mapping $M$ is c-acyclic if the right-hand side of each
of its LAV constraints is a c-acyclic not-necessarily-safe CQ. Note that, 
in this case, also $q_{M,\alpha}$ is c-acyclic, for each $\alpha\in \ATOMS_{\mathcal{S}}$.

\begin{theorem}\label{thm:lav-characterizations}
Fix a source schema $\mathcal{S}$ and a target schema $\mathcal{T}$.
A LAV schema mapping $M=(\mathcal{S},\mathcal{T},\Sigma)$ is uniquely characterizable by a finite set of positive and negative schema-mapping examples if and only if $M$ is logically equivalent to a c-acyclic LAV schema mapping. Moreover, if $M$ is c-acyclic, then a uniquely characterizing set of positive and negative schema-mapping examples can be constructed in polynomial time (for fixed $\mathcal{S}, \mathcal{T}$).
\end{theorem}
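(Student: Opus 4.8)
The plan is to push everything through Lemmas~\ref{lem:lav-lemma1}--\ref{lem:lav-lemma3}, reducing the statement to the corresponding facts about (not-necessarily-safe) CQs, i.e.\ to Theorem~\ref{thm:characterizations-main}(2) together with Remark~\ref{rem:unsafe}. Write $\ATOMS_{\mathcal{S}}=\{\alpha_1,\dots,\alpha_m\}$; since $\mathcal{S}$ is fixed, $m$ is a constant, and for each $\alpha\in\ATOMS_{\mathcal{S}}$ let $k_\alpha$ be its number of distinct variables and $\textbf{z}_\alpha$ the list of those variables.

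For the direction ``only if'', suppose $M$ is uniquely characterizable (w.r.t.\ the class of all LAV schema mappings over $\mathcal{S},\mathcal{T}$). By Lemma~\ref{lem:lav-lemma3} each $q_{M,\alpha}$ is uniquely characterizable w.r.t.\ the class of all $k_\alpha$-ary not-necessarily-safe CQs over $\mathcal{T}$, so by Theorem~\ref{thm:characterizations-main}(2) and Remark~\ref{rem:unsafe} each $q_{M,\alpha}$ is logically equivalent to some c-acyclic CQ $q'_\alpha$. Define the LAV schema mapping $M'=(\mathcal{S},\mathcal{T},\Sigma')$ with $\Sigma'=\{\,\forall\textbf{z}_\alpha(\alpha(\textbf{z}_\alpha)\to q'_\alpha(\textbf{z}_\alpha))\mid\alpha\in\ATOMS_{\mathcal{S}}\,\}$, which is c-acyclic by construction. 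It remains to verify $M'\equiv M$, for which, by Lemma~\ref{lem:lav-lemma2}, it suffices to show $q_{M',\alpha}\equiv q_{M,\alpha}$ for every $\alpha$. Unfolding the definition, $q_{M',\alpha}$ is the conjunction of $q'_\gamma(h(\textbf{z}_\gamma))$ over all $\gamma\in\ATOMS_{\mathcal{S}}$ and maps $h\colon\{\textbf{z}_\gamma\}\to\{\textbf{z}_\alpha\}$ with $\gamma(h(\textbf{z}_\gamma))=\alpha(\textbf{z}_\alpha)$; replacing each $q'_\gamma$ by the logically equivalent $q_{M,\gamma}$, this is equivalent to $\bigwedge_{\gamma,h:\gamma(h)=\alpha} q_{M,\gamma}(h(\textbf{z}_\gamma))$. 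Taking $\gamma=\alpha$ and $h=\mathrm{id}$ shows this conjunction entails $q_{M,\alpha}$; conversely, every conjunct of any $q_{M,\gamma}(h(\textbf{z}_\gamma))$ with $\gamma(h)=\alpha$ has the form $\exists\textbf{y}\,\phi((h\circ g)(\textbf{x}),\textbf{y})$ for some LAV constraint $\forall\textbf{x}(\beta(\textbf{x})\to\exists\textbf{y}\,\phi(\textbf{x},\textbf{y}))$ of $M$ and some $g$ with $\beta(g)=\gamma$, and then $\beta(h\circ g)=\alpha$, so this very conjunct already occurs in $q_{M,\alpha}$. Hence $q_{M',\alpha}\equiv q_{M,\alpha}$, and $M'$ is a c-acyclic LAV schema mapping logically equivalent to $M$.

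For the direction ``if'', suppose $M\equiv M'$ with $M'$ c-acyclic. By Lemma~\ref{lem:lav-lemma2} we have $q_{M,\alpha}\equiv q_{M',\alpha}$ for each $\alpha$, and $q_{M',\alpha}$ is c-acyclic, being a conjunction of the c-acyclic right-hand sides of $M'$, whose canonical structures share only distinguished elements so that no incidence-graph cycle avoiding distinguished elements is created. Thus each $q_{M,\alpha}$ is logically equivalent to a c-acyclic CQ, hence uniquely characterizable w.r.t.\ $k_\alpha$-ary not-necessarily-safe CQs over $\mathcal{T}$ by Theorem~\ref{thm:characterizations-main}(2) and Remark~\ref{rem:unsafe}, and then $M$ is uniquely characterizable by Lemma~\ref{lem:lav-lemma3}. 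For the ``moreover'' part, assume $M$ is c-acyclic. Each $q_{M,\alpha}$ is computable in polynomial time (for fixed $\mathcal{S}$ only boundedly many homomorphisms $h$ contribute) and is c-acyclic, so Theorem~\ref{thm:characterizations-main}(2) with Remark~\ref{rem:unsafe} yields, in polynomial time, a uniquely characterizing set of data examples $(E^+_\alpha,E^-_\alpha)$ for $q_{M,\alpha}$. Turning each data example $(A,\textbf{a})$ into the schema-mapping example $(\{\alpha(\textbf{a})\},A)$ and taking $\mathcal{E}^{\pm}=\bigcup_\alpha\{(\{\alpha(\textbf{a})\},A)\mid(A,\textbf{a})\in E^{\pm}_\alpha\}$ is then a polynomial-time construction, as $m$ is constant. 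By Lemma~\ref{lem:lav-lemma1}, $M$ fits $(\mathcal{E}^+,\mathcal{E}^-)$; and if a LAV schema mapping $M''$ fits $(\mathcal{E}^+,\mathcal{E}^-)$, then again by Lemma~\ref{lem:lav-lemma1} each $q_{M'',\alpha}$ fits $(E^+_\alpha,E^-_\alpha)$, hence $q_{M'',\alpha}\equiv q_{M,\alpha}$ for all $\alpha$, hence $M''\equiv M$ by Lemma~\ref{lem:lav-lemma2}, so $(\mathcal{E}^+,\mathcal{E}^-)$ uniquely characterizes $M$.

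I expect the only genuinely non-routine step to be the verification in the ``only if'' direction that reassembling the c-acyclic equivalents $q'_\alpha$ into $M'$ preserves logical equivalence with $M$, i.e.\ the observation that $q_{M,\alpha}$ already absorbs all the specialized copies $q_{M,\gamma}(h(\cdot))$ of right-hand sides coming from more general source atoms $\gamma$, so that $q_{M',\alpha}\equiv q_{M,\alpha}$; everything else is bookkeeping layered on top of Lemmas~\ref{lem:lav-lemma1}--\ref{lem:lav-lemma3} and Theorem~\ref{thm:characterizations-main}(2).
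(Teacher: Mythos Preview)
Your proposal is correct and follows essentially the same route as the paper: both directions and the ``moreover'' clause are reduced via Lemmas~\ref{lem:lav-lemma1}--\ref{lem:lav-lemma3} to Theorem~\ref{thm:characterizations-main}(2) (with Remark~\ref{rem:unsafe}), with $M'$ built from one constraint $\forall\textbf{z}_\alpha(\alpha\to q'_\alpha)$ per $\alpha\in\ATOMS_{\mathcal{S}}$. The paper simply asserts that $M'$ is logically equivalent to $M$; your explicit verification that $q_{M',\alpha}\equiv q_{M,\alpha}$ (by observing that every conjunct $q_{M,\gamma}(h(\cdot))$ with $\gamma(h)=\alpha$ is already absorbed into $q_{M,\alpha}$ via the composite substitution $h\circ g$) fills in a step the paper glosses over, and your explicit construction of the schema-mapping examples in the ``moreover'' part likewise makes precise what the paper leaves implicit.
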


\begin{proof}
The direction going from c-acyclicity to the uniquely characterizing set of schema-mapping examples, follows immediately from the above lemmas together with Theorem~\ref{thm:characterizations-main}. For the other direction, assume 
that $M$ is uniquely characterizable by finitely many positive and negative schema-mapping
examples. It follows by Lemma~\ref{lem:lav-lemma3} that each $q_{M,\alpha}$ is uniquely characterizable by finitely many positive and negative data examples. Hence, 
each $q_{M,\alpha}$ is logically equivalent to a c-acyclic not-necessarily-safe conjunctive query $q'_{M,\alpha}$. Finally, let $M'=(\mathcal{S}, \mathcal{T}, \Sigma')$,
where $\Sigma'$ consists of all LAV constraints of the form 
$\forall \textbf{z}(q_{M,\alpha}(\textbf{z}) \to \alpha(\textbf{z}))$
for $\alpha(\textbf{z})\in \ATOMS_\mathcal{S}$. Then $M'$ is c-acyclic and logically 
equivalent to $M$.
\end{proof}

Similarly, Lemma~\ref{lem:lav-lemma1} and Lemma~\ref{lem:lav-lemma2}, together with Theorem~\ref{th:membership}, directly imply:

\begin{theorem}
Fix a source schema $\mathcal{S}$ and a target schema $\mathcal{T}$.
The class of c-acyclic LAV schema mappings over $\mathcal{S}, \mathcal{T}$ is efficiently exactly learnable with membership queries. 
\end{theorem}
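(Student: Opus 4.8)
The plan is to reduce the problem to the efficient exact learnability of (not-necessarily-safe) c-acyclic CQs established in Theorem~\ref{th:membership}, exploiting the correspondence between a LAV schema mapping $M$ and its associated queries $q_{M,\alpha}$, $\alpha\in\ATOMS_{\mathcal{S}}$, from Lemmas~\ref{lem:lav-lemma1} and~\ref{lem:lav-lemma2}. Fix $\mathcal{S},\mathcal{T}$; then $\ATOMS_{\mathcal{S}}$ is a fixed finite set. Let $M=(\mathcal{S},\mathcal{T},\Sigma)$ be the unknown c-acyclic goal. By the definition of c-acyclicity for LAV schema mappings, every $q_{M,\alpha}$ is a c-acyclic (possibly unsafe) CQ over $\mathcal{T}$ whose designated variables are the distinct variables of $\alpha$; and by Lemma~\ref{lem:lav-lemma2}, $M$ is determined up to logical equivalence by the finite tuple $(q_{M,\alpha})_{\alpha\in\ATOMS_{\mathcal{S}}}$. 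So it suffices to learn each $q_{M,\alpha}$ and to reassemble them into a LAV schema mapping.

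Concretely, I would, for each $\alpha\in\ATOMS_{\mathcal{S}}$ in turn, run the polynomial-time learner of Theorem~\ref{th:membership} for c-acyclic CQs of the appropriate arity (using Remark~\ref{rem:learning-unsafe} to cover the unsafe case), with hidden goal $q_{M,\alpha}$, answering each of its membership queries as follows: when it asks whether a data example $(A,\textbf{a})$ (over $\mathcal{T}$, with $|\textbf{a}|$ many distinguished elements) is positive, I instead ask the given LAV membership oracle whether the schema-mapping example $(I,A)$ is positive for $M$, where $I$ is the single-fact source structure $\{\alpha(\textbf{a})\}$; by Lemma~\ref{lem:lav-lemma1} this returns exactly the $q_{M,\alpha}$-label of $(A,\textbf{a})$, so the simulation is faithful and the learner outputs a CQ $q'_\alpha$ logically equivalent to $q_{M,\alpha}$. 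Since the learner uses only membership queries, the runs for different $\alpha$ do not interact and can be carried out sequentially. Having obtained all the $q'_\alpha$, I output $M'=(\mathcal{S},\mathcal{T},\Sigma')$ with $\Sigma'=\{\,\forall\textbf{z}(\alpha(\textbf{z})\to q'_\alpha(\textbf{z}))\mid \alpha\in\ATOMS_{\mathcal{S}}\,\}$, each $q'_\alpha$ written in the usual way as $\exists\textbf{y}\,\phi_\alpha(\textbf{z},\textbf{y})$; exactly as in the proof of Theorem~\ref{thm:lav-characterizations} one checks that $q_{M',\alpha}\equiv q'_\alpha\equiv q_{M,\alpha}$ for every $\alpha$, whence $M'\equiv M$ by Lemma~\ref{lem:lav-lemma2}.

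For the running time, note that $|\ATOMS_{\mathcal{S}}|$ and the number of substitutions entering the definition of each $q_{M,\alpha}$ are bounded by constants (depending only on $\mathcal{S}$), so if $M$ is logically equivalent to a LAV schema mapping $M_0$ of size $n$, then $q_{M,\alpha}\equiv q_{M_0,\alpha}$ has an equivalent CQ of size polynomial in $n$; hence each run of the Theorem~\ref{th:membership} learner takes time polynomial in $n$ (counting one step per oracle call, and wrapping each membership query in a single LAV oracle call plus polynomial post-processing), and so does the whole procedure. I expect no conceptual obstacle here — the result really is a corollary — the points requiring care being (i) the faithfulness of the oracle simulation, which is precisely Lemma~\ref{lem:lav-lemma1}; (ii) the size bookkeeping that matches ``polynomial in the size of the goal'' on the LAV side with ``polynomial in the size of the goal'' on the CQ side; and (iii) the observation that c-acyclicity of $M$ (rather than mere acyclicity) is what is needed, since $\ATOMS_{\mathcal{S}}$ contains atoms with repeated variables and hence the $q_{M,\alpha}$ may legitimately have repeated designated elements.
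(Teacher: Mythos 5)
Your proposal is correct and follows exactly the route the paper intends: the paper states this theorem as a direct consequence of Lemma~\ref{lem:lav-lemma1}, Lemma~\ref{lem:lav-lemma2}, and Theorem~\ref{th:membership} (with Remark~\ref{rem:learning-unsafe} for the unsafe case), and your write-up simply fills in the details of that reduction --- simulating each membership query for $q_{M,\alpha}$ by a single-fact source instance, learning each $q_{M,\alpha}$ separately, and reassembling. The points you flag (faithfulness of the oracle simulation, size bookkeeping over the constantly many $\alpha\in\ATOMS_{\mathcal{S}}$, and c-acyclicity rather than acyclicity) are exactly the right ones.
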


Note that the class of \emph{all} LAV schema mappings over $\mathcal{S}, \mathcal{T}$ is \emph{not} exactly learnable with membership queries (assuming that $\mathcal{S}$ is non-empty and $\mathcal{T}$ contains a relation of arity at least 2). This follows immediately from the existence of LAV schema mappings that are not uniquely
characterizable by finitely many positive and negative schema-mapping examples. 

As mentioned earlier, LAV schema mappings and GAV schema mappings are two of the
most well-studied schema mapping languages. GLAV (``Global-and-Local-As-Views'') schema mappings is another, which forms a common generalization. An important remaining open question in the area of example-driven approaches to schema mapping design is the 
following~\cite{AlexeCKT2011}: \emph{which GLAV schema mappings are uniquely characterizable by a finite set of examples}?

\subsection{Learning description logic concept expressions and ABoxes}

Description logics are formal specification languages used to 
represent domain knowledge. Example-driven and machine-learning based approaches
have a long history in this area, and have received renewed interest in the last years~\cite{Ozaki2020:learning}, in particular, for
ontologies specified in the lightweight description logic $\mathcal{ELI}$, and focusing
on the exact learnability
of ontologies using entailment queries and equivalence queries.
As we show in this section, our results on c-acyclic CQs have some implications
for the exact learnability of $\mathcal{ELI}$ 
concept expressions.



\begin{definition}[$\mathcal{ELI}$ Concept expressions, ABoxes, TBoxes]
Let $N_C, N_R, N_I$ be fixed, disjoint sets,
whose members we will refer to as ``concept names'', ``role names'', and ''individual names'', respectively. The sets $N_C$ and $N_R$ are assumed to be finite,
while $N_I$ is assumed to be infinite.

A \emph{concept expression} $C$ is an expression built up
from from concept names in $N_C$ and $\top$, using
conjunction ($C_1\sqcap C_2$) and existential restriction ($\exists r.C$ or $\exists r^-.C$, where $r\in N_R$). 

An \emph{ABox} is a finite set of \emph{ABox axioms} of the form $P(a)$ and/or $r(a, b)$, where $P\in N_C$, $r\in N_R$, and $a,b\in N_I$. 

A \emph{TBox} is a finite set of \emph{TBox axioms} $C\sqsubseteq D$, where $C, D$ are concept expressions.
\end{definition}

The semantics of these expressions can be explained by translation to first-order logic:

\begin{definition}
The \emph{correspondence schema} is the schema that contains a
unary relation for every $A\in N_C$ and a binary relation for every
$r\in N_R$. Through the standard translation from description logic to first-order logic (cf.~Table~\ref{tab:standard-translation}), every concept expression $C$ translates to a first-order formula $q_C(x)$ over the correspondence schema. By extension, every TBox $\mathcal{T}$ translates to a finite first-order theory $\mathcal{T}^\text{fo}$, where $C_1\sqsubseteq C_2$ translates to 
$\forall x(q_{C_1}(x)\to q_{C_2}(x))$.
\end{definition}

\begin{table}
    \[
    \begin{array}{rcl}
      q_{P}(x) &=& P(x) ~ \text{ for $P\in N_C$} \\
      q_\top(x) &=& \top \\
      q_{C_1\sqcap C_2}(x) &=& q_{C_1}(x) \land q_{C_2}(x) \\
      q_{\exists r.C}(x) &=& \exists y (r(x,y)\land q_{C}(y)) \\
      q_{\exists r^-.C}(x) &=& \exists y (r(y,x)\land q_{C}(y)) \\
    \end{array}
    \]
    \caption{Standard translation from concept expressions to first-order logic}
    \label{tab:standard-translation}
\end{table}

An ABox can equivalently be viewed as a finite structure (without distinguished elements),
whose domain consists of individual names from $N_I$, and whose facts are the ABox assertions. 
Since $N_I$ is assumed to be infinite, every finite structure over the 
correspondence schema can (up to isomorphism) be represented as an ABox. Therefore, in what follows we will use ABoxes and structures interchangeably. 

We can think of an ABox as a (possibly incomplete) list of facts, and a TBox as domain knowledge in the form of rules for deriving more facts. This idea underlies the next definition:

\begin{definition}
A \emph{QA-example} is a pair $(\mathcal{A},a)$ where $\mathcal{A}$ is an 
ABox and $a\in N_I$. We say that $(\mathcal{A},a)$ is a \emph{positive} QA-example for 
a concept expression $C$ relative to a TBox $\mathcal{T}$ 
if $a\in\text{certain}(C,\mathcal{A},\mathcal{T})$ where
 $\text{certain}(C,\mathcal{A},\mathcal{T}) = \bigcap\{ q_C(B) \mid \text{$\mathcal{A}\subseteq B$ and $B\models \mathcal{T}^\text{fo}$}\}$.
If $a\not\in\text{certain}(C,\mathcal{A},\mathcal{T})$, we say that $(\mathcal{A},a)$ is a negative QA-example for 
$C$ relative to $\mathcal{T}$.
\end{definition}

The name \emph{QA-example}, here, reflects the fact that the 
task of computing $\text{certain}(C,\mathcal{A},\mathcal{T})$ is commonly known as \emph{query answering}. 
It is one of the core inference tasks studied in the description logic literature. In general, there are two variants of the definition of $\text{certain}(C,\mathcal{A},\mathcal{T})$: one where $B$ ranges over finite structures, and one where $B$ ranges over all, finite or infinite, structures. The description logic $\mathcal{ELI}$ that we consider here has been shown to be \emph{finitely controllable} \cite{Barany2013querying}, meaning that both definitions are equivalent. For more expressive description logics, this is in general not the case.

\begin{table}
\[\begin{array}{l@{~~~~~}l@{~~~}l}
& \text{Example} & \text{First-order logic translation} \\[1mm]
\text{ABox:} & \mathcal{A} = \{ P(a), r(a,b) \} \\
\text{TBox:} & \mathcal{T} = \{ P \sqsubseteq Q\sqcap \exists r.P\} 
  & \mathcal{T}^\text{fo} = \{\forall x(P(x)\to Q(x)\land \exists y(r(x,y)\land P(y))\} \\
\text{Concept expr:} &  C = \exists r.Q 
  & q_{C}(x) = \exists y(r(x,y)\land Q(y))
\end{array}\]
\caption{Example description logic ABox, TBox and concept expression}
\label{tab:example-dl}
\end{table}

\begin{example}
Consider the ABox, TBox, and concept expression in Table~\ref{tab:example-dl}.
Every model of $\mathcal{T}^\text{fo}$ containing the facts in $\mathcal{A}$ must contain also 
$r(a,c)$ and $Q(c)$ for some $c\in N_I$. It follows that 
$a\in \text{certain}(C,\mathcal{A},\mathcal{T})$. In other words,
 $(\mathcal{A},a)$ is a positive QA-example for $C$ relative to $\mathcal{T}$.
 On the other hand, $(\mathcal{A},b)$ is a negative QA-example for $C$ relative to $\mathcal{T}$.
\end{example}

See~\cite{Baader2017:introduction} for more details on description logic syntax and semantics.
We now explain how our results from Section~\ref{sec:characterizations} and~\ref{sec:learning} can be applied here.
Although a QA-example is just a data example with one distinguished element,
over the correspondence schema, the definition of \emph{positive}/\emph{negative} QA-examples diverges from the definition of positive/negative data examples, because of the TBox $\mathcal{T}$.
For the special case where $\mathcal{T}=\emptyset$, the two
coincide:

\begin{lemma}
\label{lem:qa-examples}
Let $\mathcal{T}=\emptyset$. A QA-example $(\mathcal{A},a)$ is a positive (negative) QA-example for a concept expression $C$ relative to $\mathcal{T}$ iff $(\mathcal{A},a)$ is a positive (negative) data example for $q_C(x)$.
\end{lemma}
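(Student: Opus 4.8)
The plan is to show that when $\mathcal{T}=\emptyset$ the certain‑answer set collapses to ordinary CQ evaluation, i.e.\ $\text{certain}(C,\mathcal{A},\emptyset)=q_C(A)$, where $A$ is the structure underlying the ABox $\mathcal{A}$. Granting this identity, the lemma is immediate: by definition $(\mathcal{A},a)$ is a positive QA-example iff $a\in\text{certain}(C,\mathcal{A},\emptyset)$, and it is a positive data example for $q_C(x)$ iff $a\in q_C(A)$; the two negative notions are the respective complements, so they agree as well.

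To establish the identity, first note that with $\mathcal{T}=\emptyset$ we have $\mathcal{T}^{\text{fo}}=\emptyset$, so the side condition ``$B\models\mathcal{T}^{\text{fo}}$'' is vacuous and $\text{certain}(C,\mathcal{A},\emptyset)=\bigcap\{q_C(B)\mid \mathcal{A}\subseteq B\}$. For the inclusion $\text{certain}(C,\mathcal{A},\emptyset)\subseteq q_C(A)$, observe that $\mathcal{A}\subseteq\mathcal{A}$, so $A$ itself occurs among the structures $B$ indexing the intersection, hence the intersection is contained in $q_C(A)$.

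For the reverse inclusion I would invoke the monotonicity of conjunctive queries. The standard translation (Table~\ref{tab:standard-translation}) turns every concept expression $C$ into a (not-necessarily-safe) unary CQ $q_C(x)$, obtained by pulling all existential quantifiers to the front. Any such CQ is preserved under extensions of structures: if $A\subseteq B$ then $q_C(A)\subseteq q_C(B)$, since a homomorphism from the canonical structure $\widehat{q_C}$ into $A$ witnessing $a\in q_C(A)$ is also a homomorphism into $B$. Therefore, if $a\in q_C(A)$, then $a\in q_C(B)$ for every $B$ with $\mathcal{A}\subseteq B$, and so $a\in\text{certain}(C,\mathcal{A},\emptyset)$. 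This yields $q_C(A)\subseteq\text{certain}(C,\mathcal{A},\emptyset)$, completing the identity.

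I do not anticipate any real obstacle here; the argument is a one-line application of monotonicity plus the remark that $\mathcal{A}$ realizes the least element of the intersection. The only points worth a sentence of care are the degenerate case $C=\top$, where $q_\top(x)=\top$ is the empty conjunction satisfied by every element of every structure (so monotonicity holds trivially), and the convention about whether the structures $B$ in the definition of $\text{certain}$ range over finite or arbitrary structures — this is irrelevant since the finite structure $A$ is already among them and bounds the intersection from above.
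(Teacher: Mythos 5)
Your proposal is correct and matches the paper's argument exactly: the paper also derives the lemma from the identity $\text{certain}(C,\mathcal{A},\emptyset)=q_C(\mathcal{A})$, obtained from CQ monotonicity in one direction and from the fact that $\mathcal{A}$ itself appears among the structures indexing the intersection in the other. You have simply spelled out the details that the paper leaves implicit.
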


Lemma~\ref{lem:qa-examples} follows from the well-known monotonicity property of CQs (i.e., whenever $A\subseteq B$, then $q(A)\subseteq q(B)$), which implies that $\text{certain}(C,\mathcal{A},\emptyset) = q_{C}(\mathcal{A})$.

Concept expressions turn out to correspond precisely to unary,
acyclic, c-connected CQs:

\begin{lemma} \label{lem:dl-acyclic}
The standard translation $q_C(x)$ of every $\mathcal{ELI}$  concept expression $C$ is equivalent to a not-necessarily-safe unary CQ that is acyclic and c-connected.
Conversely, every unary, acyclic, c-connected not-necessarily-safe CQ over the correspondence schema is logically equivalent to $q_C(x)$ for some
$\mathcal{ELI}$  concept expression $C$.
\end{lemma}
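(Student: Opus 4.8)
\textbf{Proof plan for Lemma~\ref{lem:dl-acyclic}.}

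The plan is to prove the two directions by mutually inverse syntactic translations between $\mathcal{ELI}$ concept expressions and a suitable normal form for unary, acyclic, c-connected CQs, reasoning by structural induction in one direction and by induction on the tree structure of the canonical structure in the other. For the first direction, I would argue by induction on the structure of the concept expression $C$ that $q_C(x)$ is (equivalent to) a not-necessarily-safe unary CQ whose canonical structure $\widehat{q_C}$ is acyclic and c-connected, with the free variable $x$ as its unique designated element. The base cases $C = \top$ (the empty CQ) and $C = P$ (the single fact $P(x)$) are immediate; note the first is where not-necessarily-safe CQs are genuinely needed. For $C_1 \sqcap C_2$, the canonical structure of $q_{C_1}(x)\land q_{C_2}(x)$ is obtained by gluing $\widehat{q_{C_1}}$ and $\widehat{q_{C_2}}$ at the single shared element $x$; since $x$ is the lone designated element, any cycle in the incidence graph of the union not passing through $x$ would lie entirely within one of the two pieces, contradicting acyclicity of that piece, and c-connectedness is preserved since every element of each piece already reaches $x$. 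For $\exists r.C'$ (and symmetrically $\exists r^-.C'$), we attach a fresh element $y$ to $x$ via an $r$-edge and hang $\widehat{q_{C'}}$ off of $y$; acyclicity follows because the only new element shared between the new edge and $\widehat{q_{C'}}$ is $y$, and c-connectedness follows because everything reachable from $y$ is now reachable from $x$. One must be slightly careful that the variables introduced in disjoint subformulas are renamed apart, which is the standard convention.

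For the converse direction, given a unary, acyclic, c-connected, not-necessarily-safe CQ $q(x)$ over the correspondence schema, I would first observe that c-connectedness means every element of $\widehat{q}$ is reachable from the designated element $x$ in the incidence graph, and acyclicity (i.e.\ the incidence graph is a forest, and becomes a tree once we note everything is connected to $x$) means $\widehat{q}$ has the shape of a tree rooted at $x$: the facts are either unary facts $P(t)$ attached to single elements $t$, or binary facts $r(t,t')$ that (because no element occurs twice in a fact, by acyclicity, and because the incidence graph is a tree) form the edges of a tree on the element set with root $x$. Then I would define, by recursion on this rooted tree, for each element $t$ the concept expression $C_t$ to be the conjunction of: $P$ for each unary fact $P(t)$; $\exists r.C_{t'}$ for each binary fact $r(t,t')$ where $t'$ is the child of $t$; and $\exists r^-.C_{t'}$ for each binary fact $r(t',t)$ where $t'$ is a child of $t$; taking $\top$ if this conjunction is empty. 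Setting $C = C_x$, a routine unfolding of the standard translation (Table~\ref{tab:standard-translation}) shows $q_C(x)$ is logically equivalent to $q(x)$ --- in fact its canonical structure is isomorphic to $\widehat{q}$ after the first-direction analysis, so equivalence is immediate, or one can give a direct two-way homomorphism argument.

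The main obstacle, and the point deserving the most care, is pinning down precisely what acyclicity plus c-connectedness buys us about the shape of $\widehat{q}$: one must check that the incidence graph being a forest, combined with c-connectedness, forces a tree structure in which (a) no element lies in two distinct binary facts on the "same side" in a way that creates branching ambiguity --- actually branching is fine and corresponds to $\sqcap$ --- but rather that (b) there is a well-defined parent/child orientation, which requires choosing $x$ as the root and observing that each non-root element has a unique incidence-graph-neighbour fact on the path toward $x$. A subtlety is that a binary fact $r(t,t')$ contributes $\exists r.(\cdots)$ when traversed from parent $t$ to child $t'$ but $\exists r^-.(\cdots)$ when traversed from parent $t'$ to child $t$, so the orientation of the role relative to the tree orientation must be tracked; this is exactly why both $\exists r$ and $\exists r^-$ are needed and why the correspondence is with $\mathcal{ELI}$ rather than $\mathcal{EL}$. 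Once the tree structure is established, the recursion and the verification of logical equivalence via the Chandra--Merlin theorem (the canonical structures are isomorphic) are routine.
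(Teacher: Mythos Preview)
Your proposal is correct and matches the paper's own approach, which simply states that ``both directions of Lemma~\ref{lem:dl-acyclic} can be proved using a straightforward induction'' without giving further details. One small refinement: in the $\sqcap$ case your acyclicity argument only explicitly rules out cycles avoiding $x$, but since the two pieces share only the single incidence-graph node $x$, that node is a cut vertex and hence any cycle through $x$ would also lie entirely within one (acyclic) piece---so the conclusion stands.
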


 Both directions of Lemma~\ref{lem:dl-acyclic} can be proved using a straightforward induction.

The above two lemmas, together with Theorem~\ref{thm:characterizations-main}(2)
and Theorem~\ref{th:membership} (cf.~Remark~\ref{rem:unsafe} and Remark~\ref{rem:learning-unsafe}) immediately yield our main result here.
We say that a collection of positive and negative QA-examples 
\emph{uniquely characterizes} a concept expression $C$ relative to a 
TBox $\mathcal{T}$ if $C$ fits the examples (relative to $\mathcal{T}$)
and every other concept expression that does so is equivalent (relative to $\mathcal{T}$) to $C$.
By a \emph{QA-membership query} we mean an oracle query consisting of a
QA example, where the oracle answers yes or no depending on whether the
input is a positive QA example or a negative QA example for the goal concept,
relative to the TBox. It is assumed that the TBox is fixed and known 
to the learner.


\begin{theorem}\label{thm:dl-main}
Let $\mathcal{T}=\emptyset$. 
Every $\mathcal{ELI}$ concept expression is uniquely 
characterizable  by a finite collection of positive and negative QA examples (relative to $\mathcal{T}$), which can be computed in polynomial time.
Furthermore, the class of $\mathcal{ELI}$ concept expressions is efficiently exactly learnable with QA-membership queries.
\end{theorem}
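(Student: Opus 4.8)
The plan is to reduce Theorem~\ref{thm:dl-main} entirely to the corresponding results about conjunctive queries that have already been established, using the two structural lemmas that connect $\mathcal{ELI}$ concept expressions to unary, acyclic, c-connected CQs and QA-examples to data examples. Since $\mathcal{T} = \emptyset$, Lemma~\ref{lem:qa-examples} tells us that the labeling of a QA-example $(\mathcal{A},a)$ for a concept expression $C$ coincides exactly with the labeling of $(\mathcal{A},a)$ as a data example for the standard translation $q_C(x)$. So unique characterization and learnability of concept expressions relative to $\emptyset$ become literally unique characterization and learnability of the associated CQs over the correspondence schema, provided we work modulo logical equivalence of CQs (which, via Lemma~\ref{lem:dl-acyclic}, corresponds to equivalence of concept expressions).

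For the unique characterization part, I would proceed as follows. Given an $\mathcal{ELI}$ concept expression $C$, form its standard translation $q_C(x)$; by Lemma~\ref{lem:dl-acyclic} this is a unary, acyclic, c-connected, not-necessarily-safe CQ over the correspondence schema, so in particular it is c-acyclic. By Theorem~\ref{thm:characterizations-main}(2), together with Remark~\ref{rem:unsafe} (which permits dropping the safety condition), there is a uniquely characterizing set $(E^+, E^-)$ of data examples for $q_C$ w.r.t.~the class of all (not-necessarily-safe) unary CQs over the correspondence schema, and this set is constructible in polynomial time. Now I claim $(E^+, E^-)$, read as QA-examples, uniquely characterizes $C$ relative to $\emptyset$: by Lemma~\ref{lem:qa-examples}, $C$ fits these QA-examples; and if some concept expression $C'$ fits them, then $q_{C'}$ fits $(E^+,E^-)$ as data examples, hence $q_{C'}$ is logically equivalent to $q_C$ by unique characterization, hence $C'$ is equivalent to $C$ relative to $\emptyset$. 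The one subtlety to check is that every competing concept expression's translation lies in the class against which $q_C$ was characterized --- this is exactly the first half of Lemma~\ref{lem:dl-acyclic}, so it goes through.

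For the learnability part, the argument is the same kind of transfer. A QA-membership query for the goal concept $C$ relative to $\emptyset$ is, by Lemma~\ref{lem:qa-examples}, exactly a membership query for the goal CQ $q_C$. Running the efficient exact learning algorithm of Theorem~\ref{th:membership} for c-acyclic CQs (in the not-necessarily-safe variant, justified by Remark~\ref{rem:learning-unsafe}) on the correspondence schema with $k=1$, we obtain in polynomial time a c-acyclic CQ $q$ logically equivalent to $q_C$; then by the converse direction of Lemma~\ref{lem:dl-acyclic} we must first argue that the output can be taken to be the translation of a concept expression. Here one uses that $q_C$ is in particular acyclic and c-connected, that the learned query is equivalent to it, and hence (taking the core, which preserves acyclicity and c-connectedness, and applying Lemma~\ref{lem:dl-acyclic}'s converse) we can extract an $\mathcal{ELI}$ concept expression $C'$ equivalent to $C$. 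The learner outputs $C'$.

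The main obstacle, and the only place where genuine care is needed rather than bookkeeping, is ensuring that the learning algorithm of Theorem~\ref{th:membership} can be steered to produce output that is not merely c-acyclic but actually acyclic and c-connected --- i.e., in the image of the standard translation --- so that Lemma~\ref{lem:dl-acyclic}'s converse direction applies to yield an actual concept expression. One resolution is to observe that since the goal $q_C$ is acyclic and c-connected and the algorithm returns a query logically equivalent to it, one may post-process by computing the core (polynomial by Proposition~\ref{prop:cacyclic-core}, and preserving acyclicity, c-connectedness, and equivalence) and then invoking Lemma~\ref{lem:dl-acyclic}. An alternative, cleaner route is to run the specialized machinery of Theorem~\ref{thm:characterizations-main}(3) and Theorem~\ref{thm:frontier-closed}, which keeps all intermediate hypotheses within the class of c-connected acyclic structures, so that the frontier-based hypothesis-strengthening of the Theorem~\ref{th:membership} algorithm never leaves the syntactic class of concept expressions; I would mention both but carry out the former as it requires the least new work.
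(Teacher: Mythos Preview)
Your proposal is correct and follows essentially the same approach as the paper: the paper's proof is literally the one-sentence observation that Lemmas~\ref{lem:qa-examples} and~\ref{lem:dl-acyclic}, together with Theorem~\ref{thm:characterizations-main}(2) and Theorem~\ref{th:membership} (via Remarks~\ref{rem:unsafe} and~\ref{rem:learning-unsafe}), immediately yield the result. You are in fact more careful than the paper on one point the paper glosses over --- namely, how to convert the c-acyclic CQ returned by the learner of Theorem~\ref{th:membership} back into an $\mathcal{ELI}$ concept expression --- and your proposed resolution (take the core, which is polynomial by Proposition~\ref{prop:cacyclic-core} and inherits acyclicity and c-connectedness from the homomorphically equivalent goal $q_C$, then apply the converse of Lemma~\ref{lem:dl-acyclic}) is sound.
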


Moreover, by Theorem~\ref{thm:characterizations-main}(3), the uniquely characterizing examples can be constructed so that each example $(\mathcal{A},a)$ is the canonical QA-example of a concept expression.
By the \emph{canonical QA-example} of a concept expression $C$, here, we mean the QA-example that (viewed as a structure with one distinguished element) is the canonical structure of the not-necessarily-safe CQ $q_C(x)$.

Theorem~\ref{thm:dl-main} remains true when the concept language is extended with unrestricted existential quantification (of the form $\exists.C$) and a restricted form of the \textbf{I}-\textbf{me} self-reference construct introduced in~\cite{Marx2002narcissists}, namely where the $\textbf{I}$ operator can only occur once, and in the very front of the concept expression. Indeed, it can be shown that this extended concept language (by a straightforward extension of the standard translation) captures precisely the class of c-acyclic unary not-necessarily-safe CQs over the correspondence schema.

This raises the question if Theorem~\ref{thm:dl-main} holds true for arbitrary TBoxes. Since publication of the conference version of this
paper~\cite{tCD2021:conjunctive} (in which we asked the same question), some answers have been obtained.
In~\cite{Funk2021Actively}, it is shown that the answer to this question is \emph{No}
if the TBox is treated as part of the input to the learning algorithm. Indeed, it is 
shown that the problem becomes not efficiently exactly learnable with membership and equivalence
queries. On the other hand, a positive answer is given in~\cite{Funk2021Actively} for a weaker version of the question, namely for the description logic $\mathcal{EL}$, when the learning algorithm is also allowed to ask equivalence queries. In~\cite{funk2022:frontiers}, furthermore,
a positive answer is given for another variant of the above question where the TBox is specified in the description logic DL-Lite, and where the learning algorithm is allowed to ask membership 
and equivalence queries. At the heart of this learning algorithm lies an extension of our frontier construction from Section~\ref{sec:frontier-closed}, also obtained in~\cite{funk2022:frontiers}. 


\section{Acknowledgements}

This paper largely grew out of discussions at Dagstuhl Seminar 19361 (``Logic and Learning'') in Sept.~2019. Victor Dalmau was supported by MICCIN grants TIN2016-76573-C2-1P and PID2019-109137GB-C22. 
Balder ten Cate was supported by the European Union's Horizon 2020 research and innovation programme (MSCA-101031081).
We thank Carsten Lutz, Raoul Koudijs, and Phokion Kolaitis for helpful discussions and for pointing out some flaws in earlier versions of this paper.

\bibliographystyle{plainurl}
\bibliography{bib}

\end{document}